\newcommand*{\bdiv}{%
  \nonscript\mskip-\medmuskip\mkern5mu%
  \mathbin{\operator@font div}\penalty900\mkern5mu%
  \nonscript\mskip-\medmuskip
}
\renewcommand{\hat}[1]{\widehat{#1}}
\renewcommand{\tilde}[1]{\widetilde{#1}}
\renewcommand{\P}{\mathbb{P}}
\newcommand{\R}{\mathbb{R}}
\newcommand{\A}{\mathbf{A}}
\newcommand{\B}{\mathbf{B}}
\newcommand{\bp}{\mathbf{p}}
\newcommand{\bSigma}{\boldsymbol{\Sigma}}
\newcommand{\C}{\mathbf{C}}
\newcommand{\D}{\mathbf{D}}
\newcommand{\I}{\mathbf{I}}
\newcommand{\M}{\mathbf{M}}
\newcommand{\p}{\mathbf{p}}
\newcommand{\Q}{\mathbf{Q}}
\renewcommand{\S}{\mathcal{S}}
\renewcommand{\u}{\mathbf{u}}
\renewcommand{\v}{\mathbf{v}}
\newcommand{\w}{\mathbf{w}}
\newcommand{\X}{\mathbf{X}}
\newcommand{\x}{\mathbf{x}}
\newcommand{\y}{\mathbf{y}}
\newcommand{\z}{\mathbf{z}}
\newcommand{\bbeta}{\boldsymbol{\beta}}
\newcommand{\bepsilon}{\boldsymbol{\epsilon}}
\newcommand{\bOmega}{\boldsymbol{\Omega}}
\newcommand{\bphi}{\boldsymbol{\phi}}
\newcommand{\btheta}{\boldsymbol{\theta}}
\newcommand{\s}[1]{\mathcal{#1}}
\newcommand{\mbf}[1]{\mathbf{#1}}
\newtheorem{proposition}{Proposition}[section]
\title{Regression of binary network data with exchangeable latent errors}
\author{Frank W. Marrs$^1$ and Bailey K. Fosdick$^2$ 
}
\date{%
    $^1$Statistical Sciences, Los Alamos National Laboratory,\\ 
    \textit{fmarrs3@lanl.gov}\\[1ex]%
    $^2$Department of Biostatistics \& Informatics, Colorado School of Public Health\\
    \today
}
\begin{document}

\maketitle

\begin{abstract}
Undirected, binary network data consist of indicators of symmetric relations between pairs of actors. 
Regression models of such data allow for the estimation of effects of exogenous covariates on the network and for prediction of unobserved data.
Ideally, estimators of the regression parameters should account for the inherent dependencies among relations in the network that involve the same actor.
To account for such dependencies,  researchers have developed a host of latent variable network models, however, estimation of many latent variable network models is computationally onerous and which model is best to base inference upon may not be clear. 
We propose the Probit Exchangeable (PX) model for undirected binary network data that is
based on an assumption of exchangeability, which is common to many of the latent variable 
network models in the literature. 
The PX model can represent the first two moments of any exchangeable network model.
We leverage the EM algorithm to obtain an approximate maximum likelihood estimator of the PX model that is extremely computationally efficient.
Using simulation studies, we demonstrate the improvement in estimation of regression coefficients of the proposed model over existing latent variable network models. In an analysis of purchases of politically-aligned books, we demonstrate political polarization in purchase behavior and show that the proposed estimator significantly reduces runtime relative to estimators of latent variable network models, while maintaining predictive performance. \\

\textbf{Keywords:} Expectation-maximization; latent variable models; probit regression;
exogenous regression; political networks;
\end{abstract}

\section{Introduction}
Undirected binary network data  measure the presence or absence of a relationship between pairs of actors and have recently become extremely common in the social and biological sciences. 
Some examples of data that are naturally represented as undirected binary networks are international relations among countries \citep{fagiolo2008topological}, gene co-expression \citep{zhang2005general}, and interactions among students  \citep{han2016using}. 
We focus on an example of politically-aligned books, where a relation exists between two books if they were frequently purchased by the same person on Amazon.com.  
Our motivations are estimation of the effects of exogenous covariates, such as the effect of alignment of political ideologies of pairs of books on the propensity for books to be purchased by the same consumer, and the related problem of predicting unobserved relations using book ideological information. For example, predictions of relations between new books and old books could be used to recommend new books to potential purchasers.

A binary, undirected network
$\left\{ y_{ij} \in \{0,1 \}: i,j\in\{1,...,n\}, i < j \right\}$, which we abbreviate $\{y_{ij} \}_{ij}$,
may be represented as an $n \times n$ symmetric adjacency matrix which describes the presence or absence of relationships between unordered pairs of $n$ actors.
The diagonal elements of the matrix $\{y_{ii}: i \in \{1,...,n\}\}$ are assumed to be undefined, as we do not consider actor relations with him/herself. We use $\y$ to refer to the $\binom{n}{2}$ vector of network relations formed by a columnwise vectorization of the upper triangle of the matrix corresponding to $\{y_{ij} \}_{ij}$.

A regression model for the probability of observing a binary outcome is the probit model, which can be expressed 
\begin{align}
    \P(y_{ij} = 1 ) = \P \left( \x_{ij}^T \bbeta + \epsilon_{ij} > 0 \right), \quad  \label{eq_probit_prob}
\end{align}
where $\epsilon_{ij}$ is a mean-zero normal random error, $\x_{ij}$ is a fixed vector of covariates corresponding to relation $ij$, and $\bbeta$ is a vector of coefficients to be estimated. 
When each entry in the error network $\{ \epsilon_{ij} \}_{ij}$ is independent of the others, estimation of the probit regression model in \eqref{eq_probit_prob} is straightforward and proceeds via standard gradient methods for maximum likelihood estimation of generalized linear models \citep{greene2003econometric}.
The assumption of independence of $\{ \epsilon_{ij} \}_{ij}$ may be appropriate when the mean $\{ \x_{ij}^T \bbeta \}_{ij}$ represents nearly all of the dependence in the network $\{ y_{ij}\}_{ij}$.
However, network data naturally contain excess dependence beyond the mean: the errors $\epsilon_{ij}$ and $\epsilon_{ik}$ both concern actor $i$ (see \cite{faust1994social}, e.g., for further discussion of dependencies in network data). In the context of the political books data set, the propensity 
of  ``Who's Looking Out For You?'' by Bill O'Reilly to be purchased by the same reader as ``Deliver Us from Evil'' by Sean Hannity may be similar to the propensity of ``Who's Looking Out For You?'' and ``My Life'' by Bill Clinton to be co-purchased simply because ``Who's Looking Out For You?'' is a popular book.
Or, in a student friendship network,
the friendship that Julie makes with Steven may be related to the friendship that Julie makes with Asa due to Julie's  gregariousness.  Unlike the case of typical linear regression, the estimator that maximizes the likelihood of the 
generalized linear regression model in \eqref{eq_probit_prob}, when assuming independence of each entry in the error network $\{ \epsilon_{ij} \}_{ij}$, is not unbiased for $\bbeta$.
Ignoring the excess dependence in $\{\epsilon_{ij} \}_{ij}$ can thus be expected to result in poor estimation of $\bbeta$ and poor out-of-sample predictive performance. We observe this phenomenon in the simulation studies and analysis of the political books network (see Sections~\ref{sec:sim}~and~\ref{sec:data}, respectively). Thus,  estimators of $\bbeta$ and $\P(y_{ij} = 1)$ in \eqref{eq_probit_prob} for the network $\{ y_{ij} \}_{ij}$ should ideally account for the excess dependence of network data. A host of regression models exist in the literature that do just this; we briefly review these here.

A method used to account for excess dependence in regression of binary network data is the estimation of generalized linear mixed models, which were first introduced for repeated measures studies \citep{ stiratelli1984random, breslow1993approximate}. In these models, a random effect, i.e. latent variable, is estimated for each individual in the study, to account 
for possible individual variation. \cite{warner1979new} used latent variables to account for excess network dependence when analyzing  data with continuous measurements of relationships between actors, and  \cite{holland1981exponential} extended their approach to networks consisting of binary observations.
\cite{hoff2002latent} further extended this approach to include nonlinear functions of latent variables, and since then, many variations have been proposed \citep{handcock2007model, hoff2008modeling, sewell2015latent}. We refer to parametric network models wherein the observations are independent conditional on  random latent variables as ``latent variable network models,'' which we discuss in detail in Section~\ref{sec:latent}.
Separate latent variable approaches may lead to vastly different estimates of $\bbeta$, and it may not be clear which model's estimate of $\bbeta$, or prediction, to choose.  Goodness-of-fit checks are the primary method of assessing latent variable network model fit \citep{hunter2008goodness}, however,  selecting informative statistics is a well known challenge.
Finally, latent variable network models are typically computationally burdensome to estimate, often relying on Markov chain Monte Carlo methods.

Another approach to estimating covariate effects on network outcomes is the estimation of exponential random graph models, known as ERGMs. ERGMs represent the probability of relation formation using a generalized exponential family distribution, $\P(y_{ij} = 1) \propto \text{exp}(\mathbf{t}(\y_{ij}, \x_{ij})^T \theta)$, where $\theta$ is a vector of parameters to be estimated. In this flexible formulation, the effects of the exogenous covariates are included in the network statistics $\mathbf{t}(\y_{ij}, \x_{ij})$. ERGMs also account for excess network dependence using the network statistics $\mathbf{t}(\y_{ij}, \x_{ij})$, such as counts of the number of observed triangles or the number of ``2-stars'' -- pairs of indicated relations that share an actor.
ERGMs were developed by \cite{frank1986markov} and \cite{snijders2006new}, and are typically estimated using Markov chain Monte Carlo (MCMC) approximations to posterior distributions \citep{snijders2002markov, handcock:statnet, handcock:ergm}. ERGMs have been shown to be prone to place unrealistic quantities of probability mass on networks consisting of all `1's or all `0's \citep{handcock2003assessing, schweinberger2011instability}, and the estimation procedures may be slow to complete \citep{caimo2011bayesian}. Further, parameter estimates typically cannot be generalized to populations outside the observed network \citep{shalizi2013consistency}. 

A final approach to account for excess network dependence is to explicitly model the  correlation among network observations. This is the approach we take in this paper.  In this approach, an unobserved normal random variable, $z_{ij}$, is proposed to underlie each data point, such that $y_{ij} = \mathbf{1}[z_{ij} > 0]$ for $\z \sim {\rm N}(\X \bbeta, \bOmega(\btheta))$. 
In this formulation, excess dependence due to the network is accounted for in $\bOmega$.
The parameters $\bbeta$ and $\btheta$ of the distribution of the unobserved normal random variables $\{ z_{ij} \}_{ij}$ may be estimated using likelihood methods. For example, \cite{ashford1970multi} propose likelihood ratio hypothesis tests and \cite{ochi1984likelihood} give closed-form parameter estimators for studies of repeated observations on the same individual, such that $\bOmega(\btheta)$ is block diagonal.  In more general scenarios, such as unrestricted correlation structures,  methods such as semi-parametrics \citep{connolly1988conditional}, pseudo-likelihoods \citep{le1994logistic}, and MCMC approximations to EM algorithms \citep{chib1998analysis, li2008likelihood} are employed for estimation.

In this paper, we propose the Probit Exchangeable (PX) Model, a parsimonious regression model for undirected binary network data based on an assumption of exchangeability of the unobserved normal random variables $\{ z_{ij} \}_{ij}$.  The assumption of exchangeability is pervasive in random network models and, in fact, underlies many of the latent variable network models (see Section~\ref{sec:exch} for a detailed discussion of exchangeability)\footnote{We consider infinite exchangeability such that the exchangeable generating process is valid for arbitrarily large numbers of actors $n$, as in \cite{hoover1979relations} and \cite{aldous1981representations}.}. We show that, under exchangeability, the excess network dependence in $\{ z_{ij} \}_{ij}$ may be quantified using a single parameter $\rho$ such that $\bOmega(\btheta) = \bOmega(\rho)$. This fact remains regardless of the particular exchangeable generating model, and thus, our approach can be seen as subsuming exchangeable latent network variable models, at least up to the second moment of their latent distributions. 
The proposed model may be rapidly estimated using an expectation-maximization (EM) algorithm to attain a numerical approximation to the maximum likelihood estimator, where we make approximations in the expectation step for runtime considerations.  The estimation scheme we employ is similar to those used to estimate generalized linear mixed models in the literature \citep{littell2006sas, gelman2006data}. 

This paper is organized as follows. As latent variable network models are strongly related to our work, we review them in detail in Section~\ref{sec:latent}. We provide supporting theory for exchangeable random network models and their connections to latent variable network models in Section~\ref{sec:exch}. In Section~\ref{sec:PX}, we define the PX model and then the estimation thereof in Section~\ref{sec:estimation}. In Section~\ref{sec:pred_proc}, we give a method for making predictions on unobserved relations.  
We provide simulation studies demonstrating consistency of the proposed estimation algorithm, and demonstrating the improvement with the proposed model over latent variable network models in estimating $\bbeta$ in Section~\ref{sec:sim}. We analyze a network of political books in Section~\ref{sec:data}, demonstrating the reduction in runtime when using the PX model, and compare its out-of-sample performance to existing latent variable network models. A discussion with an eye toward future work is provided in Section~\ref{sec:disc}.

\section{Latent variable network models}
\label{sec:latent}
In this section, we briefly summarize a number of latent variable network models in the literature that are used to capture excess dependence in network observations. All latent variable network models we consider here may be written in the common form
\begin{align}
&\P(y_{ij} = 1 ) = \P \left( \mu_{ij} + f_{\btheta} (\v_i, \v_j) + \xi_{ij} > 0 \right),  \quad
\label{eq_gen_lv} \\
&\v_i \stackrel{iid}{\sim} (\mathbf{0}, \bSigma_v), 
\hspace{.3in}
\xi_{ij} \stackrel{iid}{\sim} {\rm N}(0, \sigma^2), 
\nonumber
\end{align}
where $\v_i \in \R^K$ with mean $\mathbf{0}$ and covariance matrix $\bSigma_v$, and $\mu_{ij}$ is fixed. We avoid specifying a distribution for the latent vectors $\{ \v_{i} \}_{i=1}^n$, although they are often taken to be multivariate Gaussian. We set the total variance of the latent variable representation to be $1 = \sigma^2 + var[f_{\btheta} (\v_i, \v_j)]$, since it is not identifiable. The function of the latent variables $f_{\btheta} : \R^K \times \R^K \rightarrow \R$, parametrized by $\btheta$, serves to distinguish the latent variable network models discussed below. 
Regression latent variable network models are formed when the latent mean is represented as a linear function of exogenous covariates $\x_{ij} \in \R^p$, such that $\mu_{ij} = \x_{ij}^T \bbeta$.
The latent nodal random vectors $\{ \v_{i} \}_{i=1}^n$ represent excess network dependence -- beyond the mean $\mu_{ij}$. Since relations $y_{ij}$ and $y_{ik}$ share latent vector $\v_i$ corresponding to shared actor $i$, and thus, $y_{ij}$ and $y_{ik}$ have related distributions through the latent function $f_{\btheta} (\v_i, \v_j)$. 
Many popular models for network data may be represented as in \eqref{eq_gen_lv}, such as the social relations model, the latent position model, and the latent eigenmodel.

\subsection{Social relations model}
The social relations model was first developed for continuous, directed network data
\citep{warner1979new, wong1982round, snijders1999social}. In the social relations model for binary network data \citep{hoff2005bilinear}, $f_{\btheta}(\v_i, \v_j) = \v_i + \v_j$ and $\v_i = a_i \in \mathbb{R}$ for each actor $i$, such that
\begin{align}
&\P(y_{ij} = 1 ) = \P \left( \x_{ij}^T \bbeta + a_i + a_j + \xi_{ij} > 0 \right),  \quad
\label{eq_binary_srm} \\
&a_i \stackrel{iid}{\sim} (0, \sigma^2_a),
\hspace{.3in}
\xi_{ij} \stackrel{iid}{\sim} {\rm N}(0, \sigma^2). \nonumber
\end{align}
Each actor's latent variable $\{ a_i \}_{i=1}^n$ may be thought of as the actor's sociability: large values of $a_i$ correspond to actors with a higher propensity to form relations in the network. The random $\{ a_i \}_{i=1}^n$ in \eqref{eq_binary_srm} also account for the excess correlation in network data; any two relations that share an actor, e.g. $y_{ij}$ and $y_{ik}$, are marginally correlated.

\subsection{Latent position model}
A more complex model for representing excess dependence in social network data is the latent position model
\citep{hoff2002latent}. The latent position model extends the idea of the social relations model 
by giving each actor $i$ a latent position $\u_i$ in a Euclidean latent space, for example $\R^{K}$. Then, actors whose latent positions are closer together in Euclidean distance are more likely to share a relation: 
\begin{align}
&\P(y_{ij} = 1 ) = \P \left( \x_{ij}^T \bbeta + a_i + a_j - || \u_i - \u_j ||_2 + \xi_{ij} > 0\right), \quad \label{eq_binary_dist} \\
&a_i \stackrel{iid}{\sim} (0, \sigma^2_a),
\hspace{.3in}
\u_i \stackrel{iid}{\sim} (0, \bSigma_u),
\hspace{.3in}
\xi_{ij} \stackrel{iid}{\sim} {\rm N}(0, \sigma^2). \nonumber
\end{align}
In the form of \eqref{eq_gen_lv}, the latent position model contains latent random vector $\v_{i} = [a_i, \u_i]^T \in \R^{K+1}$, and $f_{\btheta}(\v_{i}, \v_j ) = a_i + a_j - || \u_i - \u_j ||_2$.
\cite{hoff2002latent} show that the latent position model is capable of representing transitivity, that is, when $y_{ij} = 1$ and $y_{jk} = 1$, it is more likely that $y_{ik} = 1$. Models that are transitive often display a pattern observed in social network data: a friend of my friend is also my friend \citep{wasserman1994social}.  

\subsection{Latent eigenmodel}
The latent eigenmodel also associates each actor with  a latent position $\u_i$ in a latent Euclidean space, however the inner product between latent positions (weighted by  symmetric parameter matrix $\Lambda$) measures the propensity of actors $i$ and $j$ to form a relation, rather than the distance between positions \citep{hoff2008modeling}:
\begin{align}
&\P(y_{ij} = 1 ) = \P \left( \x_{ij}^T \bbeta + a_i + a_j  + \u_i^T \Lambda \u_j + \xi_{ij} > 0 \right), \quad
\label{eq_binary_eigent} \\
&a_i \stackrel{iid}{\sim} (0, \sigma^2_a),
\hspace{.3in}
\u_i \stackrel{iid}{\sim} (0, \bSigma_u),
\hspace{.3in}
\xi_{ij} \stackrel{iid}{\sim} {\rm N}(0, \sigma^2). \nonumber
\end{align}
In the context of \eqref{eq_gen_lv}, the function $f_{\btheta}(\v_{i}, \v_{j}) = a_i + a_j  + \u_i^T \Lambda \u_j $ for the latent eigenmodel, where the parameters $\btheta$ are the entries in $\Lambda$ and $\v_{i} = [a_i, \u_i]^T \in \R^{K+1}$.
\cite{hoff2008modeling} shows that the latent eigenmodel is capable of representing transitivity, and that the latent eigenmodel generalizes the latent position model given sufficiently large dimension of the latent vectors $K$. 

In addition to transitivity, a second phenomenon observed in social networks is structural  equivalence, wherein different groups of actors in the network form relations in a similar manner to others in their group. One form of structural equivalence is associative community structure, where the social network may be divided into groups of nodes that share many relations within group, but relatively few relations across groups. Such behavior is common when cliques are formed in high school social networks, or around subgroups in online social networks. A form of structural equivalence is when actors in a given group are more likely to form relations with actors in other groups than with actors in their own group, for example, in networks of high-functioning brain regions when performing cognitively demanding tasks \citep{betzel2018non}. 
Two models that are aimed at identifying subgroups of nodes that are structurally equivalent are the latent class model of \cite{nowicki2001estimation} and the mixed membership stochastic blockmodel \citep{airoldi2008mixed}. \cite{hoff2008modeling} shows that the latent eigenmodel is capable of representing stochastic equivalence in addition to transitivity, and that the latent eigenmodel generalizes latent class models given sufficiently large dimension of the latent vectors $K$. For this reason, we focus on the latent eigenmodel, and the simpler social relations model, as reference models in this paper.

\subsection{Drawbacks}
The latent variable network models discussed in this section were developed based on the patterns often observed in real world social networks. Latent variable network models contain different terms to represent the social phenomena underlying these patterns, and thus, 
different models may lead to substantially different estimates of $\bbeta$. It may not be clear which model's estimate of $\bbeta$, or which model's prediction of $\{ y_{ij} \}_{ij}$, is best. Generally, latent variable network models are evaluated using goodness-of-fit checks \citep{hunter2008goodness}, rather than rigorous tests, and it is well-known that selecting informative statistics for the goodness-of-fit checks is challenging. The latent variable network models described in this section are typically estimated using a Bayesian Markov chain Monte Carlo (MCMC) approach, which may be slow, especially for large data sets. Some recent advances do directly attempt to maximize the likelihood of network models with latent spaces \citep{ma2020universal, zhang2022joint}, however, public software implementations of these methods do not appear available, and they require certain covariate types (relation-level and actor-level, respectively) and certain latent space structures, such as the Euclidean distance latent space. 

\section{Exchangeable network models}
\label{sec:exch}
To motivate the formulation of the proposed model, we briefly discuss the theory of exchangeable random network models and their relationship to latent variable network models. A 
random network model for $\{\epsilon_{ij} \}_{ij}$ is \emph{exchangeable} if the distribution of $\{\epsilon_{ij} \}_{ij}$ is invariant to permutations of the actor labels, that is, if 
\begin{align}
\P\left( \{\epsilon_{ij} \}_{ij} \right) = \P \left( \{\epsilon_{\pi(i) \pi(j) } \}_{ij} \right), \quad \label{eq:prop_exch}
\end{align}
for any permutation $\pi(.)$. 
There is a rich theory of exchangeable network models, dating back to work on exchangeable random matrices \citep{hoover1979relations, aldous1981representations}, upon which we draw in this section. 

All the latent variable network models discussed in Section~\ref{sec:latent} have latent error networks $\{\epsilon_{ij} \}_{ij}$ that are exchangeable, where we define
$\epsilon_{ij} = f_{\btheta}(\v_{i}, \v_{j}) + \xi_{ij}$ from \eqref{eq_gen_lv}, the random portion of a general latent variable network model. Further,
under constant mean $\mu_{ij} = \mu$, all the latent variable network models for the observed network $\{ y_{ij} \}_{ij}$ in Section~\ref{sec:latent} are exchangeable.  In fact, 
any exchangeable network model may be represented by a latent variable network model. Specifically, the theory of exchangeable network models states that every exchangeable random network model may be represented in the following form (see, for example, \cite{lovasz2006limits, kallenberg2006probabilistic}):
\begin{align}
&\P(\epsilon_{ij} = 1 ) = \P \left( \mu + h (u_i, u_j) + \xi_{ij} > 0 \right), \quad  \label{eq_graphon_normal} \\
&u_i \stackrel{iid}{\sim} \text{Uniform}(0, 1),
\hspace{.3in}
\xi_{ij} \stackrel{iid}{\sim} \rm{N}(0, \sigma^2),
\nonumber
\end{align}
where the function  $h : [0,1] \times [0,1] \rightarrow \R$ has finite integral $\int_{[0,1] \times [0,1]} h(u,v) du dv < \infty$ and serves to distinguish the various exchangeable network models.
It can be shown that \eqref{eq_graphon_normal} is equivalent to the graphon representation of exchangeable random network models, where the graphon is the canonical probabilistic object of exchangeable random network models \citep{lovasz2006limits, borgs2014p}.
Noting that we may always map the  random scalar $u_i$ to some random vector $\v_i$,
the expression in \eqref{eq_graphon_normal} illustrates how every exchangeable random network model may be represented by a latent variable network model in the sense of \eqref{eq_gen_lv}.

\subsection{Covariance matrices of exchangeable network models}
The expression in \eqref{eq_graphon_normal} shows that any exchangeable network model for binary network data must correspond to a latent random network $\{\epsilon_{ij} \}_{ij}$ that is continuous and exchangeable. 
 The covariance matrix of \emph{any} undirected exchangeable network model has the same form and contains at most two unique nonzero values (\cite{marrs2017standard} shows that directed exchangeable network models with continuous values all have covariance matrices of the same form with at most five unique nonzero terms). This fact can be seen by simply considering the ways that any pair of relations can share an actor. In addition to a variance, the remaining covariances are between relations that do and do not share an actor:
\begin{align}
var[\epsilon_{ij}] = \sigma^2_\epsilon, \hspace{.3in} cov[\epsilon_{ij}, \epsilon_{ik}] := \rho, \hspace{.3in} cov[\epsilon_{ij}, \epsilon_{kl}] = 0, \quad \label{eq_cov_exch}
\end{align} 
where the indices $i,j,k,$ and $l$ are unique. 
It is easy to see the second equality holds for any pair of relations that share an actor by the exchangeability property, i.e. by permuting the actor labels. The third equality results from the fact that the only random elements in \eqref{eq_graphon_normal} are the actor random variables $u_i$, $u_j$, and the random error $\xi_{ij}$. When the random variables corresponding to two relations $\epsilon_{ij}$ and $\epsilon_{kl}$ share no actor, the pair of relations are independent by the generating process. 
Finally, we note that exchangeable network models have relations that are marginally identically distributed, and thus relations therein have the same expectation and variance. 
That said, in the generalized linear regression case of \eqref{eq_gen_lv}, the means $\mu_{ij} = \x^T_{ij} \bbeta$ are non-constant and thus the observations $\{y_{ij} \}_{ij}$ are not exchangeable; only the latent error network $\{ \epsilon_{ij} \}_{ij}$ is exchangeable in the generalized linear regression case. In the proposed model, rather than put forth a particular parametric model for the latent network $\{ \epsilon_{ij} \}_{ij}$, we simply model the covariance structure outlined in \eqref{eq_cov_exch}, which is sufficient to represent the covariance structure of \emph{any} exchangeable network model for the errors. 

\section{The Probit Exchangeable (PX) model}
\label{sec:PX}
In this section, we propose the probit exchangeable network regression model, which we abbreviate the ``PX'' model. In the PX model, the vectorized mean of the network is characterized by a linear combination of covariates, $\X \bbeta$,  where $\bbeta$ is a $p$-length vector of coefficients that are the subject of inference and $\X$ is a $\binom{n}{2} \times p$ matrix of covariates.  The excess network dependence beyond that captured in $\X\bbeta$ is represented by an unobservable mean zero error vector $\bepsilon$, a vectorization of $\{ \epsilon_{ij} \}_{ij}$, that is exchangeable in the sense of \eqref{eq:prop_exch}. The PX model is
\begin{align}
&\P(y_{ij} = 1) = \P \left(\x_{ij}^T\bbeta +  \epsilon_{ij} > 0 \right),  
\quad \label{eq:PX}\\
&\bepsilon \sim {\rm N}( \mbf{0}, \bOmega),  \nonumber  
\end{align}
where we note that the variance of $\epsilon_{ij}$ is not identifiable, and thus we choose $var[\epsilon_{ij}] = 1$ without loss of generality. We focus on normally-distributed unobserved errors $\bepsilon$ in this paper, however, other common distributions, such as the logistic distribution, could be used. We note that the normal distribution assumption implies that \eqref{eq:PX} is a typical probit regression model, but with correlation among the observations due to network structure.

As discussed in Section~\ref{sec:exch}, under the exchangeability assumption, the covariance matrix of the latent error network $var[\bepsilon] = \bOmega$ has at most two unique nonzero parameters. Taking $var[\epsilon_{ij}] = 1$,  the covariance matrix of $\bepsilon$ has a single parameter $\rho = cov[\epsilon_{ij}, \epsilon_{ik}]$.
We may thus write
\begin{align}
\bOmega(\rho) = \S_1 + \rho \s{S}_2, \quad \label{eq:omega}
\end{align}
where we define the binary matrices $\{ \S_{i} \}_{i=1}^3$ indicating unique entries in $\bOmega$. The matrix $\S_1$ is a diagonal matrix indicating the locations of the variance in $\bOmega$, and $\S_2$ and $\S_3$ indicate the locations in $\bOmega$  corresponding to the covariances $cov[\epsilon_{ij}, \epsilon_{ik}] $, and $cov[\epsilon_{ij}, \epsilon_{kl}]$, respectively, where the indices $i,j,k,$ and $l$ are unique.

The PX model unifies many of the latent variable network models discussed in Sections~~\ref{sec:latent}~and~\ref{sec:exch}.
Similar to \eqref{eq_graphon_normal}, the PX model may be seen as representing the covariance structure of the latent variables $\{ f_{\btheta}(\v_i, \v_j) + \xi_{ij} \}_{ij}$ with $\{ \epsilon_{ij} \}_{ij}$, the unobservable error network of the PX model in \eqref{eq:PX}. As both networks $\{ f_{\btheta}(\v_i, \v_j) + \xi_{ij} \}_{ij}$ and $\{ \epsilon_{ij} \}_{ij}$ are exchangeable, they have covariance matrices of the same form (see discussion in Section~\ref{sec:exch}).
As every exchangeable random network model may be represented by a latent variable network model, the PX model may represent the latent correlation structure of \emph{any} exchangeable network model, yet without specifying a particular exchangeable model. Further, we now show that the PX model is equivalent to the social relations model under certain conditions.
\begin{proposition}
\label{prop_srm}
Suppose that the random effects $\{a_i \}_{i=1}^n$ for the social relations model in \eqref{eq_binary_srm} are normally distributed. 
Then, there exists $\rho \in [0, 1/2]$ such that 
$\{y_{ij} \}_{ij}$ in the PX model in \eqref{eq:PX} is equal in distribution to $\{y_{ij} \}_{ij}$ as specified by the social relations model in \eqref{eq_binary_srm}.
\end{proposition}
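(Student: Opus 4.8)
The plan is to use the fact that both models are probit models with jointly Gaussian latent errors, so that the law of $\{y_{ij}\}_{ij}$ is determined entirely by the mean and covariance of the latent vector. First I would set $\epsilon^{\rm srm}_{ij} = a_i + a_j + \xi_{ij}$ and note that, once $a_i \simiid {\rm N}(0, \sigma_a^2)$, the vectorized error is a linear image of the independent Gaussians $\{a_i\}_i$ and $\{\xi_{ij}\}_{ij}$ and is therefore jointly mean-zero Gaussian. Because $y_{ij} = \mathbbm{1}[\x_{ij}^T\bbeta + \epsilon^{\rm srm}_{ij} > 0]$ depends on the latent vector only through its sign, and a Gaussian law is pinned down by its first two moments, it suffices to reproduce the mean and covariance of the social relations latent network with a PX latent network.

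Second, I would compute the covariance of $\{\epsilon^{\rm srm}_{ij}\}_{ij}$ directly from independence of the $a_i$ and $\xi_{ij}$. This yields $var[\epsilon^{\rm srm}_{ij}] = 2\sigma_a^2 + \sigma^2$, $cov[\epsilon^{\rm srm}_{ij}, \epsilon^{\rm srm}_{ik}] = \sigma_a^2$ for two relations sharing an actor, and $cov[\epsilon^{\rm srm}_{ij}, \epsilon^{\rm srm}_{kl}] = 0$ when $i,j,k,l$ are distinct. This is precisely the two-parameter exchangeable structure of \eqref{eq_cov_exch}, confirming that the social relations error covariance lies in the family $\bOmega(\rho)$ used by the PX model.

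Third, I would normalize to unit variance. Since $\P(c\,\x_{ij}^T\bbeta + c\,\epsilon_{ij} > 0) = \P(\x_{ij}^T\bbeta + \epsilon_{ij} > 0)$ for any $c > 0$, dividing the latent error by $\sqrt{2\sigma_a^2 + \sigma^2}$ leaves the law of $\{y_{ij}\}_{ij}$ unchanged while forcing $var[\epsilon_{ij}] = 1$ as required by \eqref{eq:PX}. Under this standardization the shared-actor covariance becomes the correlation $\rho = \sigma_a^2/(2\sigma_a^2 + \sigma^2)$, which is manifestly nonnegative, and the mean rescales to $\x_{ij}^T\bbeta/\sqrt{2\sigma_a^2 + \sigma^2}$. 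Taking the PX model with this value of $\rho$ and the correspondingly rescaled coefficient vector then gives a latent Gaussian with identical mean and covariance, so the two binary networks are equal in distribution.

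The step demanding the most care is this last normalization: one must make explicit that the probit model is invariant under a common positive rescaling of the mean and latent error, so that imposing $var[\epsilon_{ij}] = 1$ merely reparametrizes $\bbeta$ and does not change the induced distribution of $\{y_{ij}\}_{ij}$. If instead one adopts the total-variance convention $2\sigma_a^2 + \sigma^2 = 1$ from \eqref{eq_gen_lv}, this step is immediate and $\rho = \sigma_a^2$ directly. The remaining work is routine moment computation, and positive semidefiniteness of $\bOmega(\rho)$ is automatic since the matching covariance is inherited from a genuine Gaussian construction.
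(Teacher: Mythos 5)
Your proposal is correct and takes essentially the same route as the paper: both arguments note that the two models are probit regressions with jointly Gaussian latent errors and the same mean structure, reduce equality in distribution of $\{y_{ij}\}_{ij}$ to matching the (exchangeable, two-parameter) latent covariance matrices, and then read off the shared-actor covariance as the required $\rho \ge 0$. Your explicit handling of the unit-variance normalization is a useful addition; note that under the paper's convention $2\sigma_a^2 + \sigma^2 = 1$ your formula gives $\rho = \sigma_a^2$, which agrees with the paper's stated $\rho = \sigma_a^2/2$ only if one interprets the $\sigma_a^2$ in the paper's proof as the variance of the combined effect $a_i + a_j$ rather than of a single $a_i$.
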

\begin{proof}
As the PX and social relations models are probit regression models with the same mean structure, given by $\X \bbeta$, 
it is sufficient to show that their latent covariance matrices are equivalent, that is, that $var[ \{a_i + a_j + \xi_{ij} \}_{ij} ] = var[ \{ \epsilon_{ij} \}_{ij} ]$. By exchangeability, the latent covariance matrices of the PX and social relations models have the same form and by assumption have variance 1. It is easy to see that, given $\sigma^2_a \le 1$ (a necessary condition for $var[\epsilon_{ij}]=1$), we may take $\rho = \sigma^2_a/2$ for the PX model, which establishes equality in the model distributions. 
\end{proof}
Exact distributional equivalence between the PX model and latent variable models other than the social relations model will typically not hold. For example, the latent eigenmodel in \eqref{eq_binary_eigent} includes non-Gaussian random variables, so that exact distributional equivalence is impossible. Similarly, it appears likely that the general latent variable model in \eqref{eq_gen_lv} may generate non-Gaussian random variables through the function $f_{\btheta} (\v_i, \v_j)$. Importantly however, there does exist $\rho$ such that the covariance of the latent errors of every pair of relations, $cov[\epsilon_{ij}, \, \epsilon_{kl}]$, is equal to the covariance of the latent errors in \emph{any} exchangeable latent variable model, $cov[f_{\btheta} (\v_i, \v_j) + \xi_{ij},\, f_{\btheta} (\v_k, \v_l) + \xi_{kl}]$. Hence, the PX model may be seen as a generalized exchangeable latent variable model that focuses all modelling effort on the first two moments of the data.

Proposition~\ref{prop_srm} states that the PX model and social relations model are equivalent under normality of their latent error networks. 
In principle, the social relations model is simply a generalized linear mixed model, however, existing software packages, such as \texttt{lme4} in \texttt{R} \cite{lme4}, do not appear to accommodate the random effects specification of the social relations model in \eqref{eq_binary_srm} since the indices $i$ and $j$ pertain to random effects $a_i$ and $a_j$ from the same set (as opposed to $a_i$ and $b_j$ in a random crossed design). Nevertheless, the estimation scheme proposed in Section~\ref{sec:estimation}  employs the same strategies as those commonly used to estimate generalized linear mixed models  \citep{littell2006sas, gelman2006data}. In the estimation algorithm in \texttt{lme4}, the marginal likelihood of the data is approximated and then maximized using numerical approximations with respect to $\bbeta$ and random effects variance, for example $\sigma^2_a$ in the social relations model. Rather than an approximate likelihood, we propose maximizing the true likelihood with respect to $\bbeta$ and $\rho$,  yet also use numerical approximations to accomplish this maximization.

It is important to note that, although the latent errors $\{ \epsilon_{ij} \}_{ij}$ in the PX model form an exchangeable random network, the random network $y_{ij}$ represented by the PX model is almost certainly not  exchangeable. For example, each $y_{ij}$ may have a different marginal expectation $\Phi(\x_{ij}^T \bbeta)$. Then, the relations in the network are not marginally identically distributed, which is a necessary condition for exchangeability. Further, the covariances between pairs of relations, say $y_{ij}$ and $y_{ik}$, 
depend on the marginal expectations:
\begin{align}
cov[y_{ij}, y_{ik} ] &= E \left[ y_{ij} y_{ik} \right] - E \left[ y_{ij} \right]  E \left[ y_{ik}  \right] = \int_{-\x^T_{ij} \bbeta}^\infty \int_{-\x^T_{ik} \bbeta}^\infty dF_\rho - \Phi(\x^T_{ij} \bbeta)  \Phi(\x^T_{ik} \bbeta). \nonumber
\end{align}
Here, $dF_\rho$ is the bivariate standard normal distribution with correlation $\rho$. Since the covariance $cov[y_{ij}, y_{ik} ] $ depends on the latent means $\x^T_{ij} \bbeta$ and $\x^T_{ik} \bbeta$, $cov[y_{ij}, y_{ik}  ] $ is only equal to $cov[y_{ab}, y_{ac} ]$ when the latent means are equal. As a result, although the covariance matrix of the unobserved errors $\bOmega$ is of a simple form with entries $\{1, \rho, 0 \}$,
the covariances between elements of the vector of observed relations $\y$ are heterogeneous (in general) and depend on $\rho$ in a generally more complicated way.

\section{Estimation}
\label{sec:estimation}
In this section, we propose an estimator of $\{\bbeta, \rho \}$ in the PX model that approximates  the maximum likelihood estimator (MLE). The algorithm we propose is based on the expectation-maximization (EM) algorithm \citep{dempster1977maximum}. 
Although the covariance matrix for the PX model is highly structured, as in \eqref{eq:omega}, a closed-form expression for the MLE does not appear available. While we explored pseduo-likelihood pairwise approximations (also called ``composite likelihoods'' in some literature) to the complete PX likelihood \citep{heagerty1998composite}, we found no substantial advantage  -- neither in performance nor runtime -- over the proposed estimation scheme in this paper.  

The proposed estimation algorithm consists of alternating computation of the expected complete likelihood with maximization with respect to $\rho$ and $\bbeta$, iterating until convergence. 
Since the algorithm iterates expectation and two maximization steps, we term it the EMM algorithm. 
To improve algorithm efficiency, we initialize $\bbeta$ at the ordinary probit regression estimator (assuming independence of the latent errors), and initialize $\rho$ with a mixture estimator based on possible values of $\rho$ such that $\bOmega$ is positive definite, as detailed in Appendix~\ref{sec:rho_init}.
The complete EMM algorithm is presented in Algorithm~\ref{alg:sub}.  In the following text, we detail the EMM algorithm, beginning with maximization with respect to $\rho$, and then proceeding to maximization with respect to $\bbeta$. We define $\gamma_i = E[\bepsilon^T \s{S}_i \bepsilon \mid \y,  \hat{\rho}^{(\nu)}, \hat{\bbeta}^{(\nu)} ] / |\Theta_i|$, where $\Theta_i$ is the set of relation pairs indicated by binary matrices $\s{S}_i$. 
By default, we typically set $\tau = 10^{-2}$ and $\delta = 10^{-1}$.

\begin{algorithm}
\caption{EMM estimation of the PX model}
\label{alg:sub}

\begin{enumerate}
\setcounter{enumi}{-1}
\item \textbf{Initialization:} \\ Initialize $\hat{\bbeta}^{(0)}$ using probit regression assuming independence and initialize $\hat{\rho}^{(0)}$ as described in Appendix~\ref{sec:rho_init}. Set positive convergence threshold $\tau$, scaling $\delta \in [0,1]$
and set iteration $\nu = 0$. 

\item \textbf{Expectation step:} \\
Given $\hat{\rho}^{(\nu)}$ and $\hat{\bbeta}^{(\nu)}$, compute $E[\bepsilon \, | \, \y, \hat{\rho}^{(\nu)}, \hat{\bbeta}^{(\nu)} ]$ using the procedure described in Appendix~\ref{sec:appx_beta}, and approximate $\{ \gamma_i \}_{i=1}^3$ as described in  Appendix~\ref{sec:rho_linear_appx}.

\item \textbf{Maximization with respect to }$\rho$: \\
Given $s=0$ and $\hat{\rho}^{(\nu, \, s)} = \hat{\rho}^{(\nu)}$, $\hat{\bbeta}^{(\nu)}$, and $\{ \gamma_i \}_{i=1}^3$, compute $\hat{\rho}^{(\nu, \, s + 1)}$ by alternating \eqref{eq:rho_est} and \eqref{eq:lambda_est} until $\rho$ changes by less than $\delta \tau$. Set $\rho^{(\nu+1)}$ equal to the final $\rho$ value. 

\item \textbf{Maximization with respect to }$\bbeta$: \\
Compute the updated estimate
\[\hat{\bbeta}^{(\nu + 1)} = \hat{\bbeta}^{(\nu)} + (\X^T \bOmega^{-1} \X)^{-1} \X^T \bOmega^{-1} E[\bepsilon \, | \, \y,  \hat{\rho}^{(\nu)}, \hat{\bbeta}^{(\nu)}]. \nonumber \]

\item If $\max \{ | \hat{\bbeta}^{(\nu + 1)} - \hat{\bbeta}^{(\nu )} | / \hat{\bbeta}^{(\nu )},  \ | \hat{\rho}^{(\nu+1)} - \hat{\rho}^{(\nu)}| / \hat{\rho}^{(\nu)}\} > \tau$, then increment $\nu$ by 1 and return to Step 1. Otherwise, end. 
\end{enumerate}
\end{algorithm}

\subsection{Expectation}
Consider the log-likelihood, $\ell_\z$, of the latent continuous random vector $\z$. Taking the expectation of $\ell_\z$ conditional on $\y$, the expectation step for a given iteration $\nu$ of the EM algorithm is
\begin{align}
&E[\ell_\z  \, | \,  \y,  \rho=\hat{\rho}^{(\nu)}, \bbeta=\hat{\bbeta}^{(\nu)} ]  = \nonumber \\
& \hspace{.25in} -\frac{1}{2}{ \rm log } 2 \pi |\bOmega| - \frac{1}{2} E\left[ (\z - \X \bbeta)^T \bOmega^{-1} (\z - \X \bbeta) \, | \, \y,  \rho=\hat{\rho}^{(\nu)}, \bbeta=\hat{\bbeta}^{(\nu)} \right], \quad \label{eq:EM_exp}
\end{align}
where $\hat{\rho}^{(\nu)}$ and $\hat{\bbeta}^{(\nu)}$ are the estimators of $\rho$ and $\bbeta$ at iteration $\nu$. 
In discussing the maximization step for $\rho$, we will show that that the $\rho$ update depends on the data through the expectations denoted by $\gamma_i$ for $i \in \{1,2,3 \}$. In discussing the maximization step for $\beta$, we will show that that the $\beta$ update depends on the data only through the expectation $E[\bepsilon \, | \, \y, \hat{\rho}^{(\nu)}, \hat{\bbeta}^{(\nu)} ]$. \\

\noindent\textbf{Approximations:} \\
The computation of $E[\bepsilon \, | \, \y, \hat{\rho}^{(\nu)}, \hat{\bbeta}^{(\nu)}]$ in \eqref{eq:beta_update} is nontrivial, as it is a $\binom{n}{2}$-dimensional truncated multivariate normal integral. We exploit the structure of $\bOmega$ to compute $E[\bepsilon \, | \, \y, \hat{\rho}^{(\nu)}, \hat{\bbeta}^{(\nu)}]$ using the law of total expectation. A Newton-Raphson algorithm, along with an approximate matrix inverse, are employed to compute an approximation of $E[\bepsilon \, | \, \y, \hat{\rho}^{(\nu)}, \hat{\bbeta}^{(\nu)}]$. Details of the implementation of thse approximations are given in Appendix~\ref{sec:appx_beta}.

The expectations  $\{ \gamma_i \}_{i=1}^3$ require the computation of $\binom{n}{2}$-dimensional truncated multivariate normal integrals, which are onerous for even small networks.
Thus, we make two approximations to  $\{ \gamma_i \}_{i=1}^3$ to reduce the runtime of the EMM algorithm. First, we compute the expectations conditioning only on the entries in $\y$ that correspond to the entries in $\bepsilon$ being integrated, for example, instead of computing $E[\epsilon_{jk} \epsilon_{lm} \, | \, \y]$, we compute $E[\epsilon_{jk} \epsilon_{lm} \, | \, y_{jk}, y_{lm}]$. This first approximation is most appropriate when $\rho$ is small, since $y_{lm}$ is maximally informative for $\epsilon_{jk}$ when $\rho$ is large (for $l,m,j,$ and $k$ distinct). Second, we find empirically that  $\gamma_2 = E[\bepsilon^T \S_2 \bepsilon \, | \, \y ] / |\Theta_2|$ is approximately linear in $\rho$, since this sample mean of conditional expectations concentrates around a linear function of $\rho$. 
Thus, we compute  $\gamma_2$ for $\rho=0$ and $\rho = 1$, and use a line connecting these two values to compute $\gamma_2$ for arbitrary values of $\rho$ (see evidence of linearity of $\gamma_2$ for the political books network in Appendix~\ref{sec:app_data}). 
The details of the approximations to  $\{ \gamma_i \}_{i=1}^3$  are given in Appendix~\ref{sec:rho_linear_appx}.

\subsection{Maximization with respect to $\rho$}
\label{sec:max_rho}

To derive the maximization step for $\rho$, we use the method of Lagrange multipliers, since differentiating \eqref{eq:EM_exp} directly with respect to $\rho$ gives complex nonlinear equations that are not easily solvable. 
We first define the set of parameters $\{\phi_i \}_{i=1}^3$, representing the variance and two possible covariances in $\bOmega$,
\begin{align}
var[\epsilon_{ij}] = \phi_1, \hspace{.3in}  cov[\epsilon_{ij}, \epsilon_{ik}] = \phi_2 = \rho, \hspace{.3in} cov[\epsilon_{ij}, \epsilon_{kl}] = \phi_3, \nonumber
\end{align} 
where the indices $i,j,k,$ and $l$ are distinct. 
In addition, we let $\p =  [p_1, p_2, p_3]$ parametrize the precision matrix $\bOmega^{-1} = \sum_{i=1}^3 p_i \S_i$, which has the same form as the covariance matrix $\bOmega$  (see  \cite{marrs2017standard} for a similar result when $\{ \epsilon_{ij} \}_{ij}$ forms a directed network). The objective function, incorporating the restrictions that $\phi_1 = 1$ and $\phi_3 = 0$, is \begin{align}
Q_\y(\bphi) := E[\ell_\z \, | \, \y ] + \frac{1}{2}\lambda_1 (\phi_1 - 1) + \frac{1}{2}\lambda_3 \phi_3, \nonumber
\end{align} 
where $\bphi = [\phi_1, \phi_2, \phi_3]$ and the `$\frac{1}{2}$' factors are included to simplify algebra. Then, differentiating $Q_\y$ with respect to $\p$, $\lambda_1$, and $\lambda_3$,
the estimators for $\rho$, $\{ \lambda_1, \lambda_3 \}$ are
\begin{align}
\hat{\rho} &= \gamma_2 -   \frac{1}{|\Theta_2|}
\begin{bmatrix} 
\frac{\partial \phi_1}{\partial p_2}  & \frac{\partial \phi_3}{\partial p_2}
\end{bmatrix}^T
\begin{bmatrix} 
\lambda_1 \\
\lambda_3
\end{bmatrix} \quad \label{eq:rho_est} \\
\begin{bmatrix} 
\hat{\lambda}_1 \\
\hat{\lambda}_3
\end{bmatrix}
&= \begin{bmatrix} 
\frac{\partial \phi_1}{\partial p_1}  & \frac{\partial \phi_3}{\partial p_1} \\
\frac{\partial \phi_1}{\partial p_3}  & \frac{\partial \phi_3}{\partial p_3}
\end{bmatrix}^{-1}
\begin{bmatrix} 
|\Theta_1|  & 0 \\
0  & |\Theta_3|
\end{bmatrix}
\begin{bmatrix} 
\gamma_1 - 1 \\
\gamma_3
\end{bmatrix}, \quad \label{eq:lambda_est}
\end{align} 
where again $ \Theta_i$ is the set of pairs of relations $(jk, lm)$ that share an actor in the $i^\text{th}$ manner, for $i \in \{1,2,3 \}$. For instance, $\Theta_2$ consists of pairs of relations of the form $(jk, jl)$, where $j,k,$ and $l$ are distinct indices. In \eqref{eq:rho_est} and \eqref{eq:lambda_est}, the partial derivatives $\left\{ \partial \phi_i / \partial p_j \right\}$ are available in closed form and are easily computable in $O(1)$ time using the forms of $\bOmega$ and $\bOmega^{-1}$. See Appendix~\ref{sec:undir_cov_mat} for details.

Alternation of the estimators for $\rho$ and $\{ \lambda_1, \lambda_3 \}$ in \eqref{eq:rho_est} and \eqref{eq:lambda_est} constitutes a block coordinate descent for $\rho = \phi_2$ subject to the constraints $\phi_1 = 1$ and $\phi_3 = 0$. 
This block coordinate descent makes up the maximization step of the EMM algorithm for $\rho$.

\subsection{Maximization with respect to ${\beta}$}
\label{sec:max_bbeta}

The maximizating step with respect to $\bbeta$ in the EMM algorithm can be obtained directly. 
Setting the derivative of \eqref{eq:EM_exp} with respect to $\bbeta$ equal to zero, 
the maximization step for $\bbeta$ is
\begin{align}
\hat{\bbeta}^{(\nu+1)} & = \hat{\bbeta}^{(\nu)} +
 \left( \X^T \bOmega^{-1} \X \right)^{-1} \X^T \bOmega^{-1} E[\bepsilon \, | \, \y,  \hat{\rho}^{(\nu)}, \hat{\bbeta}^{(\nu)} ], \quad \label{eq:beta_update}
\end{align}
where we use the identity $\bepsilon = \z - \X\bbeta$.
In Appendix~\ref{sec:undir_cov_mat}, we show that the leading terms of the unique entries in $\Omega^{-1}$, $\p$, depend only on $\rho$ through a multiplicative factor, 
\begin{align}
\p \approx f(\rho)[g_1(n),\, g_2(n), \,g_3(n)]^T. \nonumber
\end{align}
Thus, we may factor $f(\rho)$ out of \eqref{eq:beta_update}, and the $\bbeta$ maximization is asymptotically $\rho$-free (except for the expectation term). Similarly, the maximization with respect to $\rho$ in Section~\ref{sec:max_rho} is free from $\bbeta$ except for the expectation term. Hence, only a single maximization step with respect to each $\bbeta$ and $\rho$ is required for each expectation.

\subsection{Consistency of the EMM estimator}
\label{sec:consistency}
The complete multivariate normal likelihood for $\z$ is a non-curved, identifiable likelihood. Then, it is known that each  expectation and maximization step in an EM algorithm increases the current likelihood value \citep{wu1983convergence}. Whenever there is a unique, single local maximum, the EM algorithm yields consistent, and efficient, estimators. We make a series of approximations to the expectations in the EMM algorithm to reduce computational demands, so that the theory of EM estimator convergence may not be directly applicable. Yet, we find that the EMM estimators, $\{\hat{\beta}_{EMM}, \hat{\rho}_{EMM} \}$ maintain consistency. Taking the leading terms of $\hat{\rho}_{EMM}$, 
\begin{align}
    \hat{\rho}_{EMM} &= \frac{1}{2} + \frac{1}{n^3} \sum_{jk, lm \in \Theta_2} E[\epsilon_{jk} \epsilon_{lm} \mid y_{jk}, y_{lm} ] - \frac{1}{n^2} \sum_{jk} E[\epsilon_{jk}^2 \mid y_{jk} ]\ldots \quad \nonumber \\ 
    & \ldots - \frac{2}{n^4} \sum_{jk, lm \in \Theta_3} E[\epsilon_{jk} \mid y_{jk}] E[\epsilon_{lm} \mid y_{lm}] + O(n^{-1}). \quad \nonumber
\end{align}
which has expectation $E[\hat{\rho}_{EMM}] = \rho + O(n^{-1})$. Then, consistency can be established by showing that the variance of $\hat{\rho}_{EMM}$ tends to zero. We provide details in Appendix~\ref{sec:theory}.
The estimator $\hat{\beta}_{EMM}$ is particularly difficult to analyze, as $E[\epsilon_{jk} \mid \y]$ depends on every entry in $\y$, and because we approximate this expectation. We provide a sketch for a proof of consistency in Appendix~\ref{sec:theory} by bounding the distance between the EMM estimator for $\beta$ and the true MLE, 
$||\hat{\beta}_{EMM} - \hat{\beta}_{MLE} ||_2^2$, using an easier-to-analyze estimator for $\beta$ which replaces $E[\epsilon_{jk} \mid \y]$ with $E[\epsilon_{jk} \mid y_{jk}]$.
As in the argument for consistency of $\hat{\rho}_{EMM}$, we establish consistency of the bounding estimator by showing the expectation is asymptotically equal to the true value of $\beta$ and that the variance of the bounding estimator tends to zero. 
We also discuss performance of $\hat{\beta}_{EMM}$ under model misspecification, showing that it maintains consistency even under violation of the normality and exchangeability assumptions. 

\section{Prediction}
\label{sec:pred_proc}
In this section, we describe how to use the PX model, and the approximations in service of Algorithm~\ref{alg:sub}, to make predictions for an unobserved network relation without undue computational cost.
The predicted value we seek is the probability of observing $y_{jk} = 1$ given all the other values $\y_{-jk}$, where $\y_{-jk}$ is the vector of observations $\y$ excluding the single relation  $jk$. 
As in estimation, the desired probability is again equal to a $\binom{n}{2}$-dimensional multivariate truncated normal integral, which is computationally burdensome. Thus, we approximate the desired prediction probability
\begin{align}
\mathbb{P}( y_{jk} = 1 \, | \, \y_{-jk} ) 
&= E \left[E \left[ \mathbf{1}[\epsilon_{jk}  > -\x^T_{jk} \bbeta  ] \, | \, {\bepsilon}_{-jk} \right] \, | \, \y_{-jk} \right], \quad \label{eq:phat} \\
&\approx \Phi \left(\frac{ E[{\epsilon}_{jk} \, | \, \y ]  + \x^T_{jk} \bbeta}{ \sigma_n} \right).\nonumber 
\end{align}
The approximation in \eqref{eq:phat} is based on the fact that  $[\epsilon_{jk} \, | \, {\bepsilon}_{-jk}]$ is normally distributed:
\begin{align}
&\epsilon_{jk} \, | \, {\bepsilon}_{-jk} \sim {\rm N}(m_{jk}, \sigma^2_n), \quad \label{eq_condl_distr_undir1}\\
m_{jk} &= - {\sigma_n^2} \mathbf{1}_{jk}^T \left( p_2 \s{S}_2 + p_3 \s{S}_3 \right) \tilde{\bepsilon}_{-jk}, 
\hspace{.2in} \sigma^2_n = \frac{1}{p_1 }, \nonumber 
\end{align}
where $\mathbf{1}_{jk}$ is the vector of all zeros with a one in the position corresponding to relation $jk$ and, for notational simplicity, we define $\tilde{\bepsilon}_{-jk}$ is the vector $\bepsilon$ with a zero in the entry corresponding to relation $jk$. We note that the diagonal of the matrix $p_2 \s{S}_2 + p_3 \s{S}_3$ consists of all zeros so that $m_{jk}$ is free of $\epsilon_{jk}$. Then, the inner expectation in \eqref{eq:phat} is 
\begin{align}
E \left[ \mathbf{1}[\epsilon_{jk}  > -\x^T_{jk} \bbeta  ] \, | \, {\bepsilon}_{-jk} \right] = \Phi \left( \frac{m_{jk} + \x^T_{jk} \bbeta }{\sigma_n}\right). \quad \label{eq:inner_exp}
\end{align}
Of course, $m_{jk}$ depends on ${\bepsilon}_{-jk}$ which is unknown, and thus, we replace $m_{jk}$ with its conditional expectation $E[m_{jk} \, | \, \y_{-jk}] = E[\epsilon_{jk} \, | \, \y_{-jk}]$.

Computing $E[\epsilon_{jk} \, | \, \y_{-jk}]$ is extremely difficult, however computing $E[\epsilon_{jk} \, | \, \y]$ proves feasible if we exploit the structure of $\bOmega$.  Thus, we approximate the desired expectation by imputing $y_{jk}$ with the mode of the observed data:
\begin{align}
E \left[ \epsilon_{jk}  \, | \, \y_{-jk} \right] \approx
E \left[ \epsilon_{jk}  \, | \, \y_{-jk}, y_{jk} = y^* \right] = E \left[ \epsilon_{jk}  \, | \, \y \right], \quad \label{eq:impute_ymode}
\end{align}
where $y^*$ is the mode of $\y_{-jk}$. The error due to this approximation is small and  shrinks as $n$ grows. Substituting \eqref{eq:impute_ymode} for $m_{jk}$ in \eqref{eq:inner_exp} gives the final expression in \eqref{eq:phat}.

\section{Simulation studies}
\label{sec:sim}
In this section, we describe three simulation studies. The first verifies that the performance of the EMM estimator in Algorithm~\ref{alg:sub} provides improvement over standard probit regression. The second simulation study verifies consistency of the EMM estimators of $\bbeta$, and compares the performance of these estimators to the estimators of 
$\bbeta$ from the social relations model and the latent eigenmodel. The third simulation study evaluates the robustness of the PX model, and EMM algorithm, to the assumption that the latent random variables are normally distributed. 

For both simulation studies, we generated data with mean consisting of three covariates and an intercept: 
\begin{align}
y_{ij} = \mathbf{1} \Big[\beta_0 + \beta_1 \mathbf{1}[x_{1i} \in C] \mathbf{1}[x_{1j} \in C]+\beta_2|x_{2i}-x_{2j}|+\beta_3 x_{3ij}
+ \epsilon_{ij} > 0 \Big]. \quad \label{eq:sim_gen_model}
\end{align}
In the model in \eqref{eq:sim_gen_model}, $\beta_0$ is an intercept; $\beta_1$ is a coefficient on a binary indicator of whether individuals $i$ and $j$ both belong to a pre-specified class $C$; $\beta_2$ is a coefficient on the absolute difference of a continuous, actor-specific covariate $x_{2i}$; and $\beta_3$ is that for a pair-specific continuous covariate $x_{3ij}$. 
We fixed  $\bbeta = [\beta_0, \beta_1, \beta_2, \beta_3]^T$ at a single set of values.  Since the accuracy of estimators of $\boldsymbol{\beta}$ may depend on $\X$, we generated $20$ random design matrices $\X$ for each sample size  of $n \in \{ 20, 40, 80 \}$ actors. We emphasize that, although these may appear to be only moderately-sized networks, each consists of $\binom{n}{2} \in \{ 190, \, 780, \, 3160\}$ observations.
For each design matrix we simulated 100 error realizations of $\{ \epsilon_{ij} \}_{ij}$, 
with distribution that depended on the generating model. When generating from the PX model, half of the total variance in $\epsilon_{ij}$ was due to correlation $\rho = 1/4$, and the remaining half was due to the unit variance of $\epsilon_{ij}$. When generating from the latent eigenmodel in \eqref{eq_binary_eigent}, one third the variance in  $\epsilon_{ij}$ was due to each term $a_i + a_j$, $\u_i^T \Lambda \u_j$, and $\xi_{ij}$, respectively. 
For additional details of the simulation study procedures, see Appendix~\ref{sec:sim_cons}. 

\subsection{Evaluation of approximations in Algorithm~\ref{alg:sub}}
\label{sec:sim1}

To evaluate the efficacy of the approximations described in the estimation procedure in Algorithm~\ref{alg:sub}, we simulated from \eqref{eq:sim_gen_model} for a single $\X$ with $n=40$ (larger $n$ caused multivariate normal integral failures in \texttt{R}).
We simulated 100 networks from the PX model in \eqref{eq:PX} using this $\X$, for each value of $\rho \in \{0.1, 0.25, 0.4 \}$ (we note that we require $\rho < 1/2$ for the error covariance matrix $\Omega$ to be positive definite). For each realization, we estimated $\beta$ in the PX model using EMM in Algorithm~\ref{alg:sub}. 
To estimate $\beta$ in the standard probit model, we used the function \texttt{glm} in \texttt{R}. To compute the MLE, we numerically optimized the data log-likelihood using the Broyden–Fletcher–Goldfarb–Shanno (BFGS) algorithm as implemented in the \texttt{optim} function in \texttt{R}, initializing at the true values of $\{ \beta, \rho\}$. 

\begin{figure}[ht]
\centering
\begin{tabular}{cc}
\includegraphics[width=.44\textwidth]{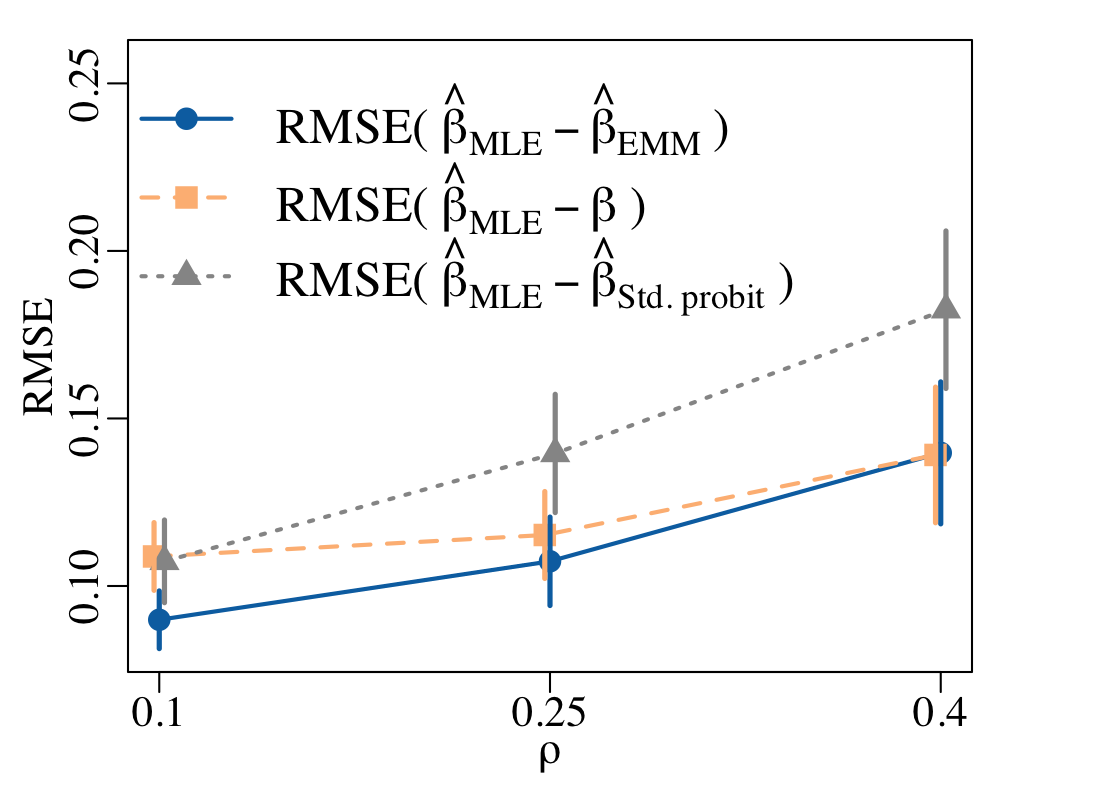}
\hspace{-.1in} 
& 
\hspace{-.1in} 
\includegraphics[width=.44\textwidth]{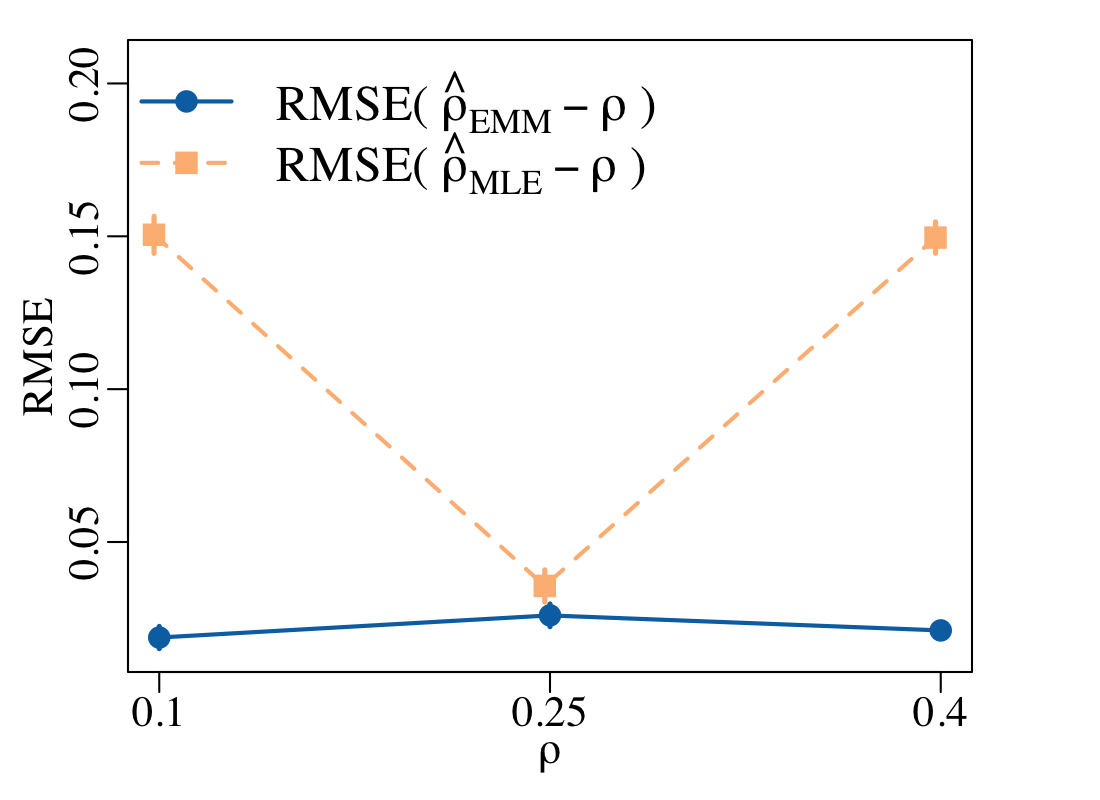} \\
\end{tabular}
  \caption{The left panel depicts performance in estimating $\beta$: RMSE between the EMM estimator and the MLE (RMSE$(\hat{\beta}_{MLE} - \hat{\beta}_{EMM})$), 
  between the MLE and the truth (RMSE$(\hat{\beta}_{MLE} - {\beta})$), and between the MLE and the standard probit estimator (RMSE$(\hat{\beta}_{MLE} - \hat{\beta}_{Std. probit})$). 
  The right panel depicts performance in estimating $\rho$: RMSE between the MLE and the EMM estimator  (RMSE$(\hat{\rho}_{MLE} - \hat{\rho}_{EMM})$) and between the MLE and the truth 
  (RMSE$(\hat{\rho}_{MLE} - \rho)$).
  The RMSEs are plotted as a function of the true values of $\rho$, and solid vertical lines denote Monte Carlo error bars. Some points obscure their Monte Carlo error bars. } 
    \label{fig:mle_sim}
  \end{figure}

In the left panel of Figure~\ref{fig:mle_sim}, we evaluate the performance of the EMM estimator  by comparing the root mean square error (RMSE) between the EMM coefficient estimate, $\hat{\beta}_{EMM}$, and the MLE obtained by the optimization procedure $\hat{\beta}_{MLE}$. As a baseline, we compute the RMSE between $\hat{\beta}_{MLE}$ and the true value $\beta$. If the approximations in the EMM algorithm are small, we expect the RMSE between $\hat{\beta}_{EMM}$ and $\hat{\beta}_{MLE}$ to be much smaller than the RMSE between $\hat{\beta}_{MLE}$ and $\beta$. Generally, the RMSE between 
$\hat{\beta}_{EMM}$ and $\hat{\beta}_{MLE}$ is  smaller than the RMSE between $\hat{\beta}_{MLE}$ and $\beta$. However, the discrepancy between the two RMSEs decreases as the true $\rho$ grows. As a reference, the MSE between $\hat{\beta}_{Std.\, probit}$ and $\hat{\beta}_{MLE}$ is also shown in the left panel of Figure~\ref{fig:mle_sim}; the EMM estimator is closer to $\hat{\beta}_{MLE}$ than the standard probit estimator is to $\hat{\beta}_{MLE}$ for all values of $\rho$.  Raw RMSE values between the estimators and the truth, shown in Figure~\ref{fig:mle_sim_2}, confirm that the EMM algorithm does perform  better than standard probit in RMSE with respect to estimation of $\beta$. The results of this simulation study suggest that the EMM algorithm improves estimation of $\beta$ over the standard probit estimator for $\rho > 0$, and that the EMM estimator is reasonably close to the MLE, signifying the approximations in the EMM algorithm are reasonable.  

In the right panel of Figure~\ref{fig:mle_sim},  the EMM estimator of $\rho$ is closer to the MLE, $\hat{\rho}_{MLE}$, than the MLE is close to the true value of $\rho$ for all values of $\rho$ examined. 
This fact suggests that the approximation error in estimating $\rho$ in the EMM algorithm is small.
Further, the raw RMSE values shown in Figure~\ref{fig:mle_sim_2} illustrate that $\hat{\rho}_{EMM}$ may be as good an estimator of $\rho$ as is $\hat{\rho}_{MLE}$.
The approximations in $\hat{\rho}_{EMM}$ appear to be stable over the range of $\rho$ values examined.  
Overall, since the degradation in performance of the EMM algorithm is most pronounced in estimation of $\beta$, we postulate that the degradation may be due to the approximations in computing $E[\epsilon_{jk}  \mid \y]$ (see Appendix~\ref{sec:appx_beta}).

\begin{figure}[h!]
\centering
\begin{tabular}{cc}
\includegraphics[width=.48\textwidth]{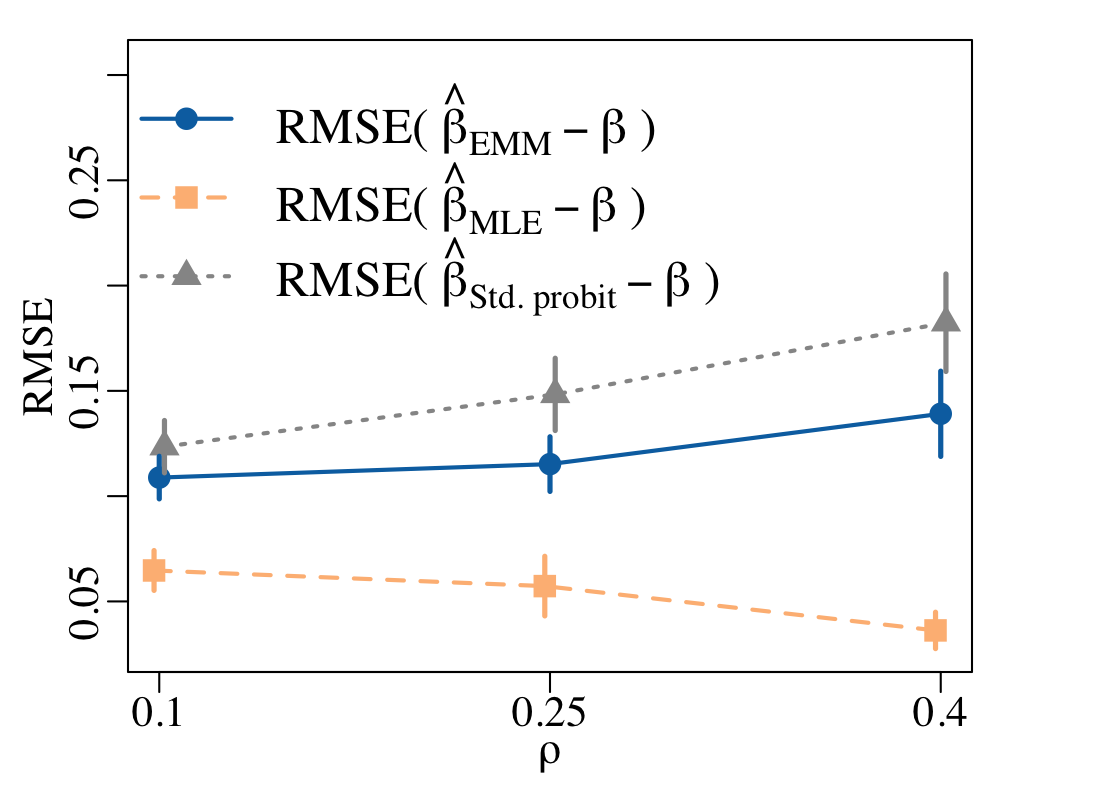}
\hspace{-.1in} & \hspace{-.1in} 
\includegraphics[width=.48\textwidth]{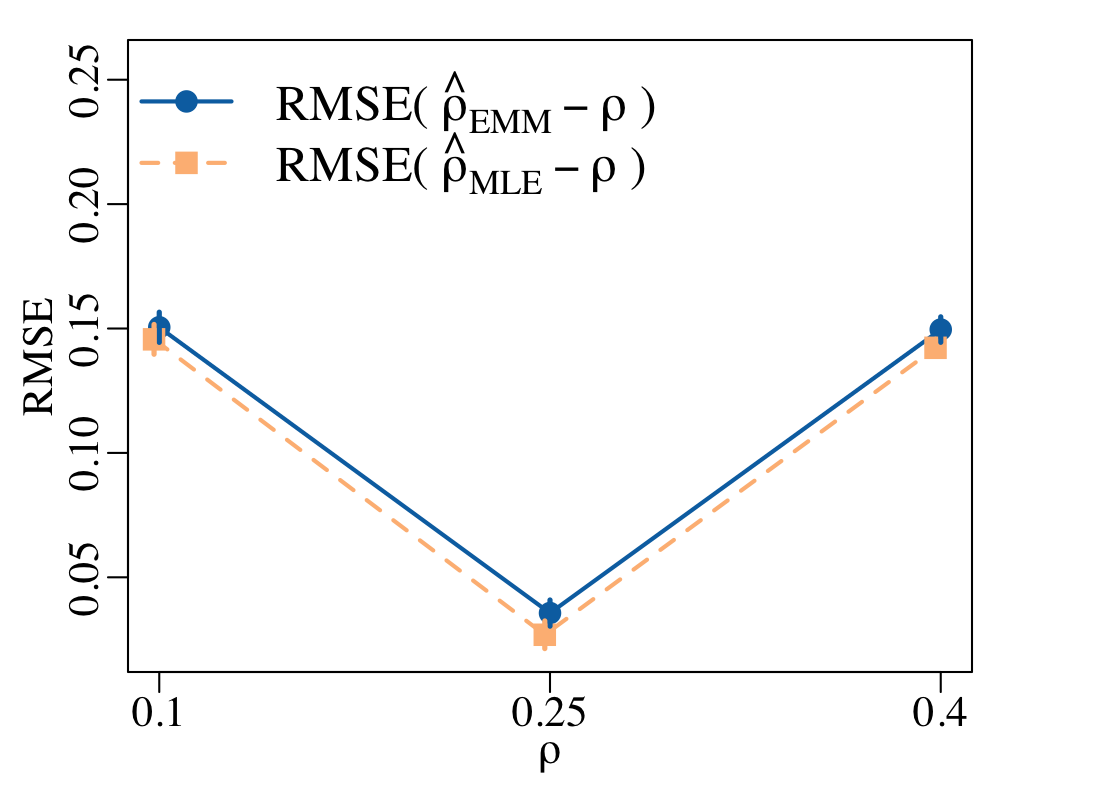} \\
\end{tabular}
  \caption{
  The left panel depicts the RMSE in estimating $\beta$ using the EMM algorithm, MLE, and standard probit regression. The right panel depicts the same for $\rho$. 
The MSEs are plotted as a function of the true values of $\rho$, and solid vertical lines denote Monte Carlo error bars.  
 } 
    \label{fig:mle_sim_2}
  \end{figure}

\subsection{Performance in estimation of $\beta$}
\label{sec:sim_beta_est}
To evaluate the performance of the PX estimator in estimating linear coefficients $\bbeta$, we compared estimates of $\bbeta$ by the EMM algorithm to estimators of the social relations and latent eigenmodels on data generated from the PX model and data generated from the latent eigenmodel. 
We used the \texttt{amen} package in \texttt{R} to estimate the social relations model and latent eigenmodel \citep{amen}. 
We again compared these estimators to the standard probit regression model assuming independence as a baseline, which we estimated using the function \texttt{glm} in \texttt{R}. We focused on the value of $\rho=0.25$, in the center of the range of possible $\rho$ values.

In Figure~\ref{fig:mse_beta_sim}, we plot the RMSE (scaled by $n^{1/2}$ ) of the $\beta$ coefficients estimated for the PX model, standard probit model, social relations model, and latent eigenmodels. We
see that the EMM estimator for the PX model has a downward trend in $n^{1/2}$RMSE with $n$, and a reducing spread of $n^{1/2}$RMSE with $n$, for both the PX and latent eigenmodel generating models. These facts suggest that the PX estimator is consistent for $\bbeta$, at a rate $n^{1/2}$ or better, for both the PX and latent eigenmodel generating models, confirming the claims in Section~\ref{sec:consistency}. 
Further, 
the EMM estimator has the lowest median $n^{1/2}$RMSE of any of the estimators for all entries in $\bbeta$, where $n^{1/2}$RMSE is evaluated for each $\X$ realization (across the error realizations) and the median is computed across the 20 $\X$ realizations. We observe similar patterns for the correlation parameter $\rho$; see Appendix~\ref{sec:sim_cons}. 
Interestingly, the superiority of the PX estimator holds whether we generate from the PX or latent eigenmodel, which suggests that any benefit in correctly specifying the latent eigenmodel is lost in the estimating routine.
The larger $n^{1/2}$RMSEs of the \texttt{amen} estimator of the social relations and latent eigenmodels
are a result of bias; see Appendix~\ref{sec:sim_cons} for bias-variance decomposition of the MSEs.

\begin{landscape}
 \begin{figure}
\centering
\begin{tabular}{ccccc}
\begin{sideways} \hspace{1.1in} \textbf{PX} \end{sideways}
& \hspace{-.01in}
\includegraphics[width=.35\textwidth]{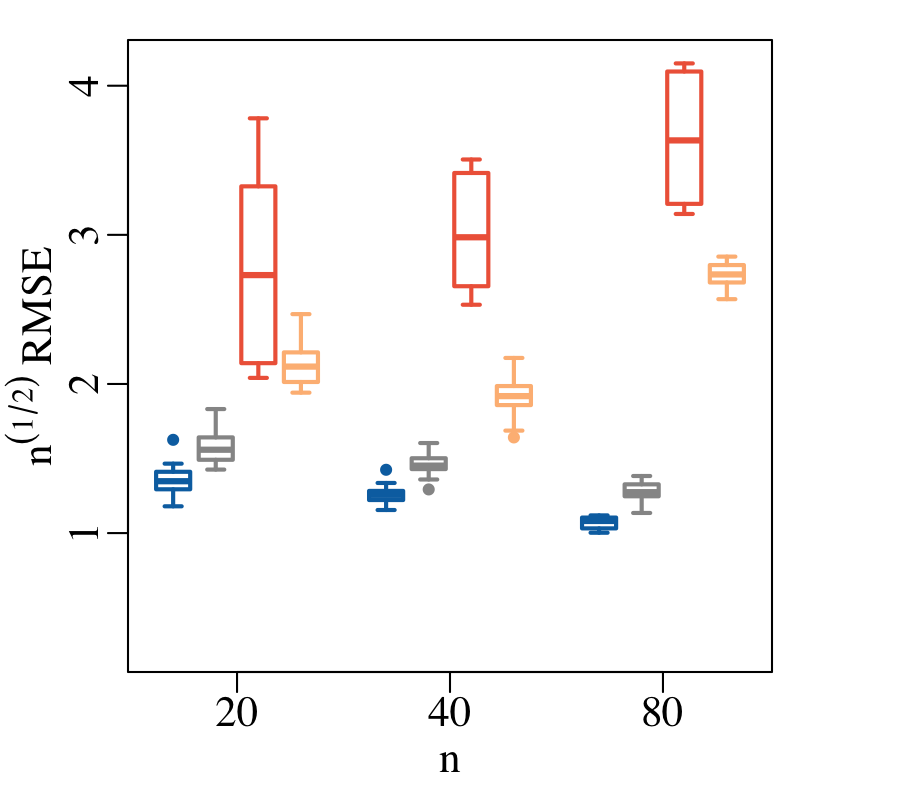}
\hspace{-.1in} & \hspace{-.1in}
\includegraphics[width=.35\textwidth]{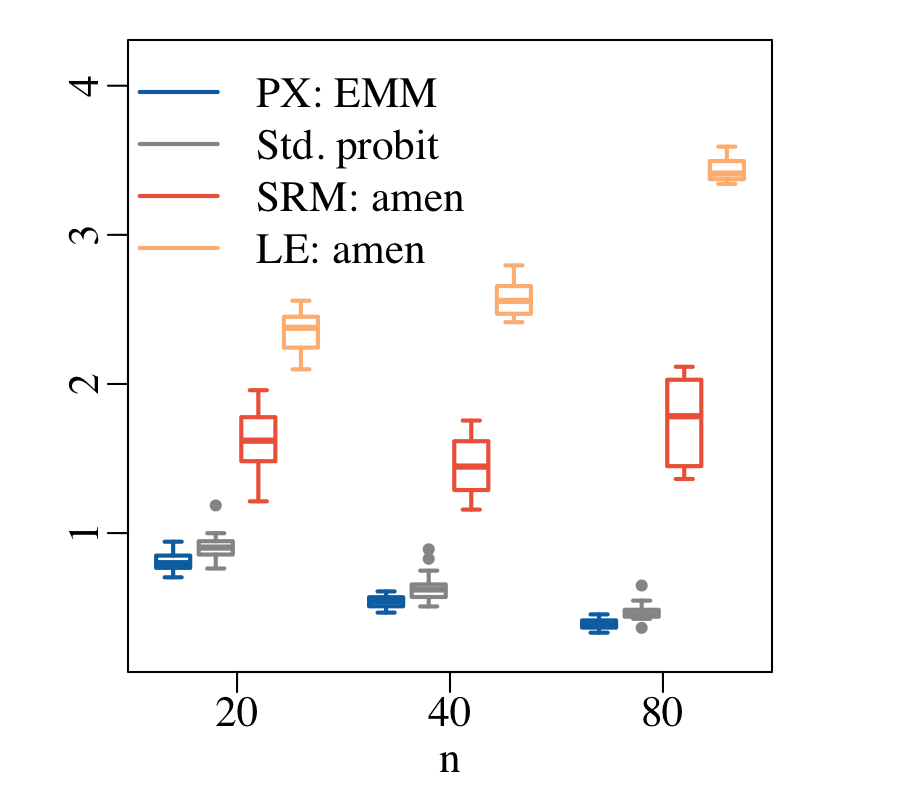}
\hspace{-.1in} & \hspace{-.1in}
\includegraphics[width=.35\textwidth]{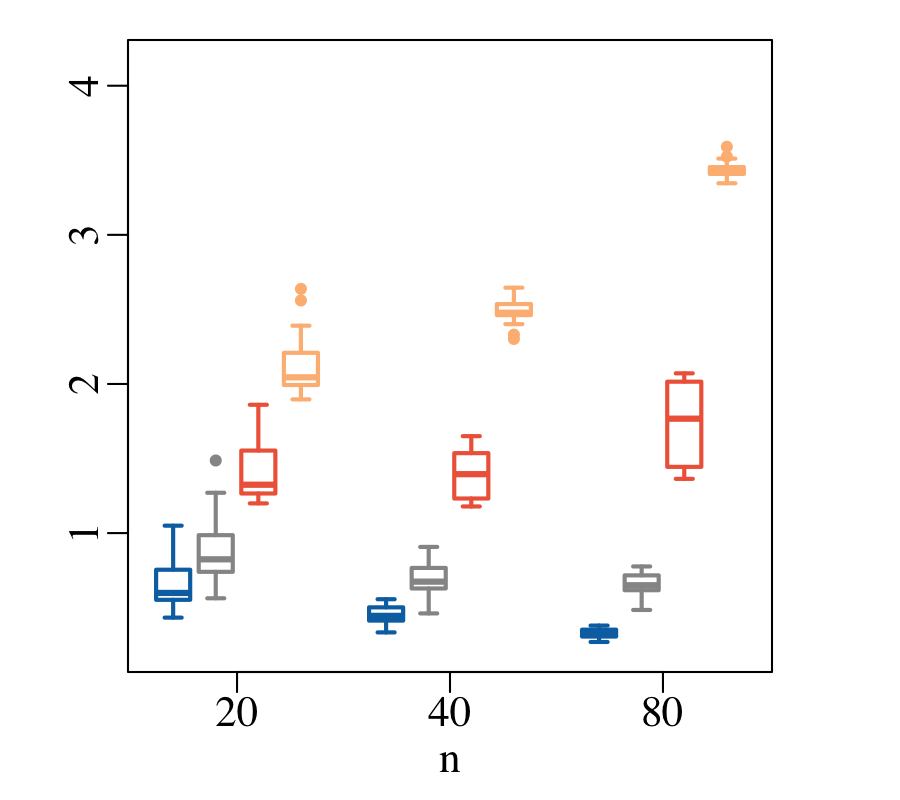}
\hspace{-.1in} & \hspace{-.1in}
\includegraphics[width=.35\textwidth]{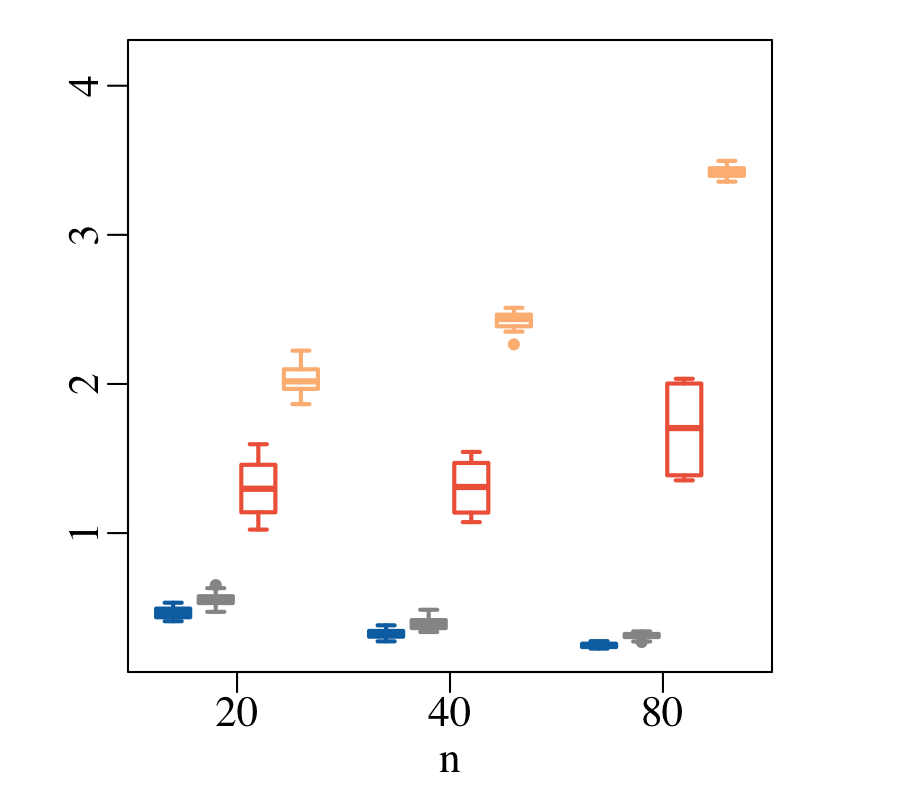} \\
\begin{sideways} \hspace{1.1in} \textbf{LE} \end{sideways}
& \hspace{-.01in}
\includegraphics[width=.35\textwidth]{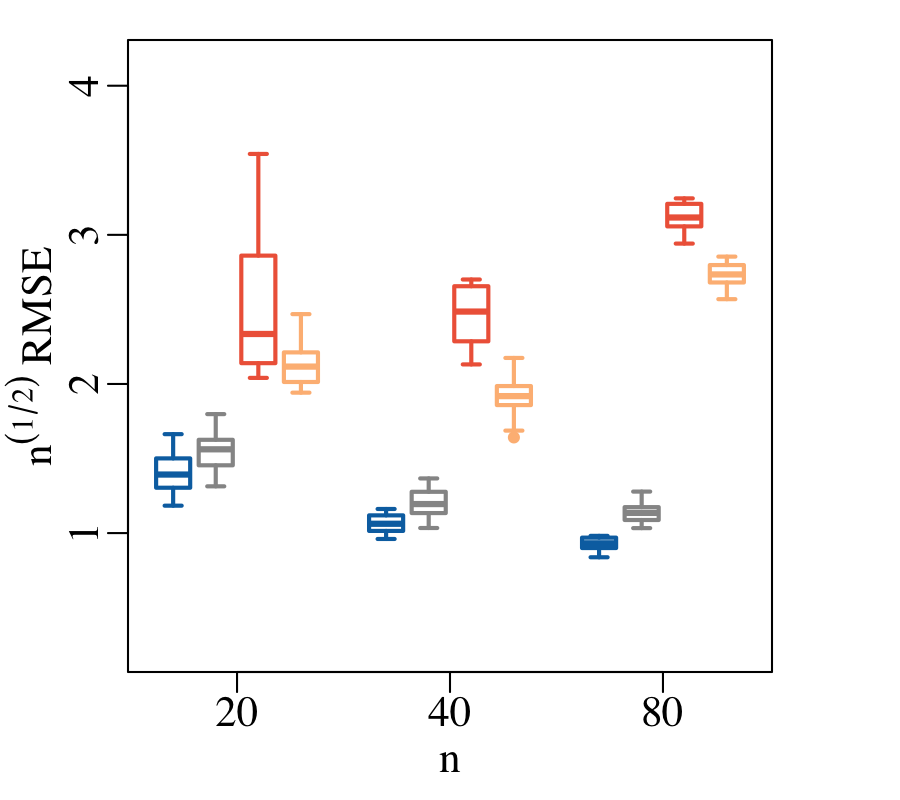}
\hspace{-.1in} & \hspace{-.1in}
\includegraphics[width=.35\textwidth]{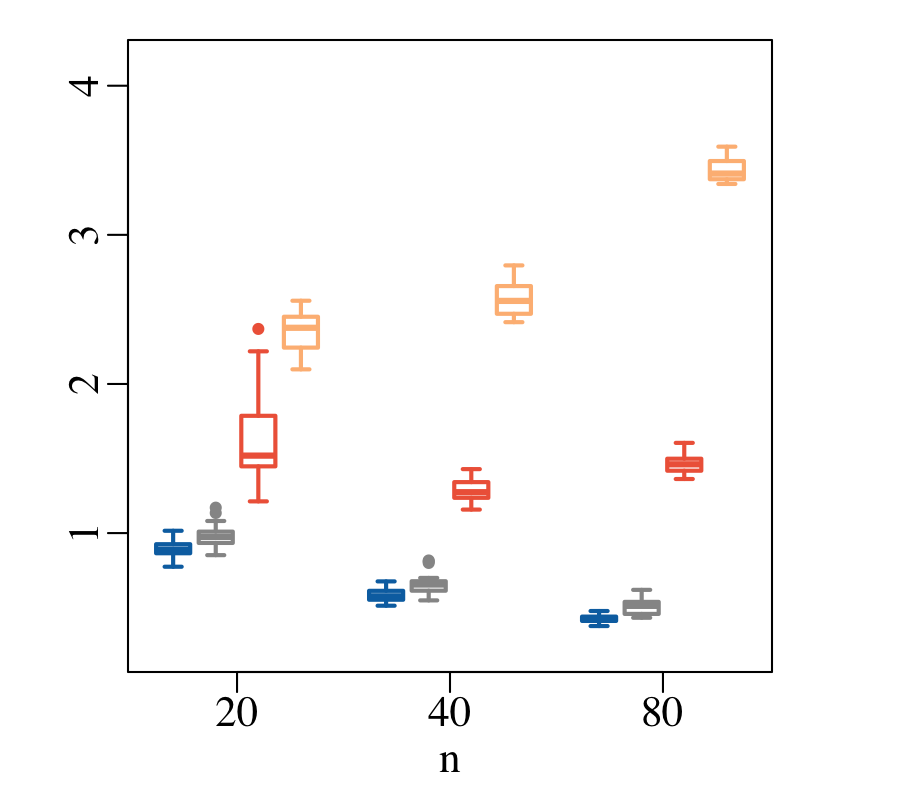}
\hspace{-.1in} & \hspace{-.1in}
\includegraphics[width=.35\textwidth]{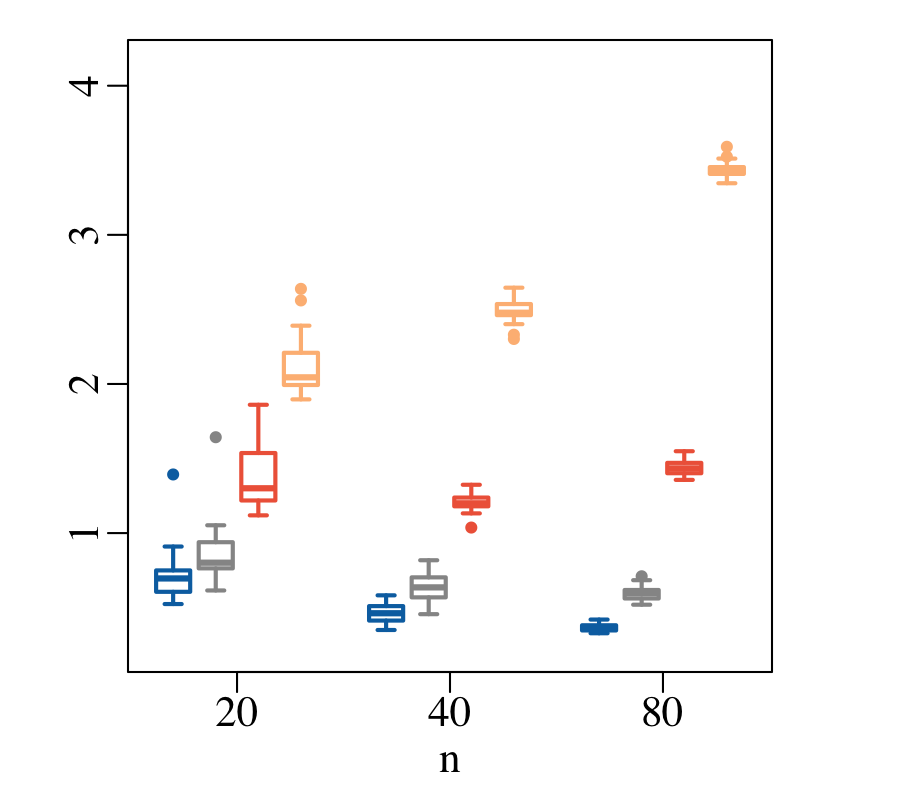}
\hspace{-.1in} & \hspace{-.1in}
\includegraphics[width=.35\textwidth]{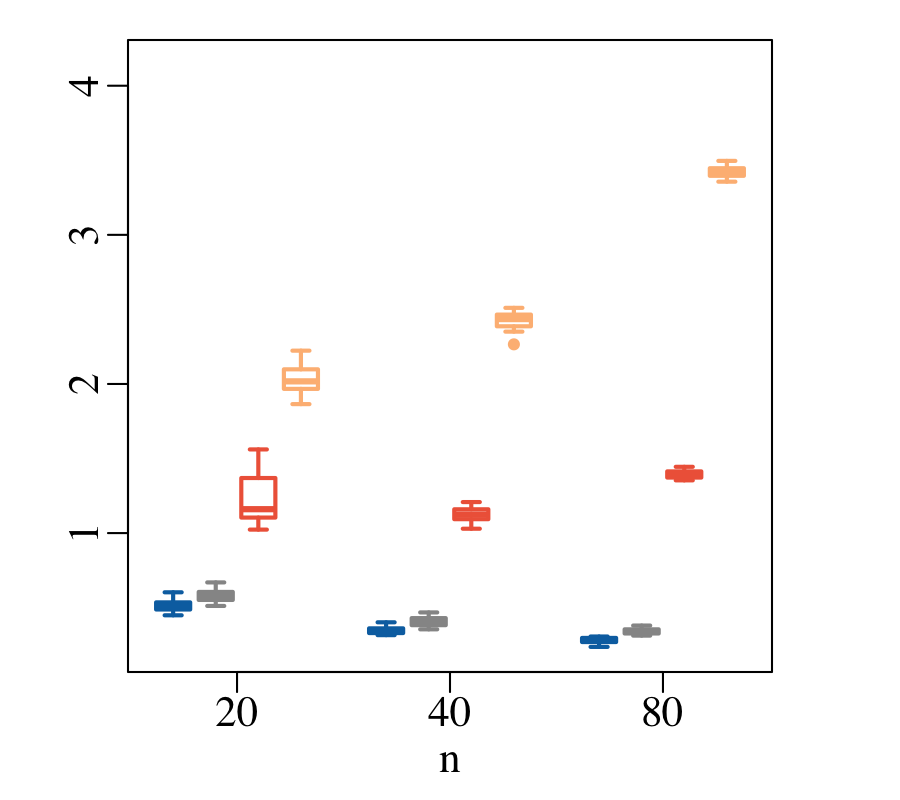} \\
\hspace{-.1in} & \hspace{-.1in}
$\beta_0$
\hspace{-.1in} & \hspace{-.1in}
$\beta_1$
\hspace{-.1in} & \hspace{-.1in}
$\beta_2$ 
\hspace{-.1in} & \hspace{-.1in}
$\beta_3$ 
\end{tabular}

\caption{Performance ($n^{1/2}$ RMSE) of estimators of ${\beta}$, for a given $\X$, when generating from the PX model (top row) and the latent eigenmodel (LE; bottom row). Variability captured by the boxplots reflects variation in RMSE with $\X$. 
} 
    \label{fig:mse_beta_sim}
  \end{figure} 
\end{landscape}

\subsection{Runtimes}
We evaluated the average runtimes of the algorithms used to estimate the simulated data. The average runtimes are plotted in Figure~\ref{fig:sim_runtime}. The improvement in runtime offered by the EMM estimation scheme over SRM and LE MCMC estimation is several orders of magnitude. Interestingly, the runtime cost of EMM appears to grow faster than the MCMC routines, and faster than standard probit regression. A contributing factor is the sum over $O(n^3)$ terms in the maximization of $\rho$ in the EMM algorithm. We have experimented with using only a random subset of $O(n^2)$ relation pairs in the maximization step, which results in gains in runtime with small cost in estimation performance. Such a tradeoff may become attractive for networks of sufficient size $n$.  

\begin{figure}[h]
\centering
\includegraphics[width=.5\textwidth]{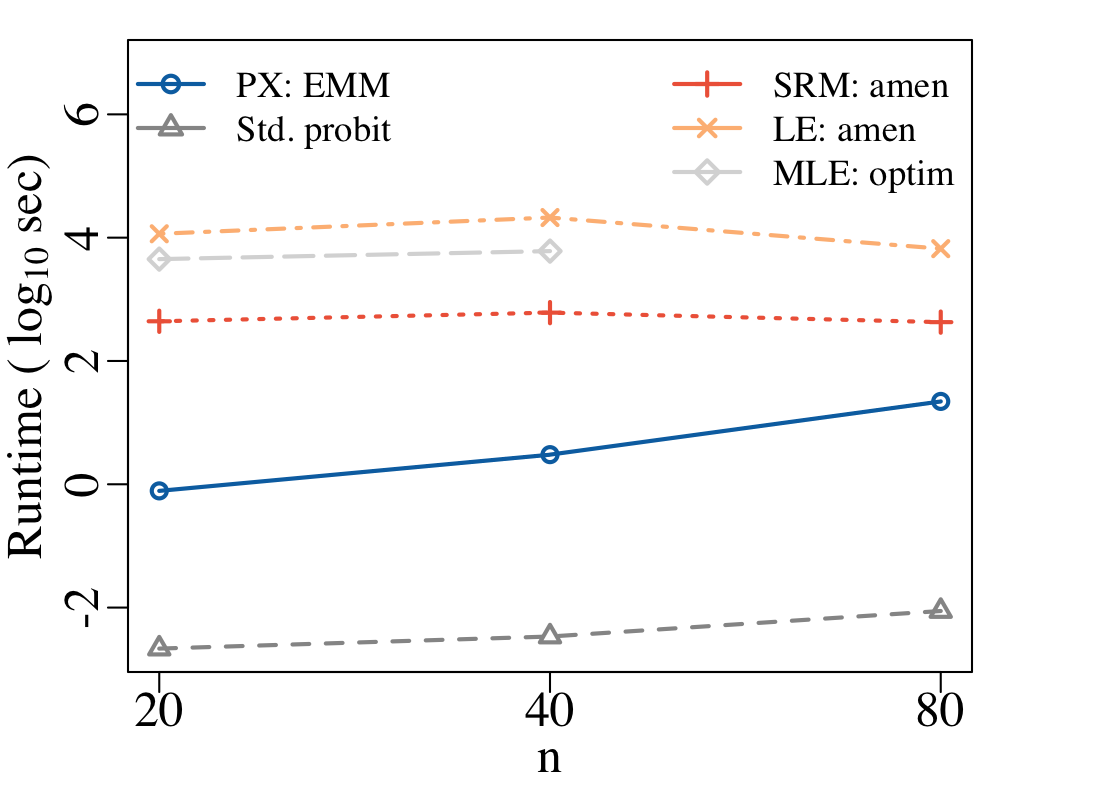}
  \caption{Average runtimes of various algorithms used on simulated data. }  
    \label{fig:sim_runtime}
  \end{figure}

\subsection{Evaluation of latent normality assumption}
To evaluate the performance of the PX model under violationof the normality assumption on the latent errors $\{ \epsilon_{ij} \}_{ij}$, we repeated the simulation study with $t$-distributed latent random variables. Specifically, we simulated from \eqref{eq:sim_gen_model}, but replaced the latent error vector $\bepsilon$ with $\sigma^{-1} \Omega^{1/2} \u$, where $\u$ consists of independently and identically distributed $t$ random variables with 5 degrees of freedom. The scaling factor $\sigma = \sqrt{5/3}$ ensures that $\u$ has unit population variance, for consistency with the Gaussian case, and $\Omega^{1/2}$ is the matrix square root of $\Omega$, with $\rho=0.25$. This model thus has the same latent mean and covariance matrix as in the Gaussian case, but the latent errors have substantially heavier tails.

\begin{figure}[h!]
\centering
\begin{tabular}{cc}
\includegraphics[width=.48\textwidth]{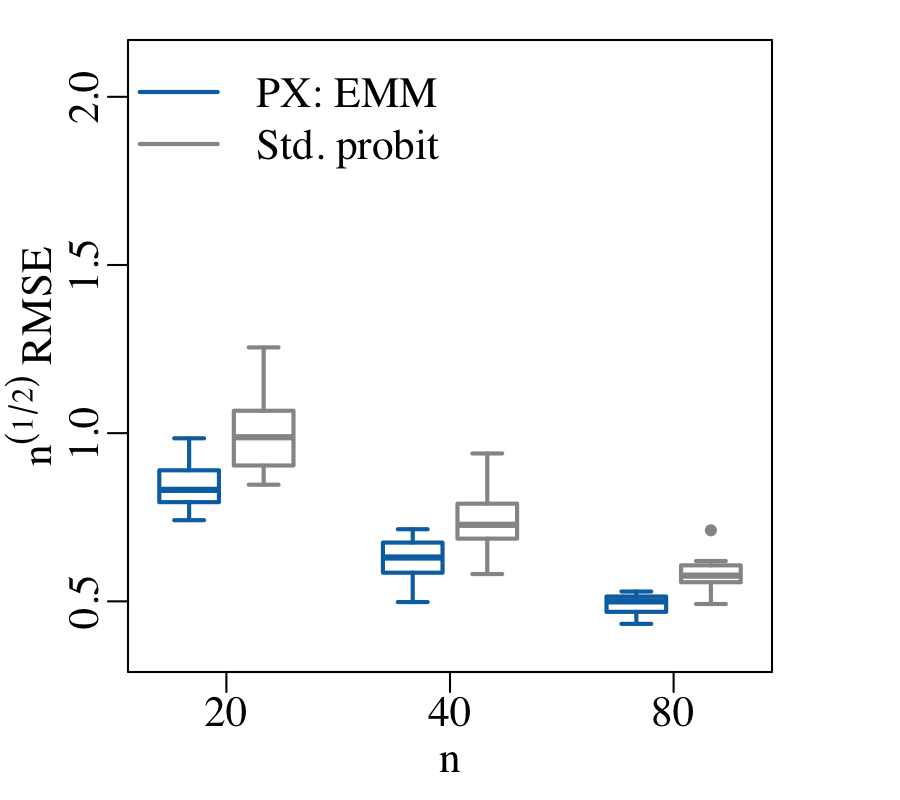}
\hspace{-.1in} & \hspace{-.1in} 
\includegraphics[width=.48\textwidth]{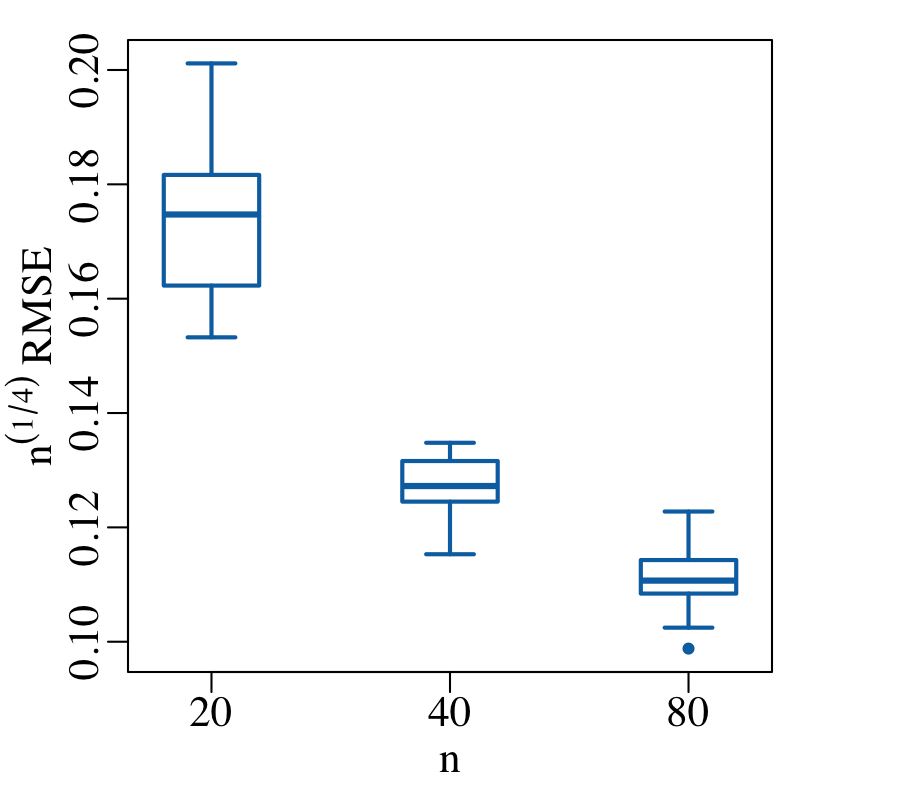} \\
\end{tabular}
  \caption{
  The left panel depicts $n^{1/2}$RMSE in estimating $\beta_1$, using the EMM algorithm and standard probit regression, under $t$ distribution of the errors. The right panel depicts $n^{1/4}$RMSE in estimating $\rho$ using the EMM algorithm in the same simulation. Variability captured by the boxplots reflects variation in RMSE with $\X$.
 } 
    \label{fig:tsim}
  \end{figure}

The left panel of Figure~\ref{fig:tsim} shows the performance of the EMM algorithm in estimating $\bbeta_1$, compared to the standard probit regression estimates. As in the Gaussian case, the EMM algorithm produces estimates with $n^{1/2}$RMSE tending to zero as $n$ grows. Also as in the Gaussian case, EMM estimation of the PX model improves estimation of $\bbeta$ over standard probit regression. We observed the same results in estimation of the remaining coefficients (see Appendix~\ref{sec:t_appx}). 
Unlike the Gaussian case, $n^{1/2}$RMSE in estimating $\rho$ did not appear to tend towards zero. However, in the right panel of Figure~\ref{fig:tsim}, the error in estimating $\rho$ scaled by $n^{1/4}$, $n^{1/4}$ RMSE, does tend towards zero. This study confirms the claim in Section~\ref{sec:consistency} that the EMM algorithm prdouces consistent estimators $\{\hat{\bbeta}, \hat{\rho} \}$, even under violation of the normality assumption of the PX model.

\section{Analysis of a network of political books}
\label{sec:data}

We live in a time of political polarization. We investigate this phenomenon by analyzing a network of $n=105$ books on American politics published around the time of the 2004 presidential election\footnote{These unpublished data were compiled by Dr. Valdis Krebs for his website \url{http://www.orgnet.com/} and are hosted, with permission, by Dr. Mark Newman at \url{http://www-personal.umich.edu/~mejn/netdata/polbooks.zip}}. 
These data were compiled by Dr. Valdis Krebs using the ``customers who bought this book also
bought these books'' list on Amazon.com. At the time, when browsing a particular book, Amazon listed the books that were bought by individuals who also bought the book in question.  Thus, a relation between two books in the network indicates that they were frequently purchased by the same buyer on Amazon. 
Political books on the best-seller list of The New York Times were used as actors in the network.
Finally, the books were labelled as conservative, liberal, or neutral based on each book's description (Figure~\ref{fig:net_pb}). Work by Dr. Krebs on a similar network was described in a 2004 \emph{New York Times} article \citep{eakin2004study}, where it was shown that there were many relations between books with similar ideologies yet relatively few across ideologies. The work by Dr. Krebs has inspired similar analyses of book purchasing networks in the fields of nanotechnology \citep{schummer2005reading} and climate science \citep{shi2017millions}.

To confirm previous work by Dr. Krebs, we develop a model that assigns a different probability of edge formation between books $i$ and $j$ depending on whether the books are ideologically aligned. By examining the network in Figure~\ref{fig:net_pb}, we observe that neutral books appear to have fewer ties than books that are labelled conservative or liberal. Thus, we add a nodal effect indicating whether  either book in a relation is labelled neutral. 
The regression model specified is
\begin{align}
\P(y_{ij}=1) &= \P (\beta_0 + \beta_1 \mathbf{1}[c(i) = c(j)] \nonumber \\
&\hspace{.25in}+ \beta_2\mathbf{1} \left[ \left\{c(i) = \text{neutral} \right\} \, \cup \, \left\{ c(j) = \text{neutral} \right\} \right] +  \bepsilon_{ij} > 0 ),    \quad \label{eq_model_pb} \\
\nonumber \bepsilon &\sim (\mbf{0}, \bSigma),
\end{align}
where $c(i)$ represents the class of book $i$ (neutral, conservative, or liberal) and the distribution and covariance matrix of $\bepsilon$ are determined by the particular model being estimated. In this section, we estimate the PX model (PX), the equivalent social relations model (SRM), the latent eigenmodel (LE), and, as a baseline, 
the standard probit regression model assuming independence of observations (which we label ``std. probit'').

\begin{figure}
\centering
\begin{tabular}{cc}
\includegraphics[width=.44\textwidth]{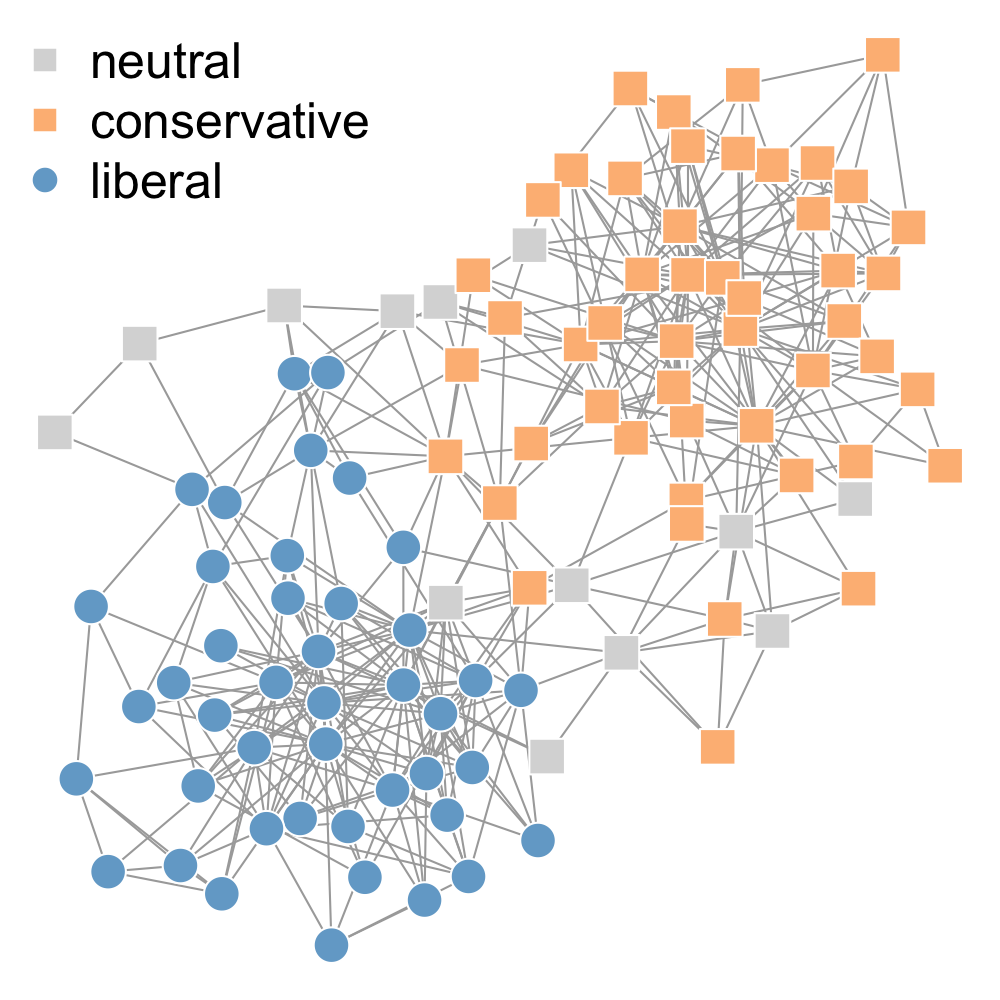}
&
\includegraphics[width=.44\textwidth]{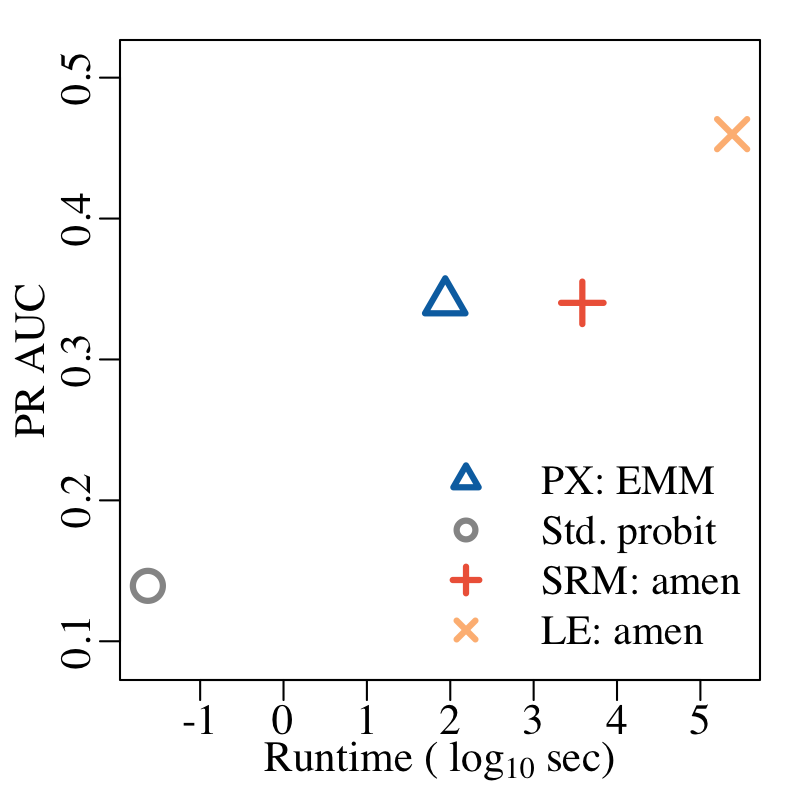}
\end{tabular}
  \caption{Krebs' political books network (left) and out-of-sample performance in 10-fold cross validation, as measured by area under the precision-recall curve (PRAUC, right), plotted against mean runtime in the cross validation. The estimators are standard probit assuming independnet observations (Std. probit), the PX model as estimated by the EMM algorithm (PX), the social relations model estimator (SRM), and the latent eigenmodel estimator (LE). }  
    \label{fig:net_pb}
  \end{figure}

We used a 10-fold cross validation to compare the out-of-sample predictive performance of the estimators and the runtimes of the algorithms for the models in question. We used the proposed EMM algorithm to estimate the PX model, the \texttt{amen} package in \texttt{R} to estimate the social relations model and latent eigenmodel \citep{amen}, and the  \texttt{glm(.)} command in the \texttt{R} package \texttt{stats} to estimate the standard probit model.
We randomly divided the $\binom{105}{2}$ relations into 10 disjoint sets, termed ``folds'',  of roughly the same size. Then, for each fold, we estimated the models on the remaining nine folds and made predictions for the data in the fold that was not used for estimation (for details of estimation of the PX model with missing data, see Appendix~\ref{sec:missing}). Repeating this operation for each fold gave a complete data set of out-of-sample predictions for each estimating model. 
The procedure to make marginal predictions from the PX model is described in Section~\ref{sec:pred_proc}. To compare with the PX model, we make marginal predictions from the social relations model and the latent eigenmodel, that is, by integrating over the random effect space. The predictions from the social relations model and the latent eigenmodel are automatically output from \texttt{amen} in the presence of missing data. The predictions from the standard probit model are marginal by default as there is no correlation structure.

We use area under the precision recall curve (PRAUC) to measure performance of the predictions relative to the observed data, although using area under the receiver operating characteristic (ROC) yields the same conclusions (see Appendix~\ref{sec:app_data}).
In Figure~\ref{fig:net_pb}, the proposed EMM estimator produces an improvement in PRAUC over standard probit prediction that is roughly equivalent to the improvement of the social relations model over standard probit, yet with an average runtime that is 45 times faster (about a minute compared with an hour). 
The latent eigenmodel produces an improvement in PRAUC over the proposed EMM algorithm and the social relations model, however, at the expense of significant increase in average runtime, that of about 3,000 times slower than EMM and taking almost three days to complete. 
Note that we selected the number of MCMC iterations for the social relations and latent eigenmodels that resulted in sets of samples from the posterior distributions (after burn-in) that had a effective sample sizes roughly equal to 100 independent samples of the $\bbeta$ parameters. Increasing the number of iterations, which may be desirable, would result in even longer runtimes for the estimators of the social relations and latent eigenmodels.
Taken together, the results of the cross validation study suggest that the PX model accounts for a large portion of the correlation in network data with estimation runtime that, depending upon stopping criterion, is orders of magnitude faster the runtime than existing approaches.

To estimate the complete data set under the mean model in \eqref{eq_model_pb}, we used the EMM algorithm for the PX model and the \texttt{amen} package for the social relations model (SRM) and latent eigenmodel (LE), which we ran for $1\times10^6$ iterations after a burn in of $5 \times 10^4$ iterations (with runtimes of roughly two hours for SRM and 17 hours for LE). The coefficient estimates in Table~\ref{tab:pb} suggest that books that share the same ideology are more likely to be frequently purchased together, as all $\hat{\beta}_1 > 0$. This positive coefficient estimate demonstrates political polarization in the network: conservative books are more likely to be purchased with other conservative books rather than with liberal books.  
The second coefficient estimate, $\hat{\beta}_2 > 0$, suggests that, relative to a random pair of ideologically misaligned books, pairs of books where at least one of the books is neutral are more likely to be purchased together. 
Neutral books are thus generally more likely to be purchased with books of disparate ideologies, and 
have a unifying effect in the book network. Returning briefly to Table~\ref{tab:pb}, the runtimes highlight that EMM reduces computational burden by order(s) of magnitude over existing approaches. 

\begin{table}[h!]
\centering
\caption{Results of fitting the Krebs political books data using the EMM estimator for the PX model and the \texttt{amen} estimator for the social relations and latent eigenmodels (SRM and LE, respectively). Point estimates for the coefficients are given to the left of the vertical bar, and runtimes (in seconds) and minimum effective sample sizes across the coefficient estimates are given to the right. }
\label{tab:pb}
\begin{tabular}{rcrcrcrc|crcr}
 & & $\hat{\beta}_0$ 
 & \hspace{.1in} &
 $\hat{\beta}_1$ 
 & \hspace{.1in} &
 $\hat{\beta}_2$ 
 & \hspace{.1in} & \hspace{.1in} & 
 runtime (s) 
 & \hspace{.1in} & 
 min$(ESS)$ \\ 
  \hline
PX: EMM &\hspace{.1in} & -1.87 & & 1.21 && 1.12  &&& 68 && -- \\ 
  SRM: amen & & -2.70 & & 0.98  && 1.55 &&& 7984 && 195 \\ 
  LE: amen & & -3.90  & & 1.63 && 2.06 &&& 62565 && 26 \\
\end{tabular}
\end{table}

\section{Discussion}
\label{sec:disc}

In this paper we present the PX model, a probit regression model for undirected, binary networks. The PX model adds a single parameter -- latent correlation $\rho$ -- to the ordinary probit  regression model that assumes independence of observations.  
Our focus in this paper is estimation of the effects of exogenous covariates on the observed network, $\bbeta$, and prediction of unobserved network relations. Thus, we do not present uncertainty estimators for $\hat{\bbeta}$ or $\hat{\rho}$. However, practitioners estimating the PX model may require uncertainty estimators to perform inference. Development and evaluation of estimators of the uncertainty in estimators of network data  is non-trivial; indeed, entire papers are dedicated to this task for the simpler linear regression case (see, for example, \cite{aronow2015cluster, marrs2017standard}). Future development of uncertainty estimators for the PX model may draw upon existing literature for uncertainty in EM estimators \citep{louis1982finding} and the numerical approximations in this paper. 

A popular notion in the analysis of network data is the presence of higher-order dependencies, meaning beyond second order \citep{hoff2005bilinear}.  The representation of triadic closure, a form of transitivity -- the friend of my friend is likely to also be my friend -- is one motivation for the latent eigenmodel \citep{hoff2008modeling}. The PX model does represent triadic closure to a 
degree. One can show that, given two edges of a triangle relation exist, $y_{ij} = y_{jk} = 1$, the probability that the third edge exists, $\P(y_{ik} = 1)$, increases as $\rho$ increases. 
However, the increase in probability describing triadic closure under the PX model is fixed based on the estimated value of $\rho$, which is informed only by the first two moments of the data when using the EMM estimator. It may be desirable to develop a test for whether the PX model sufficiently represents the level of triadic closure as suggested by the data. One such test might compute the empirical probability that $\P(y_{ik} = 1 \, | \, y_{ij} = y_{jk} = 1)$  and compare this statistic to its distribution under the null that the PX model is the true model with correlation parameter $\rho = \hat{\rho}$. Future work consists in theoretical development of the distributions of the test statistic(s) of choice under the null. Statistics of interest will likely be related to various clustering coefficients in the networks literature \citep{wasserman1994social, watts1998collective}.

We focus on the probit model in this paper.  
However, we find that this choice may limit the degree of covariance in the observed network $\{ y_{ij} \}_{ij}$ that the PX model can represent. For constant mean $\x_{ij}^T \bbeta = \mu$, the maximum covariance the PX model can represent is bounded by
\begin{align}
cov[y_{ij}, y_{ik}] 
\, \le \,
\lim_{\rho \rightarrow 1/2} \int_{-\mu}^\infty \int_{-\mu}^\infty dF_\rho - \Phi(\mu)^2, \quad \label{eq:px_corr_ub}
\end{align}
where $dF_\rho$ is the bivariate standard normal distribution with correlation $\rho$. The use of different latent distributions for $\bepsilon$ other than normal may allow a model analogous to the PX model to represent a larger range of observed covariances $cov[y_{ij}, y_{ik}]$. 
Future work may consider a logistic distribution for $\bepsilon$, as some researchers prefer to make inference with logistic regression models for binary data due to the ease of interpretation. 

\section*{Acknowledgements}
 This work utilized the RMACC Summit supercomputer, which is supported by the National Science Foundation (awards ACI-1532235 and ACI-1532236), the University of Colorado Boulder, and Colorado State University. The RMACC Summit supercomputer is a joint effort of the University of Colorado Boulder and Colorado State University. This work was also partially supported by the National Science Foundation under Grant no. 1856229. \\

\bibliographystyle{nws}
\bibliography{binary_bib}

\clearpage
\appendix

\pagenumbering{arabic} 
\setcounter{page}{1}

\section{Details of estimation}
\label{sec:est_details}
In this section we supply details of estimation in support of Algorithm~\ref{alg:sub}, beginning with the initialization of $\rho$. 
We then provide details of computing the expectations of $\ell_\y$ need for $\beta$ maximization, and then details of computing the expectations of $\ell_\y$ need for $\rho$ maximization. We close the section with the handling of missing data in the EMM algorithm.

\subsection{Initialization of $\rho$ estimator}
\label{sec:rho_init}
An EM algorithm may take many iterations to converge, and  selecting a starting point near the optima may significantly reduce the number of iterations required. We present a method of initializing $\hat{\rho}^{(0)}$ using a mixture estimator. By examining the eigenvalues of $\bOmega$, it can be shown that $\rho$ lies in the interval $[0, 1/2)$ when $\bOmega$ is positive definite  for arbitrary $n$ \citep{marrs2017standard}. Thus 
$\hat{\rho} = 0.25$ is a natural naive initialization point as it is the midpoint of the range of possible values. However, we also allow the data to influence the initialization point by taking a random subset  $\s{A}$ of $\Theta_2$ of size $2n^2$, 
and estimating $\rho$ using the data corresponding to relations in $\s{A}$. Then, the final initialization point is defined as a mixture between the naive estimate $\hat{\rho} = 0.25$ and the estimate based on the data. We weight the naive value as if it arose from $100n$ samples, such that the weights are even at $n=50$, and for increasing $n$, the data estimate dominates:
\begin{align}
\hat{\rho}^{(0)} = \frac{100n}{4(100n + |\s{A}|)} + \frac{|\s{A}|}{ ( 100n + |\s{A}|)} \left( \frac{1}{|\s{A}|}\sum_{jk,lm \in \s{A}}  E[ \epsilon_{jk} \epsilon_{lm} \, | \, y_{jk}, y_{lm} ] \right). \quad
\end{align}
We compute the average $\frac{1}{|\s{A}|}\sum_{jk,lm \in \s{A}}  E[ \epsilon_{jk} \epsilon_{lm} \, | \, y_{jk}, y_{lm} ]$
using the linearization approach described in Section~\ref{sec:rho_linear_appx}.

\subsection{Implementation of $\beta$ expectation step}
\label{sec:appx_beta}

Under general correlation structure, computation of the expectation $E[\bepsilon \, | \, \y]$ (step 1 in Algorithm~\ref{alg:sub}, where we drop conditioning on $\rho^{(\nu)}$ and $\bbeta^{(\nu)}$ to lighten notation) for even small networks is prohibitive, since this expectation is an $\binom{n}{2}$-dimensional truncated multivariate normal integral. We exploit the structure of $\bOmega$ to compute $E[\bepsilon \, | \, \y]$ using the law of total expectation and a Newton-Raphson algorithm.

First, we take a single relation $jk$ and use the law of total expectation to write
\begin{align}
E[\epsilon_{jk}\, | \, \y ] &= E[ E[\epsilon_{jk} \, | \, {\bepsilon}_{-jk},  y_{jk}] \, | \, \y ],  \quad   \label{eq_condl_exp_undir_binary}
\end{align}
where ${\bepsilon}_{-jk}$ is the vector of all entries in $\bepsilon$  except relation $jk$. Beginning with the innermost conditional expectation, the distribution of $\epsilon_{jk}$ given ${\bepsilon}_{-jk}$ and $y_{jk}$ is truncated univariate normal, where the untruncated normal random variable has the mean and variance of $\epsilon_{jk}$ given  ${\bepsilon}_{-jk}$. Based on the conditional multivarite normal distribution and the form of the inverse covariance matrix $\bOmega^{-1} = \sum_{i=1}^3 p_i \s{S}_i$, we may write the untruncated distribution directly as
\begin{align}
&\epsilon_{jk} \, | \, {\bepsilon}_{-jk} \sim {\rm N}(\mu_{jk}, \sigma^2_n), \label{eq_condl_distr_undir}\\
\mu_{jk} &= - {\sigma_n^2} \mathbf{1}_{jk}^T \left( p_2 \s{S}_2 + p_3 \s{S}_3 \right) \tilde{\bepsilon}_{-jk}, \nonumber \\
\sigma^2_n &= \frac{1}{p_1 }, \nonumber
\end{align}
where $\mathbf{1}_{jk}$ is the vector of all zeros with a one in the position corresponding to relation $jk$ and, for notational purposes, we define $\tilde{\bepsilon}_{-jk}$ as the vector $\bepsilon$ except with a zero in the location corresponding to relation $jk$. We note that the diagonal of the matrix $p_2 \s{S}_2 + p_3 \s{S}_3$ consists of all zeros so that $\mu_{jk}$ is free of $\epsilon_{jk}$.

We now condition on $y_{jk}$. For general $z \sim {\rm N}(\mu, \sigma^2)$ and $y = \mathbbm{1}{[z > -\eta]}$  we have that 
\begin{align}
E[z \, | \, y ] &=
\mu + \sigma \frac{\phi(\tilde{\eta})}{\Phi(\tilde{\eta}) (1 - \Phi(\tilde{\eta}))} (y - \Phi(\tilde{\eta}) ), \quad 
\end{align}
where $\tilde{\eta} := (\eta + \mu) / \sigma$.
Now, taking $z = ( \epsilon_{jk} \, | \, {\bepsilon}_{-jk} )$, we have that
\begin{align}
E[\epsilon_{jk} \, | \, {\bepsilon}_{-jk}, y_{jk} ] &= \mu_{jk} + \sigma_n 
\left( \frac{\phi(\tilde{\mu}_{jk} ) \left( y_{jk} - \Phi(\tilde{\mu}_{jk}) \right) }{\Phi(\tilde{\mu}_{jk}) (1 - \Phi(\tilde{\mu}_{jk}) )} \right), \quad 
\label{innerEXP}
\end{align}
where $\tilde{\mu}_{jk} : = (\mu_{jk}+ \x^T_{jk} \bbeta) / \sigma_n$. 

We now turn to the outermost conditional expectation in \eqref{eq_condl_exp_undir_binary}. Substituting the expression for $\mu_{jk}$ into \eqref{innerEXP}, we have that
\begin{align}
E[\epsilon_{jk}\, | \, \y ] &= -{\sigma_n^2} \mathbf{1}_{jk}^T \left( p_2 \s{S}_2 + p_3 \s{S}_3 \right)  E[\bepsilon \, | \, \y ]
+ \sigma_n E \left[ 
\frac{\phi(\tilde{\mu}_{jk} ) \left( y_{jk} - \Phi(\tilde{\mu}_{jk}) \right) }{\Phi(\tilde{\mu}_{jk}) (1 - \Phi(\tilde{\mu}_{jk}) )} 
 \, \Big| \, \y  \right]. \quad  \label{eq:complete_w}
\end{align}
This last conditional expectation is difficult to compute in general. 
Thus, in place of $\tilde{\mu}_{lm}$, we substitute its conditional expectation $E[\tilde{\mu}_{lm} \, | \, \y ]$. Letting $w_{lm} := E[\epsilon_{lm}\, | \, \y ]$ and $\w$ be the vector of the expectations $\{ w_{lm} \}_{lm}$, we define the following nonlinear equation for $\w$:
\begin{align}
0 \approx g(\w) := (-\I + \B) \w + \sigma_n \left( 
\frac{\phi(\tilde{\w} ) \left( \y - \Phi(\tilde{\w}) \right) }{\Phi(\tilde{\w}) (1 - \Phi(\tilde{\w}) )} \right), \quad \label{eq:appx_gfunc}
\end{align}
where we define $\B := -{\sigma_n^2} \left( p_2 \s{S}_2 + p_3 \s{S}_3 \right) $, $\tilde{\w} : = ( \B \w + \X \bbeta)  / \sigma_n$, and the functions $\phi(.)$ and $\Phi(.)$ are applied element-wise. The approximation in \eqref{eq:appx_gfunc} refers to the approximation made when replacing $\tilde{\mu}_{jk}$ with its conditional expectation $E[\tilde{\mu}_{jk} | \y ]$. We use a Newton-Raphson algorithm to update $\w$ \citep{atkinson2008introduction}, initializing the algorithm using the expectation when $\rho=0$,
\begin{align}
\w_0 := 
\frac{\phi( \X\bbeta ) \left( \y - \Phi(\X\bbeta) \right) }{\Phi(\X\bbeta) (1 - \Phi(\X\bbeta) )}. \quad 
\end{align}    
The Newton-Raphson algorithm re-estimates $\w$ based on the estimate at iteration $\nu$, $\hat{\w}^{(\nu)}$, until convergence:
\begin{align}
\hat{\w}^{(\nu+1)} = \hat{\w}^{(\nu)} - \left(\frac{\partial}{\partial \w^T} g(\hat{\w}^{(\nu)}) \right)^{-1} g(\hat{\w}^{(\nu)}).  \quad  \label{eq:newton_update}
\end{align}

The inverse in \eqref{eq:newton_update} is of a matrix that is not of the form $\sum_{i=1}^3 a_i \S_i$. To reduce the computational burden of the Netwon method updates, we numerically approximate the inverse in \eqref{eq:newton_update}. First, we define $v(w_{jk}) = \sigma_n \frac{\phi( w_{jk} ) ( y_{jk} - \Phi(w_{jk}) ) }{\Phi(w_{jk}) (1 - \Phi(w_{jk}) )}$, where we define the vector $v(\w) = \{ v(w_{jk}) \}_{jk}$, and write the derivative
\begin{align}
\frac{\partial}{\partial \w^T} g(\w) &= \B - \I + \D\B. \quad \label{eq:g_deriv}
\end{align}
where we define 
\begin{align}
\D = {\rm diag}
\left \{ \frac{-w_{jk} \phi_{jk}(y_{jk}  - \Phi_{jk}) - \phi_{jk}^2 - \phi_{jk}^2(y_{jk} - \Phi_{jk})(1 - 2\phi_{jk} \Phi_{jk})}{\Phi_{jk}(1 - \Phi_{jk})} \right\}_{jk}. \nonumber
\end{align}
where we let $\phi_{jk} = \phi(w_{jk})$ and $\Phi_{jk} = \Phi(w_{jk})$.
The term $\D \B$ arises from differentiating $v(\w)$ with respect to $\w$. Using the expression in \eqref{eq:g_deriv}, we are then able to write the second term in \eqref{eq:newton_update} as
\begin{align}
 \left(\frac{\partial}{\partial \w^T} g(\hat{\w}) \right)^{-1} g(\hat{\w}) &= \left( \B - \I + \D\B \right)^{-1}\left( (\B - \I)\w + v(\w) \right), \nonumber \\
 &=  \B^{-1} \left( \I + \D - \B^{-1} \right)^{-1} \left( (\B - \I)\w + v(\w) \right). \quad \label{eq:g_partial_newton}
\end{align}
We notice that the matrix $\I + \D$ is diagonal, but not homogeneous (in which case we compute \eqref{eq:g_partial_newton} directly, with limited computational burden, by exploiting the exchangeable structure). Instead, defining $\Q = (1 + \delta)\I - \B^{-1}$ and $\M = \D - \delta \I $, which is diagonal, we make the approximation that
\begin{align}
\left( \I + \D - \B^{-1} \right)^{-1} &= \left(\Q + \M \right)^{-1} \approx \Q^{-1} - \Q^{-1} \M \Q^{-1}, \nonumber
\end{align}
which is based on a Neumann series of matrices and relies on the absolute eigenvalues of $\M$ being small \citep{petersen2008matrix}.  We choose $\delta$ to be the mean of the minimum and maximum value of $\D$. This choice of $\delta$ minimizes the maximum absolute eigenvalue of $\M$, and thus limits the approximation error. Since the inverse of $\Q$ may be computed using the exchangeable inversion formula discussed in Appendix~\ref{sec:undir_cov_mat} (in $O(1)$ time), the following approximation represents an improvement in computation from $O(n^3)$ to $O(n^2)$ time:
\begin{align}
 \left(\frac{\partial}{\partial \w^T} g(\hat{\w}) \right)^{-1} g(\hat{\w}) &\approx  \B^{-1} \left(\Q^{-1} - \Q^{-1} \M \Q^{-1} \right) \left( (\B - \I)\w + v(\w) \right). \nonumber
\end{align}

\subsection{Approximation to $\rho$ expectation step}
\label{sec:rho_linear_appx}
The maximization of the expected likelihood with respect to  $\rho$ relies on the computation of $\gamma_i = E[\bepsilon^T \S_{i} \bepsilon \, | \, \y ] / |\Theta_i|$, for $i \in \{ 1,2,3\}$ (step 2 in Algorithm~\ref{alg:sub}).
Under general correlation structure, computation of the expectation $\{ \gamma_i \}_{i=1}^3$ for even small networks is prohibitive. To practically compute $\{ \gamma_{i} \}_{i=1}^3$, we make two approximations, which we detail in the following subsections: (1) compute expectations conditioning only on the entries in $\y$ that correspond to the entries in $\bepsilon$ being integrated, and (2) approximating these pairwise expectations as linear functions of $\rho$.

\subsubsection{Pairwise expectation}
\label{sec:pairwise_expectation}
Explicitly, the pairwise approximations to $\{ \gamma_{i} \}_{i=1}^3$ we make are:
\begin{align}
\gamma_1 = \frac{1}{| \Theta_1 |}\sum_{jk} E[ \epsilon_{jk}^2 \, | \, \y ] & \approx \frac{1}{| \Theta_1 |}\sum_{jk} E[ \epsilon_{jk}^2 \, | \, y_{jk} ], \quad \label{eq:gamma_appx} \\
\gamma_2 = \frac{1}{|\Theta_2|}\sum_{jk, lm \in \Theta_2} E[ \epsilon_{jk}  \epsilon_{lm} \, | \, \y ] & \approx  \frac{1}{|\Theta_2|}\sum_{jk, lm \in \Theta_2} E[ \epsilon_{jk}  \epsilon_{lm} \, | \, y_{jk}, y_{lm} ] , \nonumber \\
\gamma_3 = \frac{1}{|\Theta_3|}\sum_{jk, lm \in \Theta_3} E[ \epsilon_{jk}  \epsilon_{lm} \, | \, \y ] & \approx  \frac{1}{|\Theta_3|}\sum_{jk, lm \in \Theta_3} E[ \epsilon_{jk}  \epsilon_{lm} \, | \, y_{jk}, y_{lm} ] , \nonumber 
\end{align}
where $\Theta_i$ is the set of ordered pairs of relations $(jk, lm)$ which correspond entries in $\S_{i}$ that are 1, for $i \in \{1,2,3 \}$. These approximations are natural first-order approximations: recalling that $y_{jk} = \mathbbm{1}[\epsilon_{jk} > -\x^T_{jk} \bbeta]$, the approximations in \eqref{eq:gamma_appx} are based on the notion that knowing the domains of $\epsilon_{jk}$ and $\epsilon_{lm}$ is significantly more informative for $E[ \epsilon_{jk}\epsilon_{lm} \, | \, \y ] $ than knowing the domain of, for example, $\epsilon_{ab}$. 

The approximations in \eqref{eq:gamma_appx} are orders of magnitude faster to compute than the expectations when conditioning on all observations $E[ \epsilon_{jk}\epsilon_{lm} \, | \, \y ]$. In particular, when $i \in \{ 1, 3\}$, the expectations are available in closed form:
\begin{align}
E[ \epsilon_{jk}^2 \, | \, y_{jk} ] &= 1 - \eta_{jk} \frac{\phi(\eta_{jk}) (y_{jk} - \Phi(\eta_{jk}))}{ \Phi(\eta_{jk}) (1 -  \Phi(\eta_{jk}))} , \nonumber \\
E[ \epsilon_{jk} \epsilon_{lm} \, | \, y_{jk}, y_{lm} ] &= 
\frac{\phi(\eta_{jk}) \phi(\eta_{lm}) (y_{jk} - \Phi(\eta_{jk})) (y_{lm} - \Phi(\eta_{lm}))}{ \Phi(\eta_{jk}) \Phi(\eta_{lm}) (1 -  \Phi(\eta_{jk})) (1 -  \Phi(\eta_{lm}))}, \nonumber 
 \end{align}
where we define $\eta_{jk} = \x^T_{jk} \beta$ and the indices $j,k,l$ and $m$ are distinct. 
When $i=2$,  that is, $ | \{ j, k \} \cap \{l, m \} | = 1$, the expectation depends on a two dimensional normal probability integral:
{ \small
\begin{align}
&E[ \epsilon_{jk} \epsilon_{lm} \, | \, y_{jk}, y_{lm} ] = \nonumber \\
& \rho \left(1 
- \frac{\bar{\eta}_{jk} \phi(\eta_{jk}) }{L_{jk,lm}}  \Phi \left( \frac{\bar{\eta}_{lm} - \bar{\rho} \,  \bar{\eta}_{jk}}{\sqrt{1- \rho^2}} \right) 
- \frac{\bar{\eta}_{lm} \phi(\eta_{lm})  }{L_{jk,lm}} \Phi \left( \frac{\bar{\eta}_{jk} - \bar{\rho} \, \bar{\eta}_{lm}}{\sqrt{1- \rho^2}} \right) 
\right) \quad \label{eq:gamma2_exp_appx} \\ 
& + \frac{1}{L_{jk,lm}}  \sqrt{\frac{1 - \rho^2}{2 \pi}} \phi \left( \sqrt{\frac{\eta_{jk}^2 + \eta_{lm}^2 - 2 \rho \, \eta_{jk} \eta_{lm}}{1 - \rho^2}}\right) 
, \hspace{.05in} | \{ j, k \} \cap \{l, m \} | = 1, \nonumber \\
L_{jk,lm} &= \P \left((2 y_{jk} - 1)\epsilon_{jk} > - \eta_{jk}  \cap (2 y_{lm} - 1)\epsilon_{lm} > - \eta_{lm} \right), \nonumber
\end{align}
}
where $\bar{\eta}_{jk} = (2y_{jk} - 1)\eta_{jk}$, e.g., and $\bar{\rho} = (2y_{jk} - 1)(2y_{lm} - 1)\rho$.

\subsubsection{Linearization}
The computation of $E[ \epsilon_{jk} \epsilon_{lm} \, | \, y_{jk}, y_{lm} ] $  in \eqref{eq:gamma2_exp_appx} requires the computation of $O(n^3)$ bivariate truncated normal integrals $L_{jk,lm}$, which are not generally available in closed form. We observe empirically, however, that the pairwise approximation to $\gamma_2$ described in Section~\ref{sec:pairwise_expectation} above, $\gamma_2 \approx \frac{1}{|\Theta_2|}\sum_{jk, lm \in \Theta_2} E[ \epsilon_{jk}  \epsilon_{lm} \, | \, y_{jk}, y_{lm} ]$, is approximately linear in $\rho$. This linearity is somewhat intuitive, as the sample mean $ \frac{1}{|\Theta_2|}\sum_{jk, lm \in \Theta_2} E[ \epsilon_{jk}  \epsilon_{lm} \, | \, y_{jk}, y_{lm} ]$ 
has expectation equal to $\rho$, and is thus an asymptotically linear function of $\rho$. As  the sample mean $ \frac{1}{|\Theta_2|}\sum_{jk, lm \in \Theta_2} E[ \epsilon_{jk}  \epsilon_{lm} \, | \, y_{jk}, y_{lm} ]$ concentrates around its expectation, it concentrates around a linear function of $\rho$,  and it is reasonable to approximate the sample mean $ \frac{1}{|\Theta_2|}\sum_{jk, lm \in \Theta_2} E[ \epsilon_{jk}  \epsilon_{lm} \, | \, y_{jk}, y_{lm} ]$ as a linear function of $\rho$. 
To do so, we compute the approximate values of $\gamma_2$ at $\rho=0$ and if $\rho=1$. In particular,
{
\begin{align}
\gamma_2 &\approx a_2  + b_2 \rho, \label{eq:gamma2_lin} \\
a_2 &= \frac{1}{|\Theta_2|}\sum_{jk, lm \in \Theta_2} E[ \epsilon_{jk} \, | \, y_{jk} ] E[ \epsilon_{lm} \, | \,  y_{lm} ], \nonumber \\ 
&= \frac{1}{|\Theta_2|}\sum_{jk, lm \in \Theta_2}  \frac{\phi(\eta_{jk}) \phi(\eta_{lm}) (y_{jk} - \Phi(\eta_{jk})) (y_{lm} - \Phi(\eta_{lm}))}{ \Phi(\eta_{jk}) \Phi(\eta_{lm}) (1 -  \Phi(\eta_{jk})) (1 -  \Phi(\eta_{lm}))}, \nonumber \\
c_2 &=   \frac{1}{|\Theta_2|}\sum_{jk, lm \in \Theta_2} E[ \epsilon_{jk} \epsilon_{lm} \, | \, y_{jk}, y_{lm} ] \, \Big|_{\rho = 1}, \nonumber \\
b_2 &= c_2 - a_2. \nonumber
\end{align}
}
To compute $c_2$, we must compute the value of $E[ \epsilon_{jk} \epsilon_{lm} \, | \, y_{jk}, y_{lm} ]$ when $\rho = 1$. Computing $E[ \epsilon_{jk} \epsilon_{lm} \, | \, y_{jk}, y_{lm} ]$ is simple when the values $y_{jk} = y_{lm}$, as in this case $E[ \epsilon_{jk} \epsilon_{lm} \, | \, y_{jk}, y_{lm} ] = E[ \epsilon_{jk}^2 \, | \, y_{jk} = y_{lm} ]$ since, when $\rho = 1$, $\epsilon_{jk} = \epsilon_{lm}$.  Approximations must be made in the cases when  $y_{jk} \neq y_{lm}$. There are two such cases. In the first, there is overlap between the domains of $\epsilon_{jk}$ and $\epsilon_{lm}$ indicated by $y_{jk} = \mathbbm{1}[\epsilon_{jk} > -\eta_{jk} ]$ and $y_{jk} = \mathbbm{1}[\epsilon_{lm} > -\eta_{lm}]$, respectively. We define the domain for $\epsilon_{jk}$ indicated by $y_{jk}$ as $U_{jk} := \{ u \in \R : u > (1 - 2y_{jk})\eta_{jk} \}$. As an example, there is overlap between $U_{jk}$ and $U_{lm}$ when $y_{jk} = 1, y_{lm} = 0$ and $\eta_{lm} < \eta_{jk} $. Then, the dersired expectation may be approximated $E[ \epsilon_{jk} \epsilon_{lm} \, | \, y_{jk}, y_{lm} ] \approx E[ \epsilon_{jk}^2 \, | \, \epsilon_{jk} \in U_{jk} \cap U_{lm} ]$. In the second case, when $y_{jk} \neq y_{lm}$ and $U_{jk} \cap U_{lm} = \varnothing$, we make the approximation by integrating over the sets $U_{jk}$ and $U_{lm}$. That is, by taking 
\begin{align}
&E[ \epsilon_{jk} \epsilon_{lm} \, | \, y_{jk}, y_{lm} ] \nonumber \\
& \hspace{.25in}\approx E[ \epsilon_{jk}^2 \, | \, \epsilon_{jk} \in U_{jk} ] \; \P(\epsilon_{jk} \in U_{jk}) + E[ \epsilon_{lm}^2 \, | \, \epsilon_{lm} \in U_{lm} ] \; \P(\epsilon_{lm}  \in U_{lm}). \nonumber
\end{align}
To summarize, we compute $c_2$ in \eqref{eq:gamma2_lin} when $\rho=1$ by using the following approximation to $E[\epsilon_{jk} \epsilon_{lm} \, | \, \y ] \, \Big|_{\rho=1}$ :
{\small
\begin{align}
 \begin{cases}
    E[\epsilon_{jk}^2 \, | \,  \epsilon_{jk} > {\rm max}(-\eta_{jk}, -\eta_{lm}) ], & y_{jk} = 1 \text{ and }y_{lm} = 1,\\
    E[\epsilon_{jk}^2 \, | \,  \epsilon_{jk} < {\rm min}(-\eta_{jk}, -\eta_{lm}) ], & y_{jk} = 0 \text{ and }y_{lm} = 0, \\
   E[ \epsilon_{jk}^2 \, | \, \epsilon_{jk} \in U_{jk} \cap U_{lm} ], & U_{jk} \cap U_{lm} \neq \emptyset, \\
    E[ \epsilon_{jk}^2 \, | \, \epsilon_{jk} \in U_{jk} ] \; \P(\epsilon_{jk} \in U_{jk}) + E[ \epsilon_{lm}^2 \, | \, \epsilon_{lm} \in U_{lm} ] \; \P(\epsilon_{lm}  \in U_{lm})
    & U_{jk} \cap U_{lm} = \emptyset.
  \end{cases} 
  \nonumber
\end{align}    
}

\subsection{Missing data}
\label{sec:missing}
In this subsection, we describe estimation of the PX model in the presence of missing data. We present the maximization of $\ell_\y$ with respect to $\bbeta$ first. Second, we discuss maximization of $\ell_\y$ with respect to $\rho$. Finally, we give a note on prediction from the PX model when data are missing. \\

\noindent\textbf{Update $\bbeta$:} \\
To maximize $\ell_\y$ with respect to  $\bbeta$ (Step 1 of Algorithm~\ref{alg:sub}) in the presence of missing data, we impute the missing values of $\X$ and $\y$. We make the decision to impute missing values since much of the speed of estimation of the PX model relies on exploitation of the particular network structure, and, when data are missing, this structure is more difficult to leverage. We impute entries in $\X$ with the mean value of the covariates. For example, if $x_{jk}^{(1)}$ is missing, we replace it with the sample mean $\frac{1}{|\s{M}^c|}\sum_{lm \in \s{M}^c} x_{lm}^{(1)}$, where the superscript ${(1)}$ refers to the first entry in $\x_{jk}$ and $\s{M}$ is the set of relations for which data are missing. 
If $y_{jk}$ is missing, we impute $y_{jk}$ with $\mathbbm{1}[w_{jk} > -\bar{\eta}]$, where $\bar{\eta} = \frac{1}{|\s{M}^c|}\sum_{lm \in \s{M}^c} \x_{lm}^T \hat{\bbeta}$ and we compute $\w = E[\bepsilon \, | \, \y]$ using the procedure in Section~\ref{sec:appx_beta}. We initialize this procedure at $\w^{(0)}$, where any missing entries $jk \in \s{M}$ are initialized with $w^{(0)}_{jk} = 0$. 
Given the imputed $\X$ and $\y$, the estimation routine may be accomplished as described in Algorithm~\ref{alg:sub}. \\

\noindent\textbf{Update $\rho$:} \\
To maximize $\ell_\y$ with respect to $\rho$ (Step 2 of Algorithm~\ref{alg:sub}), we approximate $\{ \gamma_i \}_{i=1}^3$ using only observed values. 
Using the pairwise expressions in \eqref{eq:gamma_appx}, the expressions for the expectation step under missing data are
{\small
\begin{align}
\gamma_1 &\approx  \frac{1}{|\s{M}^c|}
\sum_{jk\in \s{M}^c} E[  \epsilon_{jk}^2 \, | \,  y_{jk} ], \label{eq:gamma_appx_missing} \\
\gamma_2 & \approx  \frac{1}{|\s{A}^{(s)}|}  \sum_{jk, lm \in \s{A}^{(s)}} E[ \epsilon_{jk}  \epsilon_{lm} \, | \, y_{jk}, y_{lm} ].  \nonumber \\
\gamma_3 & \approx \frac{\sum_{jk, lm \in \Theta_3}  E[ \epsilon_{jk}  \, | \, y_{jk} ]  E[ \epsilon_{lm} \, | \,  y_{lm} ] \mathbbm{1}[jk \in \s{M}^c]\mathbbm{1}[lm \in \s{M}^c]}
{\sum_{jk, lm \in \Theta_3} \mathbbm{1}[jk \in \s{M}^c]\mathbbm{1}[lm \in \s{M}^c]}, \nonumber \\  
& \approx \frac{1}{|\Theta_3|} \left(  \Bigg( \frac{|\Theta_1|}{|\s{M}^c|} \sum_{jk\in \s{M}^c} E[  \epsilon_{jk} \, | \,  y_{jk} ]\right)^2 
 - \frac{|\Theta_1|}{|\s{M}^c |}  \sum_{jk\in \s{M}^c} E[  \epsilon_{jk} \, | \,  y_{jk} ]^2  \nonumber \\
& \hspace{.5in} - \frac{|\Theta_2|}{|\s{A}^{(s)}|}  \sum_{jk, lm \in \s{A}^{(s)}} E[  \epsilon_{jk} \, | \,  y_{jk} ] E[  \epsilon_{lm} \, | \,  y_{lm} ] \Bigg), \nonumber
\end{align}
}
where we only subsample pairs of relations that are observed such that $\s{A}^{(s)} \subset \Theta_2 \cap \s{M}^c$. 
Then, given the values of  $\{ \gamma_i \}_{i=1}^3$  in \eqref{eq:gamma_appx_missing}, the maximization of $\ell_\y$ with respect to $\rho$ (Step 2 in Algorithm~\ref{alg:sub}) may proceed as usual. \\

\noindent\textbf{Prediction:} \\
Joint prediction in the presence of missing data 
is required for out-of-sample evaluation of the EMM estimator, for example, for cross validation studies in Section~\ref{sec:data}. 
In this setting, model estimation is accomplished by imputing values in $\X$ and $\y$ earlier in this section under the `\textbf{Update $\bbeta$}' subheading. Then, prediction may be performed by proceeding as described in Section~\ref{sec:pred_proc} with the full observed $\X$ matrix and imputing the missing values in $\y$ (again as described above in this section under the `\textbf{Update $\bbeta$}' subheading).

\section{Parameters of undirected exchangeable network covariance   {matrices}}
\label{sec:undir_cov_mat}
In this section, we give a $3\times3$ matrix equation to invert $\bOmega$ rapidly. This equation also gives a basis to compute the partial derivatives $\left\{ \frac{\partial \phi_i}{ \partial p_j} \right\}$, which we require for the EMM algorithm. 

We define an \emph{undirected exchangeable network covariance matrix} as those square, positive definite matrices of the form
\begin{align}
    \bOmega(\bphi) = \sum_{i=1}^3 \phi_i \s{S}_i.  \nonumber
\end{align}
We find empirically that the inverse matrix of any undirected exchangeable network covariance matrix has the same form, that is $\bOmega^{-1} = \sum_{i=1}^3 \p_i \s{S}_i$. Using this fact and the particular forms of the binary matrices $\{\s{S}_i \}_{i=1}^3$, one can see that there are only three possible row-column inner  products in the matrix multiplication $\bOmega \bOmega^{-1}$, those pertaining to row-column pairs of the form $(ij,ij)$, $(ij,ik)$, and $(ij,kl)$ for distinct indices $i,j,k,$ and $l$. Examining the three products in terms of the parameters in $\bphi$ and $\bp$, and the fact that $\bOmega \bOmega^{-1} = \I$, we get the following matrix equation for the parameters $\bp$ given $\bphi$
\begin{align}
&\C(\bphi) \bp = [1,0,0]^T, \quad 
\label{eq:C_undir}
\end{align}
where the matrix $\C(\bphi)$ is given by
{\small
\begin{align}
\left[ {\begin{array}{ccc}
 \phi_1& 2(n-2) \phi_2 & \frac{1}{2}(n-2)(n-3) \phi_3 \\
\phi_2 &  \phi_1 + (n-2) \phi_2  + (n - 3) \phi_3 & (n - 3) \phi_2  +  \left( \frac{1}{2}(n-2)(n-3)-n +3 \right) \phi_3  \\
 \phi_3 & 4 \phi_2 + (2n - 8)\phi_3 & \phi_1 + (2n - 8)\phi_2 + \left( \frac{1}{2}(n-2)(n-3)-2n + 7 \right) \phi_3 \\
  \end{array} } \right]. \nonumber 
\end{align}
}
Then, we may  invert $\bOmega$ with a $3 \times 3$ inverse to find the parameters $\bp$ of $\bOmega^{-1}$. Explicitly solving these linear equations, the expressions for $\bp$ are given by
    \begin{align}
        p_1 &= 1 - (2n-4)p_2, \label{eq:p_vals_exact} \\
        p_2 &= \frac{1 + (n-3)p_3}{(2n-4)\rho - n + 2 - 1/\rho}, \nonumber \\
        p_3 &= \frac{-4 \rho^2}{(n-3)4 \rho + (1 + (2n-8)\rho)((2n-4)\rho -n + 2 -1/\rho)}. \nonumber
    \end{align}
Taking only the largest terms in $n$, one may approximate the values in $\bp$ as follows, which will be useful in following theoretical development:
    \begin{align}
        p_1 &\approx \frac{1}{1-2\rho} + O(n^{-1}), \label{eq:p_vals_appx} \\
        p_2 &\approx \frac{-1}{n(1-2\rho)} + O(n^{-2}), \nonumber \\
        p_3 &\approx \frac{2}{n^2(1-2\rho)} + O(n^{-3}). \nonumber
    \end{align}

The equation in \eqref{eq:C_undir} allows one to compute the partial derivatives $\left\{ \frac{\partial \phi_i}{ \partial p_j} \right\}$.
First, based on \eqref{eq:C_undir}, we can write $\C(\bp) \bphi =  [1,0,0]^T$.
Then, we note that the
matrix function $\C(\bphi)$ in  \eqref{eq:C_undir} is linear in the terms $\bphi$, and thus, we may write $\C(\p) = \sum_{j = 1}^3 p_j \A_j^{(n)}$ for some matrices $\left\{ \A_j^{(n)}  \right\}_{j=1}^3$
that depend on $n$.  Differentiating both sides of $\C(\bp) \bphi =  [1,0,0]^T$ with respect to $p_j$ and solving gives
\begin{align}
\frac{\partial \bphi}{\partial p_j} &= - \C(\bp)^{-1} \A_{j}^{(n)} \C(\bp)^{-1} [1,0,0]^T, \nonumber
\end{align}
which holds for all $j \in \{1,2,3 \}$.

\section{Theoretical support}
\label{sec:theory}
In this section, we outline proofs suggesting that the estimators resulting from the EMM algorithm are consistent. 

\subsection{Consistency of $\hat{\beta}_{EMM}$}
The estimator of $\beta$ resulting from the EMM algorithm, $\hat{\beta}_{EMM}$, depends on the estimated value of $\rho$, $\hat{\rho}_{EMM}$, through the covariance matrix $\bOmega$. Explicitly, given $\bOmega$, the EMM estimator 
\begin{align}
\hat{\beta}_{EMM} &= (\X^T \bOmega^{-1} \X)^{-1} \X^T \bOmega^{-1} \widehat{E[\z \mid \y]}, \quad
\end{align}
where $\widehat{E[\z \mid \y]}$ represents the estimation and approximation of $E[\z \mid \y]$ described in the EMM algorithm. This estimator is difficult to analyze in general, because, in principle, $\widehat{E[z_{jk} \mid \y]}$ depends on every entry in $\y$, and the effects of the approximations are difficult to evaluate.  Instead of direct analysis, to evaluate consistency of $\hat{\beta}_{EMM}$, we define a bounding estimator that is easier to analyze,
\begin{align}
\hat{\beta}_{bound} &= (\X^T \bOmega^{-1} \X)^{-1} \X^T \bOmega^{-1} \u, \quad u_{jk} = E[z_{jk} \mid y_{jk}]. \quad
\end{align}
It is immediately clear that $\hat{\beta}_{bound}$ is unbiased, since $E[u_{jk}] = \x_{jk}^T \bbeta$. Further, the approximations made in the EMM algorithm are meant to bound $|| \hat{\beta}_{EMM} - {\beta}^*_{MLE} ||_2^2  \le || \hat{\beta}_{bound} - {\beta}^*_{MLE} ||_2^2 $, where ${\beta}^*_{MLE}$ is the true maximum likelihood estimator. That is, the expectation estimator we compute $\widehat{E[\z \mid \y]}$ takes into account correlation information through $\Omega$, and is thus closer to the true expectation, $E[\z \mid \y]$, than $\u$. Then, we also have that $\hat{\beta}_{EMM}$ is closer to ${\beta}^*_{MLE}$ than $\hat{\beta}_{bound}$. Then, consistency of $\hat{\beta}_{bound}$ implies consistency of $\hat{\beta}_{EMM}$, since we assume that the true MLE is consistent. 

We now establish consistency of $\hat{\beta}_{bound}$. We make the following assumptions:
\begin{enumerate}
\item The true model follows a latent variable model,
\begin{align}
&\P(y_{ij} = 1 ) = \P \left( \x_{ij}^T \bbeta + \epsilon_{ij} > 0 \right), \quad \\
&E[\epsilon_{jk}] = 0. \nonumber 
\end{align}
where $\bepsilon$ is not necessarily normally distributed. 
    \item The design matrix $\X$ is such that the expressions $n^{-(1+i)}\X^T \s{S}_i \X$, for $i \in \{1,2,3 \}$, converge in probability to constant matrices. 
    \item The fourth moments of $\X$ and $\bepsilon$ are bounded, $|| \x_{jk} ||_4 \le C_1 < \infty$ and  $E[\epsilon_{jk}^4] \le C_2 < \infty$. 
    \item The estimator of $\rho$ is such that $\Omega(\hat{\rho})$ converges in probability to some positive definite matrix. 
    \item The independence assumption for relations that do not share an actor holds, such that $\epsilon_{jk}$ is independent $\epsilon_{lm}$ whenever actors $j$, $k$, $l$, and $m$ are distinct. 
\end{enumerate}
The first assumption defines the meaning of the true coefficient $\beta$. The second assumption is a standard condition required for most regression problems; a similar condition is required for consistency of any estimator which accounts for correlation in generalized linear model. 
We evaluate the second assumption in the following section, when we analyze $\hat{\rho}_{EMM}$. The fourth assumption defines the minimal independence structure. 

We start by noticing that $\u = \X\beta + \epsilon$, such that 
\begin{align}
\hat{\beta}_{bound} &= \beta + \left(n^{-2} \sum_{i=1}^3 p_i \X^T \s{S}_i \X \right)^{-1} \left(n^{-2} \sum_{i=1}^3 p_i \X^T \s{S}_i \v \right), \quad v_{jk} = E[\epsilon_{jk} \mid y_{jk}]. \quad
\end{align}
Then, as noted in the previous paragraph, the bounding estimator is unbiased, $E[\hat{\beta}_{bound}] = \beta$. It remains to establish sufficient conditions for which $\hat{\beta}_{bound}$ converges to its expectation in probability. Noting the orders of $\{ p_i \}_i$ in \eqref{eq:p_vals_appx}, we immediately have that $n^{-2} \X^T \Omega^{-1} \X$ converges in probability to a constant. A sufficient condition to establish that $\left(n^{-2} \sum_{i=1}^3 p_i \X^T \s{S}_i \v \right)$ converges in probability to its expectation (zero) is that its variance tends to zero. Expanding this variance expression,
\begin{align}
\text{var} \left(n^{-2} \sum_{i=1}^3 p_i \X^T \s{S}_i \v \right)  & = 
n^{-4} \sum_{i=1}^3 \sum_{j=1}^3 p_i p_j \X^T \s{S}_i E[\v \v^T] \s{S}_j \X,  \quad \label{eq:var_expression} \\
& = n^{-4} \sum_{i=1}^3 \sum_{j=1}^3 p_i p_j \sum_{jk, lm \in \Theta_i}\sum_{rs, tu \in \Theta_j} \x_{jk} \x_{rs}^T E[v_{lm} v_{tu}]. \nonumber
\end{align}
By assumption, every term in the sum expression in \eqref{eq:var_expression} is bounded. Also by assumption, the expectation $E[v_{lm} v_{tu}]$ is zero whenever the relations $lm$ and $tu$ do not share an actor. Using the expressions in \eqref{eq:p_vals_appx} ($p_i \propto n^2 |\Theta_i|^{-1}$) and counting terms, 
\begin{align}
\text{var} \left(n^{-2} \sum_{i=1}^3 p_i \X^T \s{S}_i \v \right)  & \propto n^{-4} \sum_{i=1}^3 \sum_{j=1}^3 \frac{n^2}{|\Theta_i|} \frac{n^2}{|\Theta_j|} \frac{|\Theta_i| |\Theta_j|}{n} = O(n^{-1}). \nonumber
\end{align}
Thus, the variance of $\hat{\beta}_{bound}$ converges to zero, so that $\hat{\beta}_{bound}$ converges in probability to the true $\bbeta$, as does $\hat{\beta}_{EMM}$.

\subsection{Consistency of $\hat{\rho}_{EMM}$}
Using the expressions in \eqref{eq:p_vals_appx} and differentiating the expected log-likelihood with respect to $\rho$, the maximum likelihood estimator is
\begin{align}
    \hat{\rho}_{MLE} &= \frac{1}{2} + \frac{1}{n^3} E[\bepsilon^T \s{S}_2 \bepsilon \mid \y ] - \frac{1}{n^2} E[\bepsilon^T \bepsilon \mid \y ] - \frac{2}{n^4} E[\bepsilon^T \s{S}_3\bepsilon \mid \y ] + O(n^{-1}). \quad \label{eq:em_appx_rho}
\end{align}
In the EMM algorithm, we approximate the expectations in \eqref{eq:em_appx_rho} using pairwise conditioning. Then, we have that
\begin{align}
    \hat{\rho}_{EMM} &= \frac{1}{2} + \frac{1}{n^3} \sum_{jk, lm \in \Theta_2} E[\epsilon_{jk} \epsilon_{lm} \mid y_{jk}, y_{lm} ] - \frac{1}{n^2} \sum_{jk} E[\epsilon_{jk}^2 \mid y_{jk} ]\ldots \label{eq:rho_bc_em_E} \\
    & \ldots - \frac{2}{n^4} \sum_{jk, lm \in \Theta_3} E[\epsilon_{jk} \mid y_{jk}] E[\epsilon_{lm} \mid y_{lm}] + O(n^{-1}). \quad \nonumber
\end{align}
According to the exchangeability assumption of the errors, the pairwise expectations are known, and the EMM estimator of $\rho$ is unbiased, 
$E[\hat{\rho}_{EMM}] = E[\epsilon_{jk} \epsilon_{lm}] = \rho$.
The EMM estimator $\hat{\rho}_{EMM}$,  converges to its expectation when the sums of conditional expectations in \eqref{eq:rho_bc_em_E} converge to their expectations. This occurs when the variances of these sums tend to zero. This fact can be established by similar counting arguments as in the previous subsection. For example, 
{ \small 
\begin{align}
    \text{var}\left( \frac{1}{n^3} \sum_{jk, lm \in \Theta_2} E[\epsilon_{jk} \epsilon_{lm} \mid y_{jk}, y_{lm} ] \right) &=  
    n^{-6} \sum_{jk, lm \in \Theta_2} \sum_{jk, lm \in \Theta_2} (E[E[\epsilon_{jk} \epsilon_{lm} \mid y_{jk}, y_{lm} ] E[\epsilon_{rs} \epsilon_{tu} \mid y_{rs}, y_{tu} ]] - \rho^2),\nonumber \\
    &= n^{-6} \frac{|\Theta_2| |\Theta_2|}{n} = O(n^{-1}), \nonumber
\end{align}
}
since $E[\epsilon_{jk} \epsilon_{lm} \mid y_{jk}, y_{lm} ]$ is independent $E[\epsilon_{rs} \epsilon_{tu} \mid y_{rs}, y_{tu} ]$ whenever all the indices $\{j,k,l,m,r,s,t,u\}$ are distinct. Thus, each of the sums of expectations in \eqref{eq:rho_bc_em_E} has variance that tends to zero, so that they converge to their marginal expectations, and $\hat{\rho}_{EMM}$ is consistent.

\subsection{Consistency under misspecification}
In the discussion of consistency of the EMM estimator, we did not require the assumption of latent normality, nor of exchangeability of the latent errors (we do require a small assumption that the sequence of constants $n^{-3} E[\epsilon^T \s{S}_2 \epsilon_{lm}]$ converges to some constant on $[0,1/2)$). Hence, when the data generating mechanism is non-Gaussian and non-exchangeable, we expect $\hat{\rho}_{EMM}$ to converge to the pseudo-true $\rho$. The pseudo-true $\rho$ is the value which minimizes the Kullback-Leibler divergence from the modeled (Gaussian, exchangeable) distribution to the true distribution \citep{huber1967under, dhaene1997pseudo}. In the discussion of consistency of $\hat{\beta}_{EMM}$, we only require that $\hat{\rho}_{EMM}$ converges to a fixed value on the interval $[0, 1/2)$, such that $\Omega(\rho)$ is positive definite. Again, when the data generating mechanism is non-Gaussian and non-exchangeable, we expect $\hat{\beta}_{EMM}$ to converge to the pseudo-true $\beta$. When the true data generating mechanism is Gaussian (but not necessarily exchangeable), the limiting pseudo-true value for $\hat{\beta}_{EMM}$ should be the true value.

\section{Simulation studies}
\label{sec:app_sims}
In this section we present details pertaining to the second simulation study in Section~\ref{sec:sim}.

\subsection{Evaluation of estimation of $\beta$}
\label{sec:sim_cons}
See Section~\ref{sec:sim_beta_est} for a description of the simulation study to evaluate performance in estimating $\bbeta$. We provide further details in the rest of this paragraph. We generated each $\{ x_{1i} \}_{i=1}^n$ as iid Bernoulli$(1/2)$ random variables, such that the second covariate is an indicator of both $x_{1i} = x_{1j} = 1$. Each of $\{ x_{2i} \}_{i=1}^n$ and $\{ x_{3ij} \}_{ij}$ were generated from iid standard normal random variables. 
We fixed  $\bbeta = [\beta_0, \beta_1, \beta_2, \beta_3]^T = [-1, 1/2, 1/2, 1/2]^T$
throughout the simulation study. 
When generating from the latent eigenmodel in \eqref{eq_binary_eigent}, we set $\Lambda = \I$, $\sigma^2_a = 1/6$, $\sigma^2_u = 1/\sqrt{6}$, and $\sigma^2_\xi = 1/3$.

To further investigate the source of poor performance of the \texttt{amen} estimators of the social relations and latent eigenmodels, we computed the bias and the variance of estimators when generating from the PX model and the latent eigenmodel in 
Figures~\ref{fig:bias_var_px}~and~\ref{fig:bias_var_le}, respectively.
Figures~\ref{fig:bias_var_px}~and~\ref{fig:bias_var_le} show that the variances of the \texttt{amen} estimators of the social relations and latent eigenmodels are similar to the PX model, however, that the bias of the \texttt{amen} estimators are substantially larger.

\begin{figure}
\centering
\begin{tabular}{ccccc}
\begin{sideways} \hspace{.3in} $n^{1/2}$ Bias \end{sideways}
& \hspace{-.01in}
\includegraphics[width=.23\textwidth]{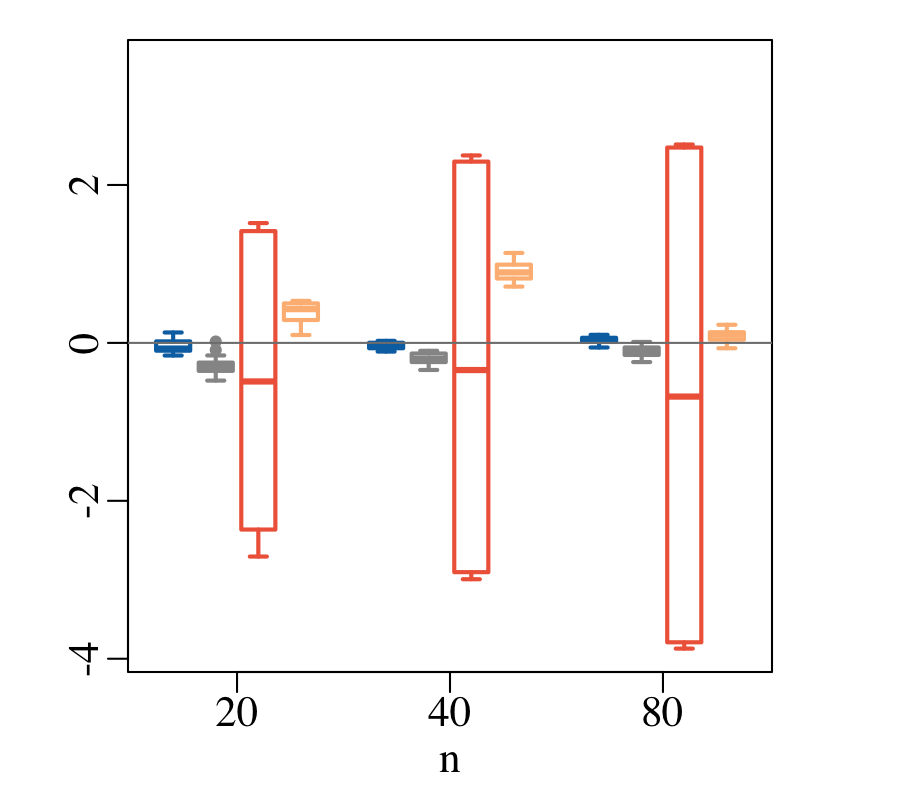}
\hspace{-.1in} & \hspace{-.1in}
\includegraphics[width=.23\textwidth]{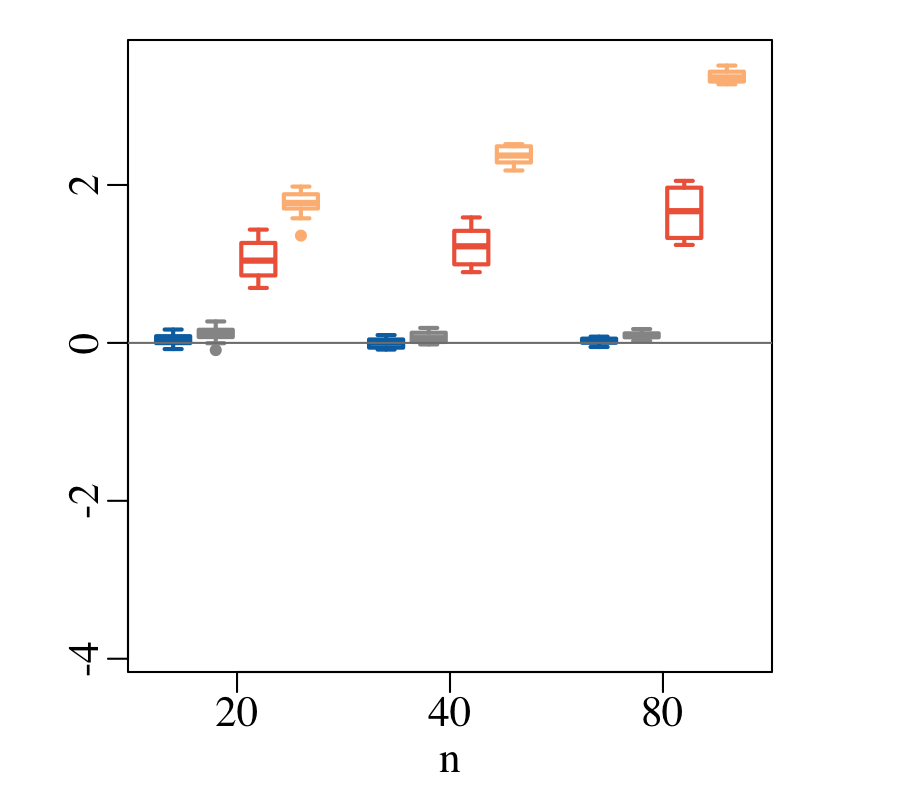}
\hspace{-.1in} & \hspace{-.1in}
\includegraphics[width=.23\textwidth]{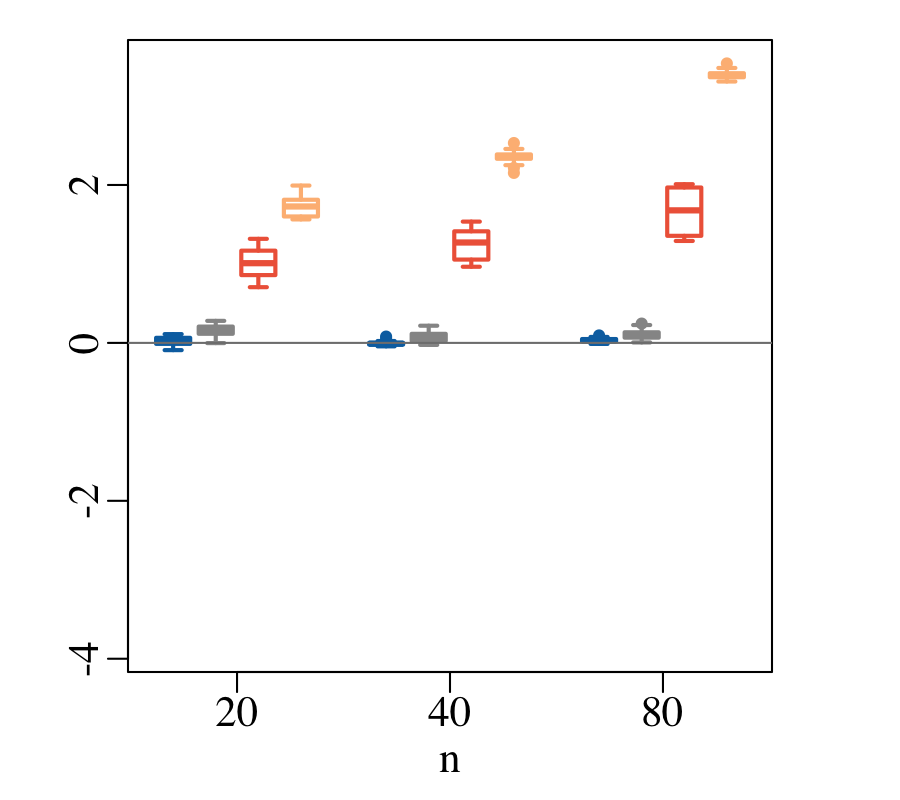}
\hspace{-.1in} & \hspace{-.1in}
\includegraphics[width=.23\textwidth]{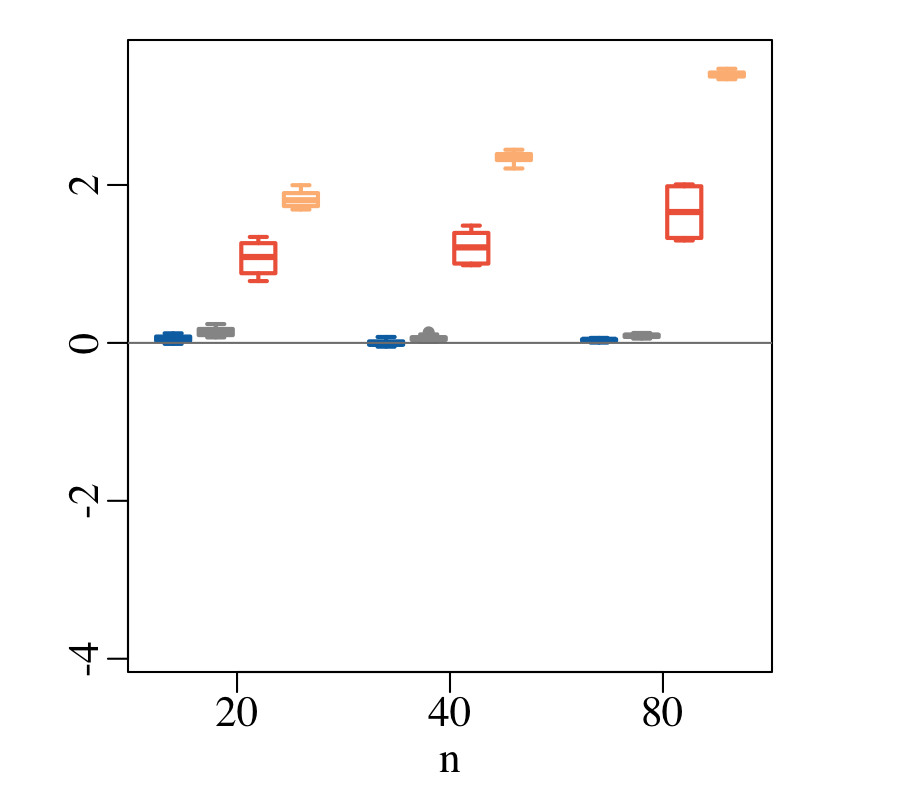} \\
\begin{sideways} \hspace{.3in} $n$ Variance \end{sideways}
& \hspace{-.01in}
\includegraphics[width=.23\textwidth]{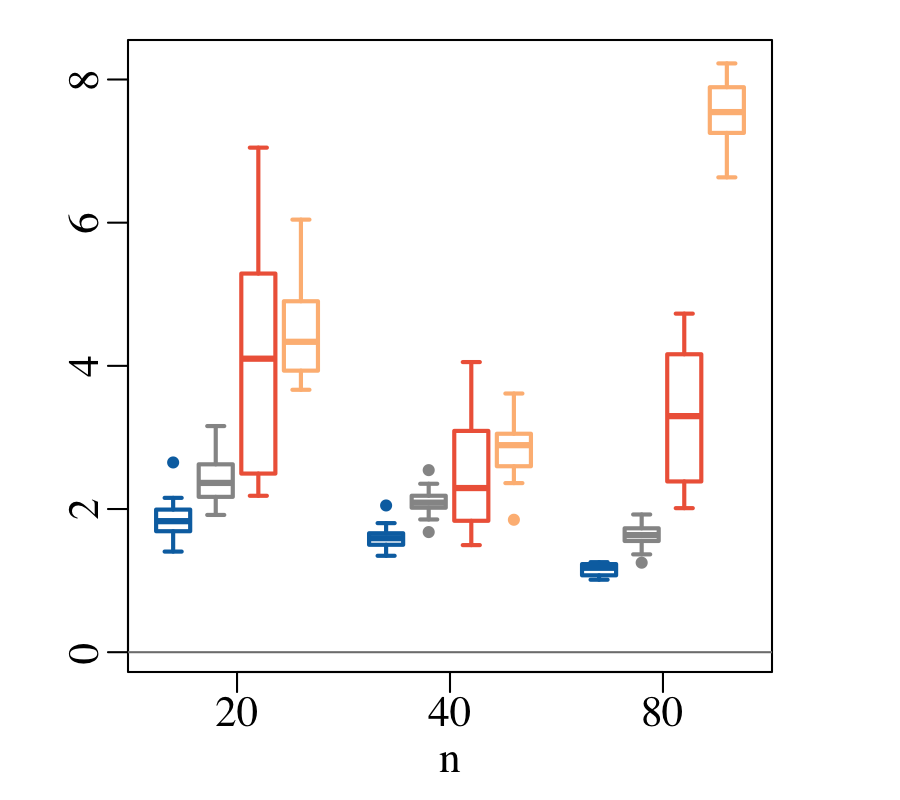}
\hspace{-.1in} & \hspace{-.1in}
\includegraphics[width=.23\textwidth]{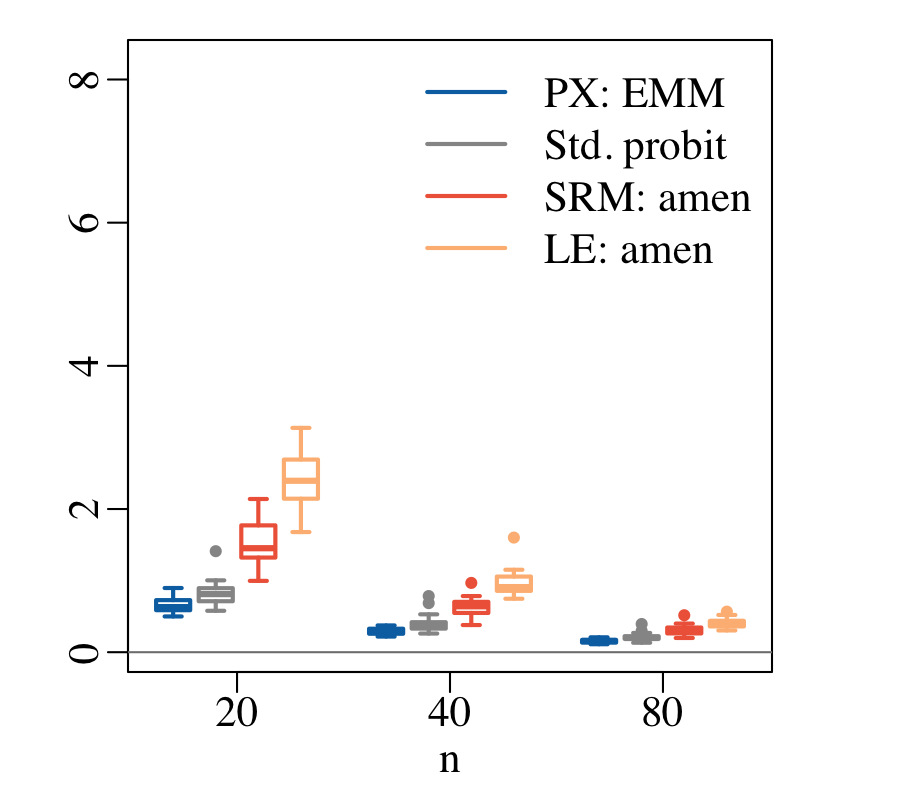}
\hspace{-.1in} & \hspace{-.1in}
\includegraphics[width=.23\textwidth]{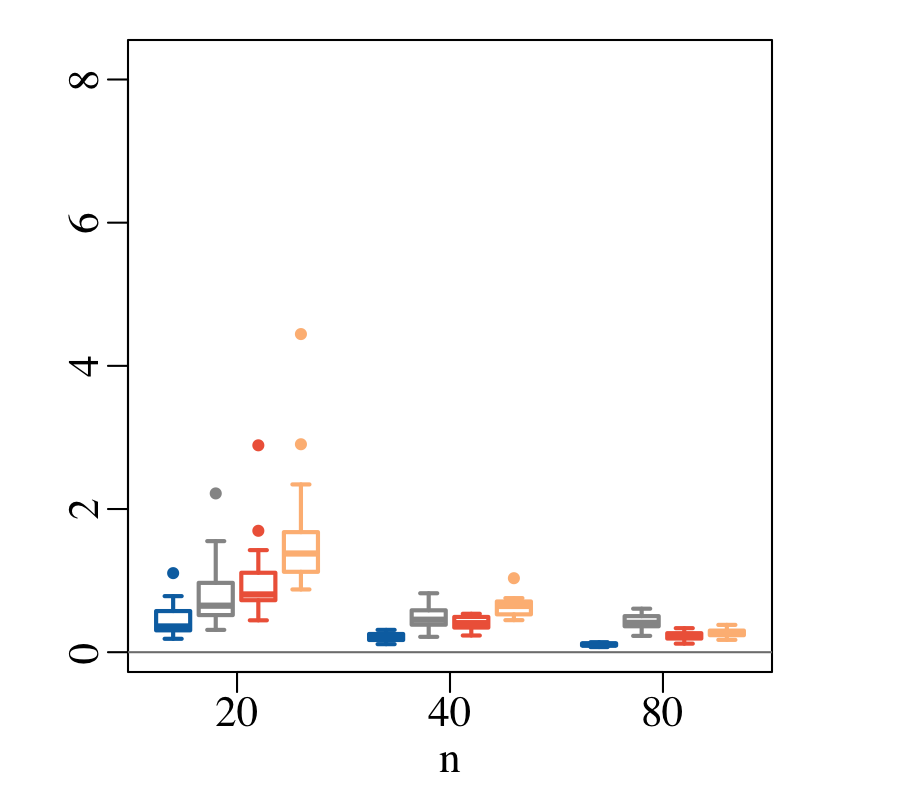}
\hspace{-.1in} & \hspace{-.1in}
\includegraphics[width=.23\textwidth]{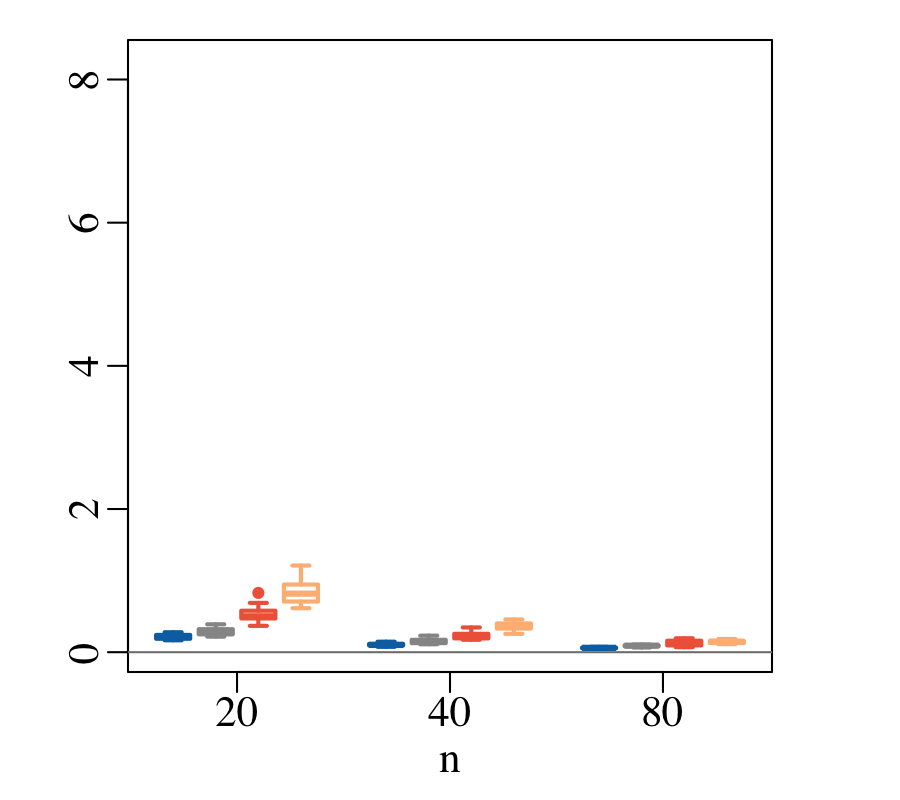} \\
\hspace{-.1in} & \hspace{-.1in}
$\beta_0$
\hspace{-.1in} & \hspace{-.1in}
$\beta_1$
\hspace{-.1in} & \hspace{-.1in}
$\beta_2$ 
\hspace{-.1in} & \hspace{-.1in}
$\beta_3$ 
\end{tabular}
  \caption{\textbf{PX model: } Scaled bias and variance of estimators of ${\bbeta}$ for a given $\X$ when generating from the PX model. Variability captured by the boxplots reflects variation with $\X$. }
    \label{fig:bias_var_px}
  \end{figure}

\begin{figure}
\centering
\begin{tabular}{ccccc}
\begin{sideways} \hspace{.3in} $n^{1/2}$ Bias \end{sideways}
& \hspace{-.01in}
\includegraphics[width=.23\textwidth]{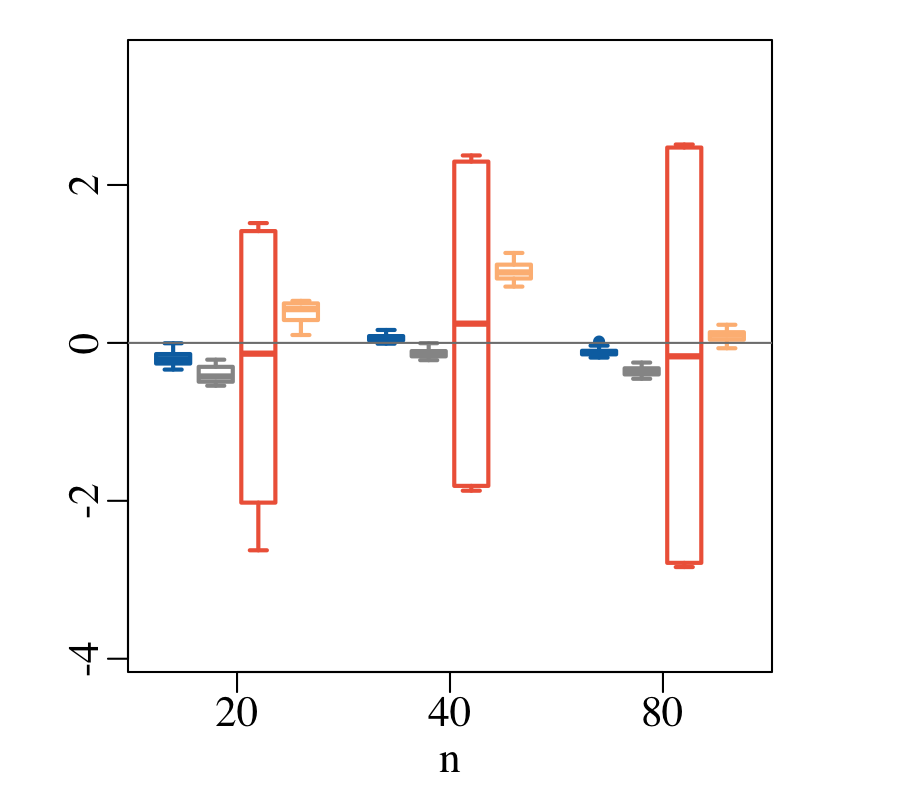}
\hspace{-.1in} & \hspace{-.1in}
\includegraphics[width=.23\textwidth]{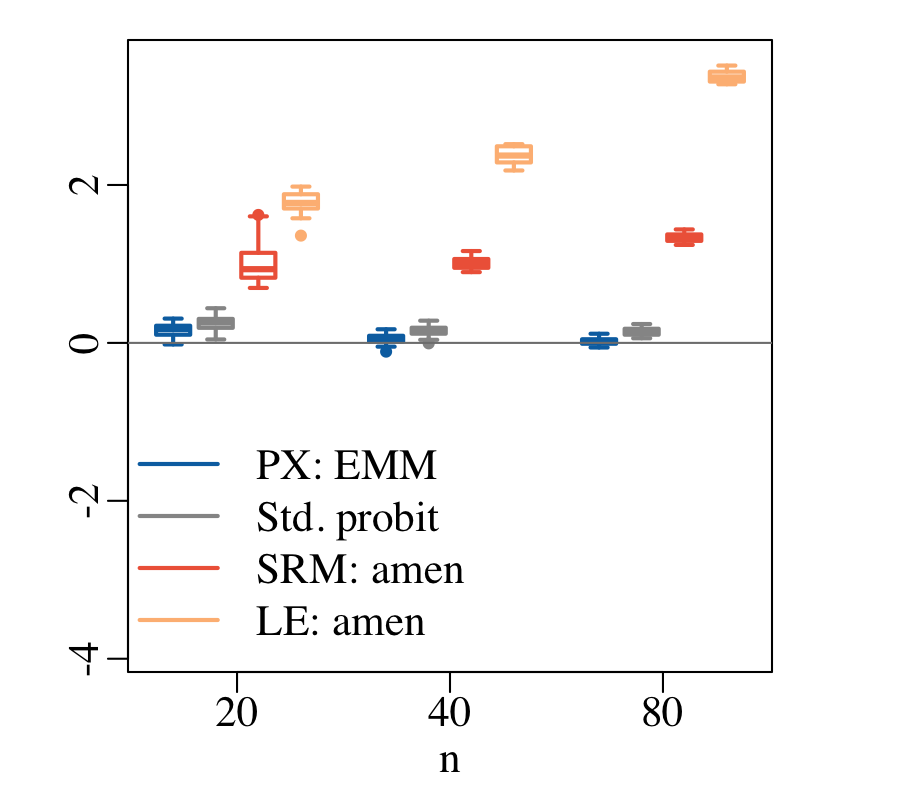}
\hspace{-.1in} & \hspace{-.1in}
\includegraphics[width=.23\textwidth]{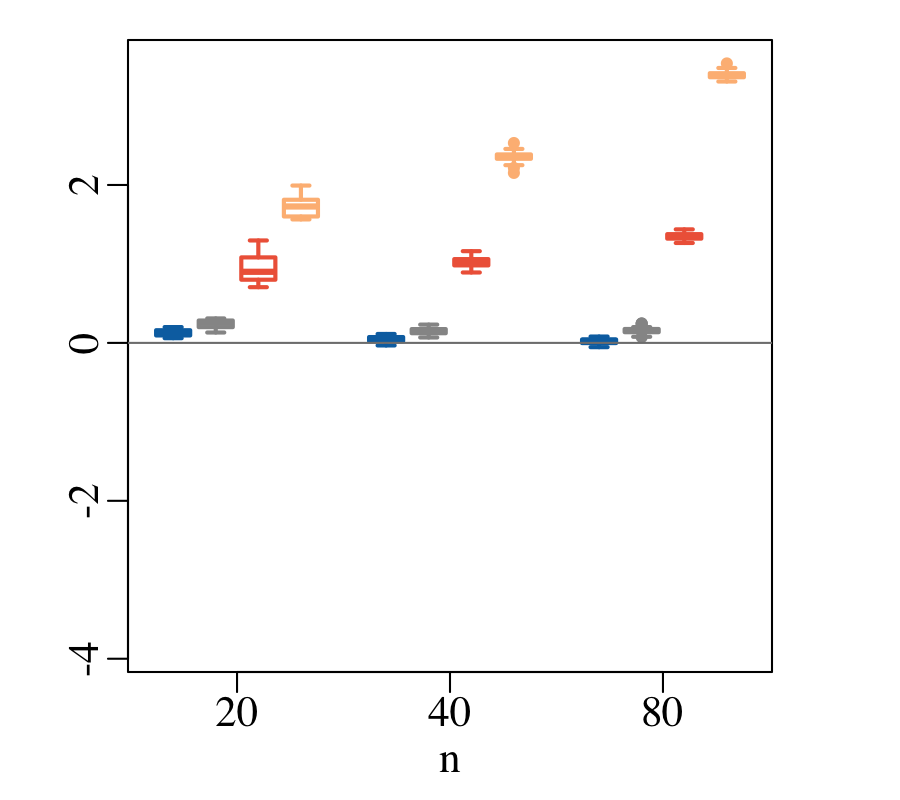}
\hspace{-.1in} & \hspace{-.1in}
\includegraphics[width=.23\textwidth]{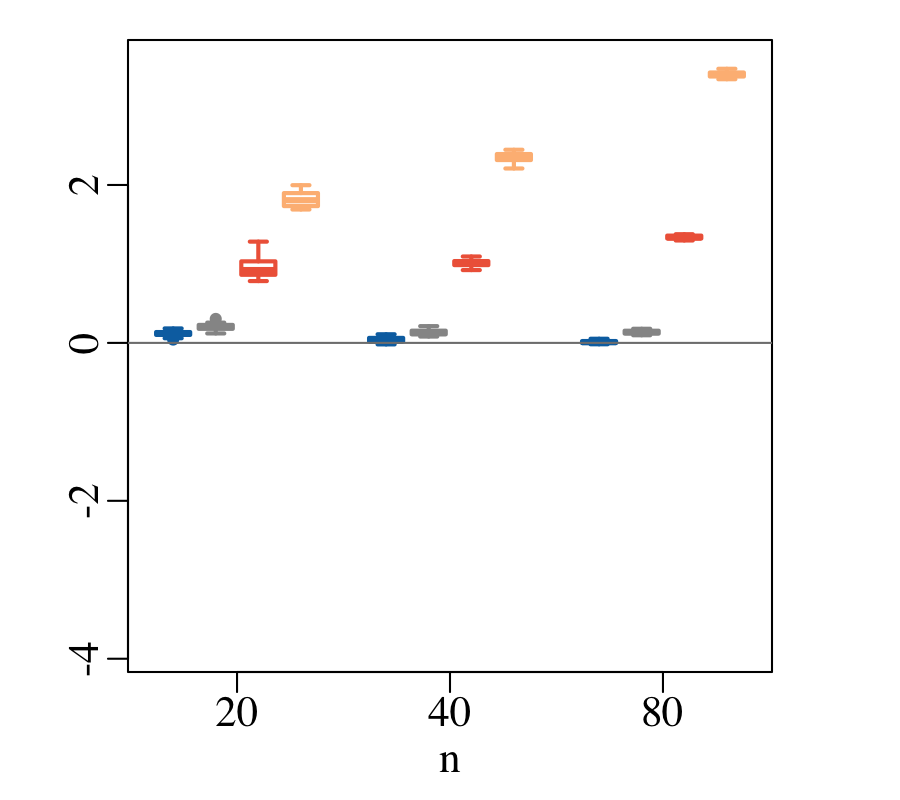} \\
\begin{sideways} \hspace{.3in} $n$ Variance \end{sideways}
& \hspace{-.01in}
\includegraphics[width=.23\textwidth]{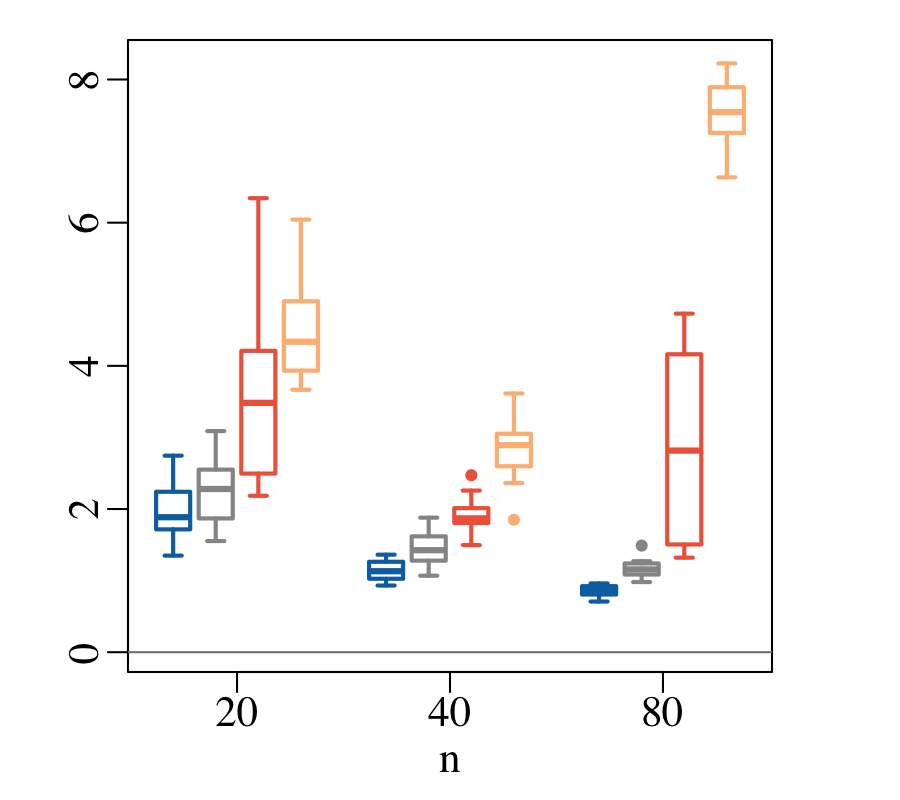}
\hspace{-.1in} & \hspace{-.1in}
\includegraphics[width=.23\textwidth]{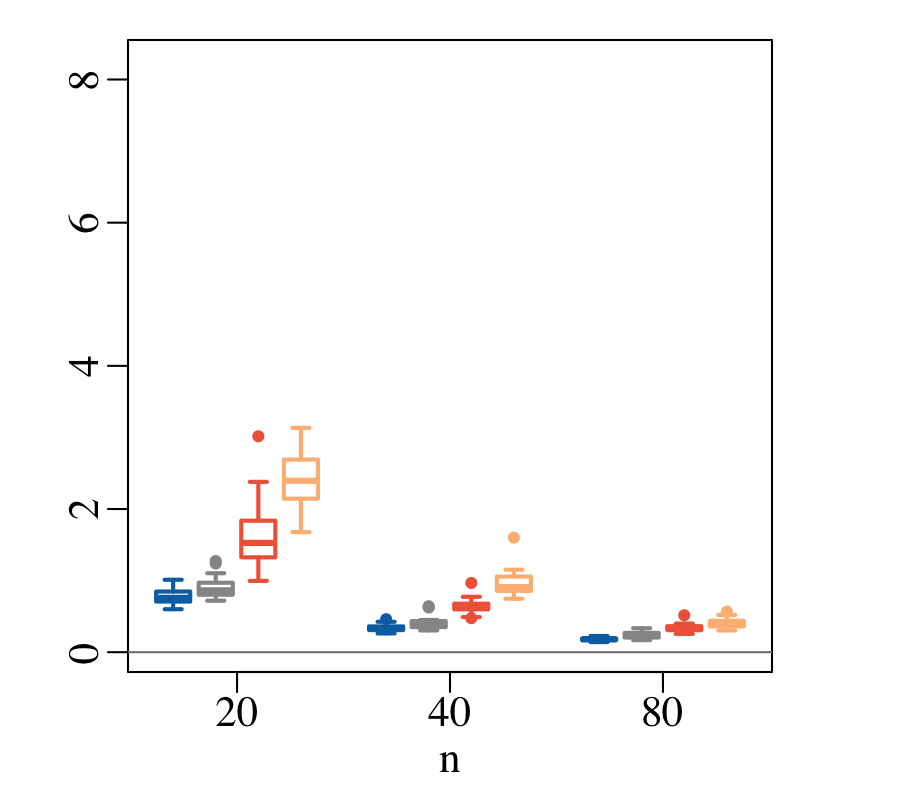}
\hspace{-.1in} & \hspace{-.1in}
\includegraphics[width=.23\textwidth]{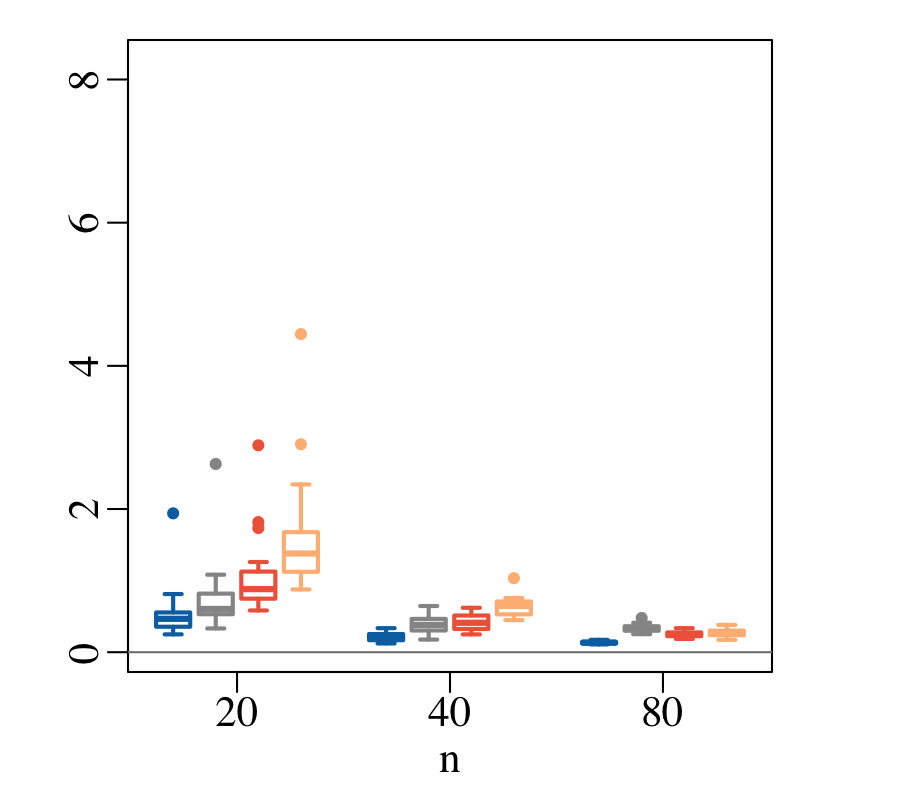}
\hspace{-.1in} & \hspace{-.1in}
\includegraphics[width=.23\textwidth]{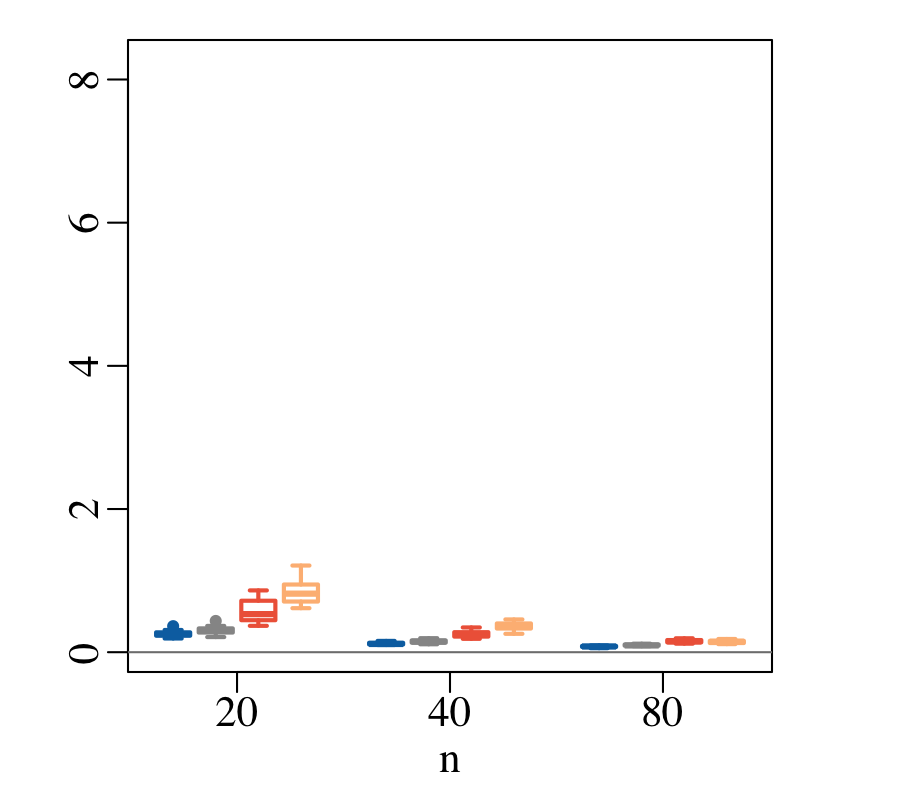} \\
\hspace{-.1in} & \hspace{-.1in}
$\beta_0$
\hspace{-.1in} & \hspace{-.1in}
$\beta_1$
\hspace{-.1in} & \hspace{-.1in}
$\beta_2$ 
\hspace{-.1in} & \hspace{-.1in}
$\beta_3$ 
\end{tabular}
  \caption{\textbf{LE model: } Scaled bias and variance of estimators of ${\bbeta}$ for a given $\X$ when generating from the latent eigenmodel. Variability captured by the boxplots reflects variation with $\X$.  } 
    \label{fig:bias_var_le}
  \end{figure}

Both the EMM estimator of the PX model and \texttt{amen} estimator of the social relations model provide estimates of $\rho$. We computed the RMSE for each estimator, for each $\X$ realization, when generating from the PX model.
In Figure~\ref{fig:rho_sim}, the RMSE plot for $\hat{\rho}$ shows that the MSE, and the spread of the MSE, decreases with $n$ for the EMM estimator, suggesting that the EMM estimator of $\rho$ is consistent. As with the $\bbeta$ parameters, the \texttt{amen} estimator displays substantially larger RMSE than the EMM estimator of $\rho$.

\begin{figure}
\centering
\includegraphics[width=.4\textwidth]{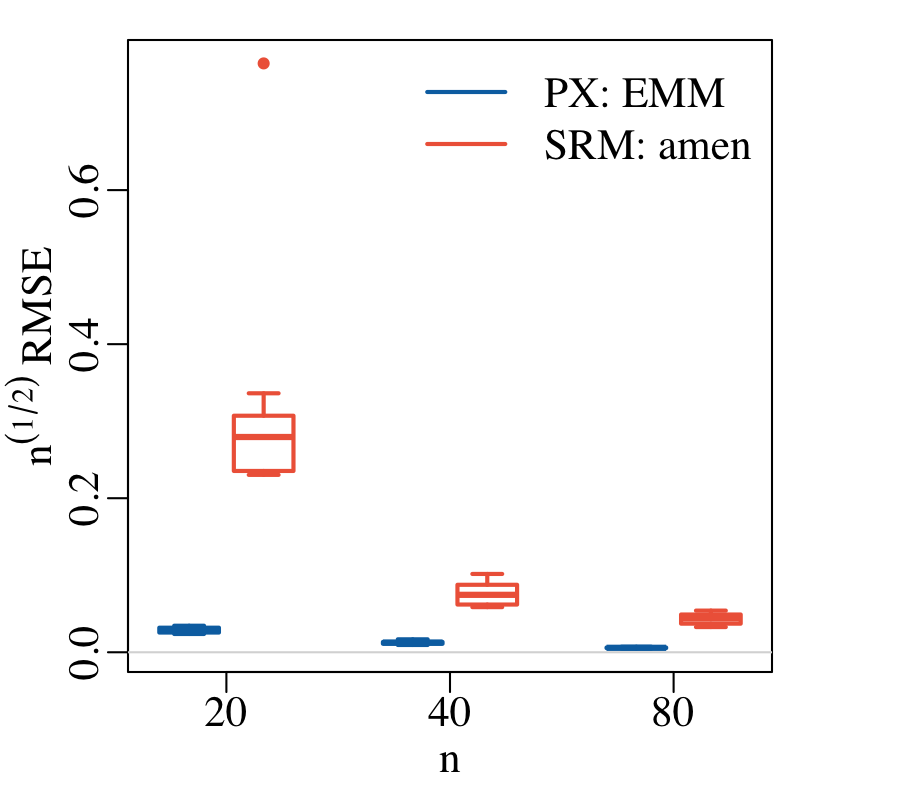}
  \caption{RMSE, scaled by $n^{1/2}$, of the EMM estimator and \texttt{amen} estimator of the social relations model of $\rho$ when generating from the PX model. Variability captured by the boxplots reflects variation in $n^{1/2}$RMSE with $\X$.    } 
    \label{fig:rho_sim}
  \end{figure}
  
  \subsection{Remaining coefficients in $t$ simulation}
  \label{sec:t_appx}
We simulated from the PX model, modified to have heavier-tailed $t_5$ error distribution.  The scaled RMSE when estimating all entries in ${\bbeta}$ is given in Figure~\ref{fig:tsim_appx}. All coefficient estimators, for both PX: EMM and standard probit regression, appear consistent, but the PX: EMM has lower RSME.

\begin{figure}[h]
\centering
\begin{tabular}{ccccc}
\includegraphics[width=.25\textwidth]{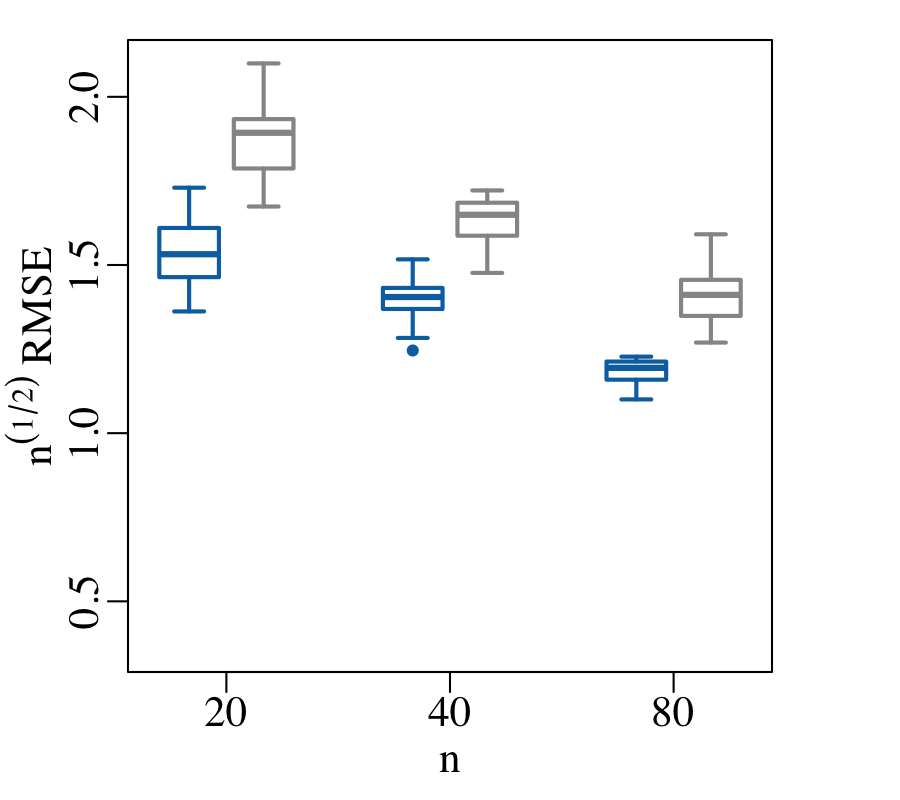}
\hspace{-.1in} & \hspace{-.1in}
\includegraphics[width=.25\textwidth]{figures/box_rmse_tPX_beta2_rho2_pop.png}
\hspace{-.1in} & \hspace{-.1in}
\includegraphics[width=.25\textwidth]{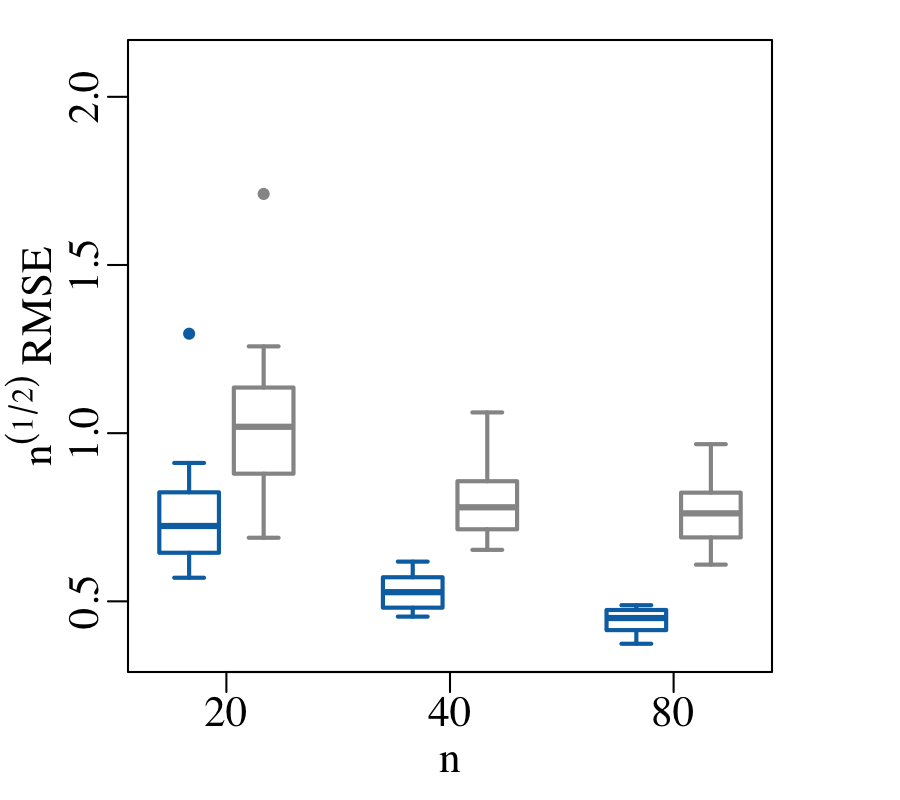}
\hspace{-.1in} & \hspace{-.1in}
\includegraphics[width=.25\textwidth]{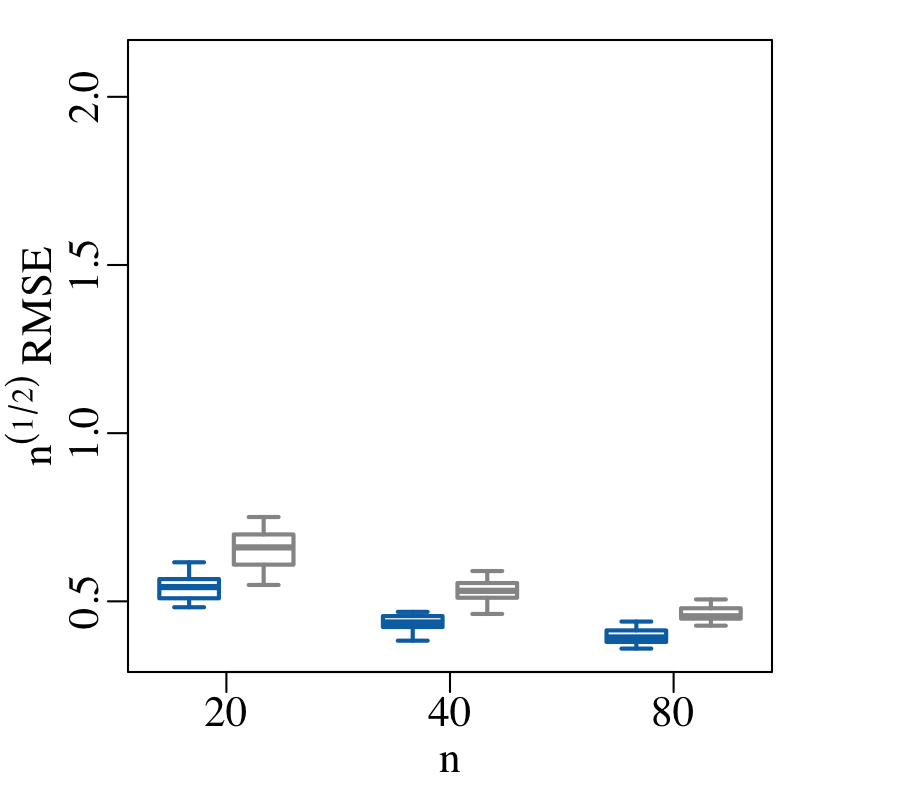} \\
$\beta_0$
\hspace{-.1in} & \hspace{-.1in}
$\beta_1$
\hspace{-.1in} & \hspace{-.1in}
$\beta_2$ 
\hspace{-.1in} & \hspace{-.1in}
$\beta_3$ 
\end{tabular}
  \caption{\textbf{$t$ model: } Scaled RMSE, for PX: EMM and standard probit regression, when generating from the PX model modified to have latent errors with heavier-tailed distribution. } 
    \label{fig:tsim_appx}
  \end{figure}

\section{Analysis of political books network}
\label{sec:app_data}
In this section, we present additional predictive results and verify the efficacy of an approximation made by the EMM algorithm when analyzing the political books network data set. 

\subsection{Prediction performance using ROC AUC}
In Section~\ref{sec:data}, we use area under the precision-recall curve to evaluation predictive performance on the political books network data set. 
Figure~\ref{fig:pb_roc} shows the results of the cross validation study, described in Section~\ref{sec:data}, as measured by area under the receiver operating characteristic (ROC AUC). The conclusions are the same as those given in Section~\ref{sec:data}: the PX model appears to account for the inherent correlation in the data with estimation runtimes that are orders of magnitude faster than existing approaches.

\begin{figure}[h]
\centering
\includegraphics[width=.44\textwidth]{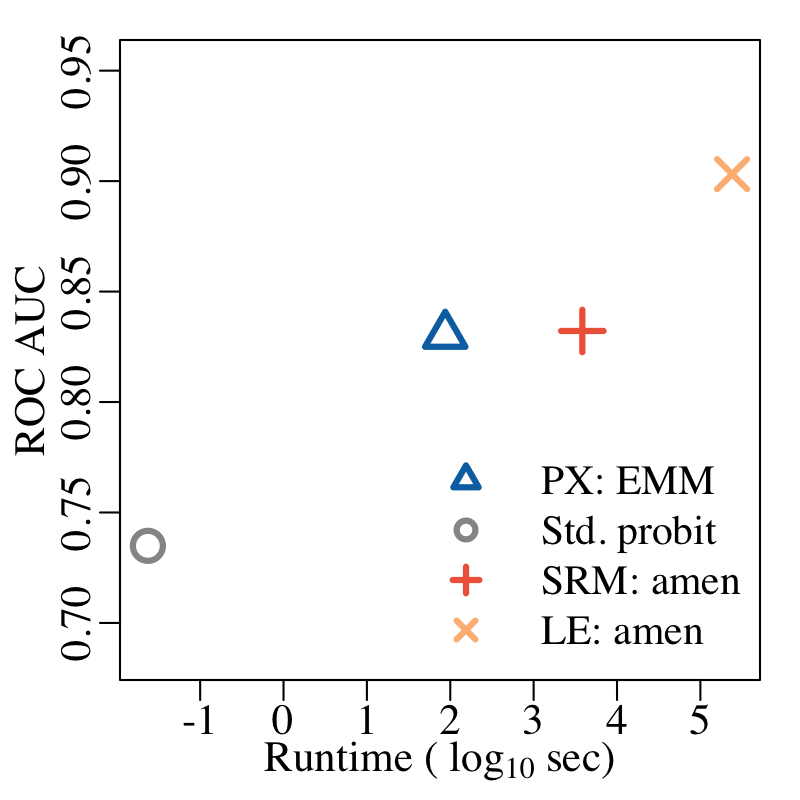}
  \caption{Out-of-sample performance in 10-fold cross validation, as measured by area under the precision-recall curve (ROC AUC), plotted against mean runtime in the cross validation for Krebs' political books network. The estimators are standard probit assuming independent observations (Std. probit), the proposed PX estimator as estimated by EMM (PX: EMM), the social relations model as estimated by \texttt{amen} (SRM: amen), and the latent eigenmodel as estimated by \texttt{amen} (LE: amen). }
    \label{fig:pb_roc}
  \end{figure}

\subsection{Linear approximation in $\rho$ in EMM algorithm}
In Section~\ref{sec:max_rho}, we discuss a series of approximations to the E-step of an EM algorithm to maximize $\ell_\y$ with respect to $\rho$. One approximation is a linearization of the sample average $\frac{1}{|\Theta_2|} \sum_{jk, lm \in \Theta_2} E[\epsilon_{jk} \epsilon_{lm} \, | \, y_{jk}, y_{lm} ]$ with respect to $\rho$. In
Figure~\ref{fig:pb_lin}, we confirm that this approximation is reasonable for the political books network data set. Figure~\ref{fig:pb_lin} shows that the linear approximation to $\frac{1}{|\Theta_2|} \sum_{jk, lm \in \Theta_2} E[\epsilon_{jk} \epsilon_{lm} \, | \, y_{jk}, y_{lm} ]$ (dashed blue line), as described in detail in Section~\ref{sec:rho_linear_appx}, agrees  well with the true average of the pairwise expectations (solid orange line).

  \begin{figure}[h]
\centering
\includegraphics[width=.44\textwidth]{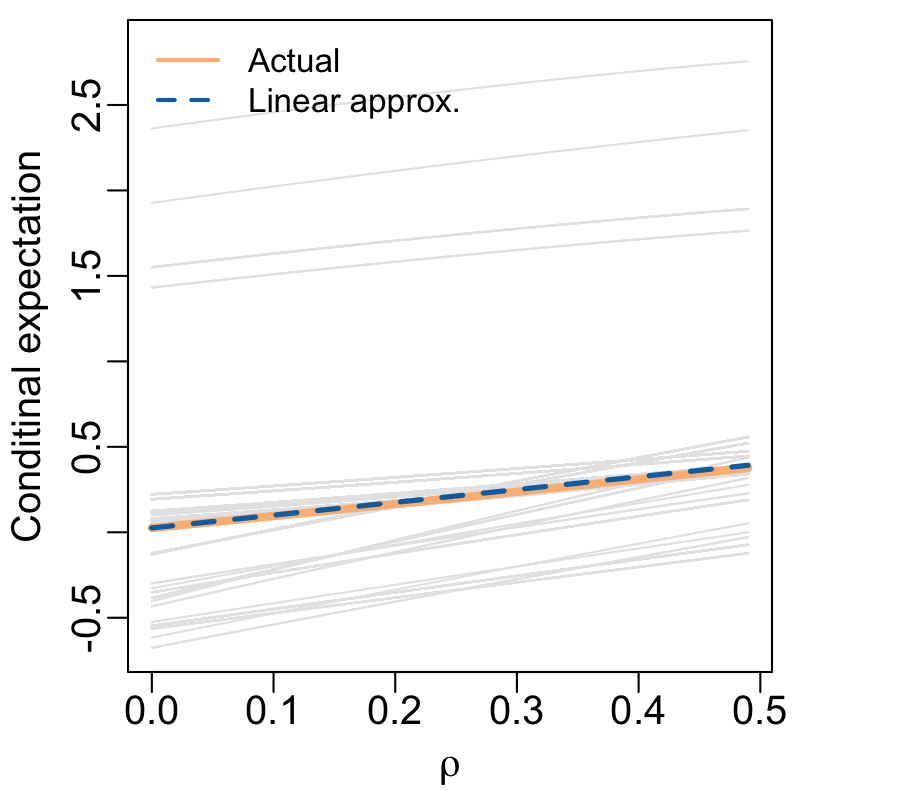}
  \caption{The average of all pairwise expectations $\frac{1}{|\Theta_2|} \sum_{jk, lm \in \Theta_2} E[\epsilon_{jk} \epsilon_{lm} \, | \, y_{jk}, y_{lm} ]$ is shown in orange, and the linear approximation to this average, described in Section~\ref{sec:estimation}, is shown in dashed blue.
  In addition, pairwise conditional expectations  $E[\epsilon_{jk} \epsilon_{lm} \, | \, y_{jk}, y_{lm} ]$ are shown in light gray, for a random subset of 500 relation pairs $(jk, lm) \in \Theta_2$.   }  
    \label{fig:pb_lin}
  \end{figure}

\end{document}